
\documentclass{article}
\usepackage{amssymb}
\usepackage{mathrsfs}
\usepackage[centertags]{amsmath}
\usepackage{amsthm}
\usepackage{graphics}
\usepackage{epsfig}

\evensidemargin=-0.1cm \oddsidemargin=-0.1cm \topmargin=-1.5cm
\textwidth=16cm \textheight=24.5cm

\hfuzz2pt \theoremstyle{plain}
\newtheorem{thm}{Theorem}[section]

\newtheorem{coro}[thm]{Corollary}
\newtheorem{lem}[thm]{Lemma}
\newtheorem{prop}[thm]{Proposition}
\theoremstyle{definition}

\newtheorem{problem}{Problem}[section]
\theoremstyle{remark}
\numberwithin{equation}{section}
\renewcommand{\theequation}{\thesection.\arabic{equation}}
\setlength\arraycolsep{3pt}
\def\R{\mathbb{R}}

\DeclareMathOperator*{\var}{\mathbf{Var}}
\DeclareMathOperator*{\mini}{\mathrm{Minimize}}
\DeclareMathOperator*{\dd}{\mathrm{d\!}}

\DeclareMathOperator{\defi}{\stackrel{\mathrm{def}}{=\!\!=}}

\DeclareMathOperator*{\BE}{\mathbf{E}}
\typeout{TCILATEX Macros for Scientific Word 2.5 <22 Dec 95>.}
\typeout{NOTICE:  This macro file is NOT proprietary and may be 
freely copied and distributed.}
\makeatletter
%
\newcount\@hour\newcount\@minute\chardef\@x10\chardef\@xv60
\def\tcitime{
\def\@time{%
  \@minute\time\@hour\@minute\divide\@hour\@xv
  \ifnum\@hour<\@x 0\fi\the\@hour:%
  \multiply\@hour\@xv\advance\@minute-\@hour
  \ifnum\@minute<\@x 0\fi\the\@minute
  }}%

\@ifundefined{hyperref}{}{}

\@ifundefined{qExtProgCall}{\def\qExtProgCall#1#2#3#4#5#6{\relax}}{}
%
%
%
%
\def\QCTOpt[#1]#2{%
  \def\QCTOptB{#1}
  \def\QCTOptA{#2}
}
\def\QCTNOpt#1{%
  \def\QCTOptA{#1}
  \let\QCTOptB\empty
}
\def\Qct{%
  \@ifnextchar[{%
    \QCTOpt}{\QCTNOpt}
}
\def\QCBOpt[#1]#2{%
  \def\QCBOptB{#1}
  \def\QCBOptA{#2}
}
\def\QCBNOpt#1{%
  \def\QCBOptA{#1}
  \let\QCBOptB\empty
}
\def\Qcb{%
  \@ifnextchar[{%
    \QCBOpt}{\QCBNOpt}
}
\def\PrepCapArgs{%
  \ifx\QCBOptA\empty
    \ifx\QCTOptA\empty
      {}%
    \else
      \ifx\QCTOptB\empty
        {\QCTOptA}%
      \else
        [\QCTOptB]{\QCTOptA}%
      \fi
    \fi
  \else
    \ifx\QCBOptA\empty
      {}%
    \else
      \ifx\QCBOptB\empty
        {\QCBOptA}%
      \else
        [\QCBOptB]{\QCBOptA}%
      \fi
    \fi
  \fi
}
\newcount\GRAPHICSTYPE
\GRAPHICSTYPE=\z@
\def\GRAPHICSPS#1{%
 \ifcase\GRAPHICSTYPE
   \special{ps: #1}%
 \or
   \special{language "PS", include "#1"}%
 \fi
}%
%
%
%
\def\graffile#1#2#3#4{%
    \leavevmode
    \raise -#4 \BOXTHEFRAME{%
        \hbox to #2{\raise #3\hbox to #2{\null #1\hfil}}}%
}%
%
\def\draftbox#1#2#3#4{%
 \leavevmode\raise -#4 \hbox{%
  \frame{\rlap{\protect\tiny #1}\hbox to #2%
   {\vrule height#3 width\z@ depth\z@\hfil}%
  }%
 }%
}%
\newcount\draft
\draft=\z@

\newif\ifwasdraft
\wasdraftfalse

\def\GRAPHIC#1#2#3#4#5{%
 \ifnum\draft=\@ne\draftbox{#2}{#3}{#4}{#5}%
  \else\graffile{#1}{#3}{#4}{#5}%
  \fi
 }%
\def\addtoLaTeXparams#1{%
    \edef\LaTeXparams{\LaTeXparams #1}}%
%

\newif\ifBoxFrame \BoxFramefalse
\newif\ifOverFrame \OverFramefalse
\newif\ifUnderFrame \UnderFramefalse

\def\BOXTHEFRAME#1{%
   \hbox{%
      \ifBoxFrame
         \frame{#1}%
      \else
         {#1}%
      \fi
   }%
}

\def\doFRAMEparams#1{\BoxFramefalse\OverFramefalse\UnderFramefalse\readFRAMEparams#1\end}%
\def\readFRAMEparams#1{%
 \ifx#1\end%
  \let\next=\relax
  \else
  \ifx#1i\dispkind=\z@\fi
  \ifx#1d\dispkind=\@ne\fi
  \ifx#1f\dispkind=\tw@\fi
  \ifx#1t\addtoLaTeXparams{t}\fi
  \ifx#1b\addtoLaTeXparams{b}\fi
  \ifx#1p\addtoLaTeXparams{p}\fi
  \ifx#1h\addtoLaTeXparams{h}\fi
  \ifx#1X\BoxFrametrue\fi
  \ifx#1O\OverFrametrue\fi
  \ifx#1U\UnderFrametrue\fi
  \ifx#1w
    \ifnum\draft=1\wasdrafttrue\else\wasdraftfalse\fi
    \draft=\@ne
  \fi
  \let\next=\readFRAMEparams
  \fi
 \next
 }%
%

\def\IFRAME#1#2#3#4#5#6{%
      \bgroup
      \let\QCTOptA\empty
      \let\QCTOptB\empty
      \let\QCBOptA\empty
      \let\QCBOptB\empty
      #6%
      \parindent=0pt%
      \leftskip=0pt
      \rightskip=0pt
      \setbox0 = \hbox{\QCBOptA}%
      \@tempdima = #1\relax
      \ifOverFrame
          \typeout{This is not implemented yet}%
          \show\HELP
      \else
         \ifdim\wd0>\@tempdima
            \advance\@tempdima by \@tempdima
            \ifdim\wd0 >\@tempdima
               \textwidth=\@tempdima
               \setbox1 =\vbox{%
                  \noindent\hbox to \@tempdima{\hfill\GRAPHIC{#5}{#4}{#1}{#2}{#3}\hfill}\\%
                  \noindent\hbox to \@tempdima{\parbox[b]{\@tempdima}{\QCBOptA}}%
               }%
               \wd1=\@tempdima
            \else
               \textwidth=\wd0
               \setbox1 =\vbox{%
                 \noindent\hbox to \wd0{\hfill\GRAPHIC{#5}{#4}{#1}{#2}{#3}\hfill}\\%
                 \noindent\hbox{\QCBOptA}%
               }%
               \wd1=\wd0
            \fi
         \else
            \ifdim\wd0>0pt
              \hsize=\@tempdima
              \setbox1 =\vbox{%
                \unskip\GRAPHIC{#5}{#4}{#1}{#2}{0pt}%
                \break
                \unskip\hbox to \@tempdima{\hfill \QCBOptA\hfill}%
              }%
              \wd1=\@tempdima
           \else
              \hsize=\@tempdima
              \setbox1 =\vbox{%
                \unskip\GRAPHIC{#5}{#4}{#1}{#2}{0pt}%
              }%
              \wd1=\@tempdima
           \fi
         \fi
         \@tempdimb=\ht1
         \advance\@tempdimb by \dp1
         \advance\@tempdimb by -#2%
         \advance\@tempdimb by #3%
         \leavevmode
         \raise -\@tempdimb \hbox{\box1}%
      \fi
      \egroup%
}%
%
\def\DFRAME#1#2#3#4#5{%
 \begin{center}
     \let\QCTOptA\empty
     \let\QCTOptB\empty
     \let\QCBOptA\empty
     \let\QCBOptB\empty
     \ifOverFrame 
        #5\QCTOptA\par
     \fi
     \GRAPHIC{#4}{#3}{#1}{#2}{\z@}
     \ifUnderFrame 
        \nobreak\par #5\QCBOptA
     \fi
 \end{center}%
 }%
%
\def\FFRAME#1#2#3#4#5#6#7{%
 \begin{figure}[#1]%
  \let\QCTOptA\empty
  \let\QCTOptB\empty
  \let\QCBOptA\empty
  \let\QCBOptB\empty
  \ifOverFrame
    #4
    \ifx\QCTOptA\empty
    \else
      \ifx\QCTOptB\empty
        \caption{\QCTOptA}%
      \else
        \caption[\QCTOptB]{\QCTOptA}%
      \fi
    \fi
    \ifUnderFrame\else
      \label{#5}%
    \fi
  \else
    \UnderFrametrue%
  \fi
  \begin{center}\GRAPHIC{#7}{#6}{#2}{#3}{\z@}\end{center}%
  \ifUnderFrame
    #4
    \ifx\QCBOptA\empty
      \caption{}%
    \else
      \ifx\QCBOptB\empty
        \caption{\QCBOptA}%
      \else
        \caption[\QCBOptB]{\QCBOptA}%
      \fi
    \fi
    \label{#5}%
  \fi
  \end{figure}%
 }%
%
%
%
%
%
\newcount\dispkind%

\def\makeactives{
  \catcode`\"=\active
  \catcode`\;=\active
  \catcode`\:=\active
  \catcode`\'=\active
  \catcode`\~=\active
}
\bgroup
   \makeactives
   \gdef\activesoff{%
      \def"{\string"}
      \def;{\string;}
      \def:{\string:}
      \def'{\string'}
      \def~{\string~}
    }
\egroup

\def\FRAME#1#2#3#4#5#6#7#8{%
 \bgroup
 \@ifundefined{bbl@deactivate}{}{\activesoff}
 \ifnum\draft=\@ne
   \wasdrafttrue
 \else
   \wasdraftfalse%
 \fi
 \def\LaTeXparams{}%
 \dispkind=\z@
 \def\LaTeXparams{}%
 \doFRAMEparams{#1}%
 \ifnum\dispkind=\z@\IFRAME{#2}{#3}{#4}{#7}{#8}{#5}\else
  \ifnum\dispkind=\@ne\DFRAME{#2}{#3}{#7}{#8}{#5}\else
   \ifnum\dispkind=\tw@
    \edef\@tempa{\noexpand\FFRAME{\LaTeXparams}}%
    \@tempa{#2}{#3}{#5}{#6}{#7}{#8}%
    \fi
   \fi
  \fi
  \ifwasdraft\draft=1\else\draft=0\fi{}%
  \egroup
 }%
%

\def\TEXUX#1{"texux"}

%
%
%
%
%
%
%
%
%

%
\long\def\QQQ#1#2{%
     \long\expandafter\def\csname#1\endcsname{#2}}%
\@ifundefined{QTP}{\def\QTP#1{}}{}
\@ifundefined{QEXCLUDE}{\def\QEXCLUDE#1{}}{}
\@ifundefined{Qlb}{}{}
\@ifundefined{Qlt}{}{}
\long\def\QQA#1#2{}%
\def\QTR#1#2{{\csname#1\endcsname #2}}
\def\EXPAND#1[#2]#3{}%
\def\NOEXPAND#1[#2]#3{}%
\def\LaTeXparent#1{}%
\def\ChildStyles#1{}%
\def\ChildDefaults#1{}%
\def\QTagDef#1#2#3{}%
%
\@ifundefined{StyleEditBeginDoc}{}{}
%
\def\QQfnmark#1{\footnotemark}

%
\def\makeatletter\input gnuindex.sty\makeatother\makeindex{\makeatletter\input gnuindex.sty\makeatother\makeindex}%
\@ifundefined{INDEX}{\def\INDEX#1#2{}{}}{}%
\@ifundefined{SUBINDEX}{\def\SUBINDEX#1#2#3{}{}{}}{}%
\@ifundefined{initial}%
   {\def\initial#1{\bigbreak{\raggedright\large\bf #1}\kern 2\p@\penalty3000}}%
   {}%
\@ifundefined{entry}{}{}%
\@ifundefined{primary}{}{}%
\@ifundefined{secondary}{}{}%
\@ifundefined{ZZZ}{}{\makeatletter\input gnuindex.sty\makeatother\makeindex\makeatletter}%
%
\@ifundefined{abstract}{%
 \def\abstract{%
  \if@twocolumn
   \section*{Abstract (Not appropriate in this style!)}%
   \else \small 
   \begin{center}{\bf Abstract\vspace{-.5em}\vspace{\z@}}\end{center}%
   \quotation 
   \fi
  }%
 }{%
 }%
\@ifundefined{endabstract}{\def\endabstract
  {\if@twocolumn\else\endquotation\fi}}{}%
\@ifundefined{maketitle}{\def\maketitle#1{}}{}%
\@ifundefined{affiliation}{\def\affiliation#1{}}{}%
\@ifundefined{proof}{}{}%
\@ifundefined{endproof}{}{}%
\@ifundefined{newfield}{\def\newfield#1#2{}}{}%
\@ifundefined{chapter}{\def\chapter#1{\par(Chapter head:)#1\par }%
 \newcount\c@chapter}{}%
\@ifundefined{part}{\def\part#1{\par(Part head:)#1\par }}{}%
\@ifundefined{section}{\def\section#1{\par(Section head:)#1\par }}{}%
\@ifundefined{subsection}{\def\subsection#1%
 {\par(Subsection head:)#1\par }}{}%
\@ifundefined{subsubsection}{\def\subsubsection#1%
 {\par(Subsubsection head:)#1\par }}{}%
\@ifundefined{paragraph}{\def\paragraph#1%
 {\par(Subsubsubsection head:)#1\par }}{}%
\@ifundefined{subparagraph}{\def\subparagraph#1%
 {\par(Subsubsubsubsection head:)#1\par }}{}%
\@ifundefined{therefore}{}{}%
\@ifundefined{backepsilon}{}{}%
\@ifundefined{yen}{}{}%
\@ifundefined{registered}{%
   \def\registered{\relax\ifmmode{}\r@gistered
                    \else$\m@th\r@gistered$\fi}%
 \def\r@gistered{^{\ooalign
  {\hfil\raise.07ex\hbox{$\scriptstyle\rm\text{R}$}\hfil\crcr
  \mathhexbox20D}}}}{}%
\@ifundefined{Eth}{}{}%
\@ifundefined{eth}{}{}%
\@ifundefined{Thorn}{}{}%
\@ifundefined{thorn}{}{}%
%
\@ifundefined{degree}{}{}%
%
\newdimen\theight
\def\Column{%
 \vadjust{\setbox\z@=\hbox{\scriptsize\quad\quad tcol}%
  \theight=\ht\z@\advance\theight by \dp\z@\advance\theight by \lineskip
  \kern -\theight \vbox to \theight{%
   \rightline{\rlap{\box\z@}}%
   \vss
   }%
  }%
 }%
\def\qed{%
 \ifhmode\unskip\nobreak\fi\ifmmode\ifinner\else\hskip5\p@\fi\fi
 \hbox{\hskip5\p@\vrule width4\p@ height6\p@ depth1.5\p@\hskip\p@}%
 }%
\def\miss{\hbox{\vrule height2\p@ width 2\p@ depth\z@}}%
%
%
\def\tcol#1{{\baselineskip=6\p@ \vcenter{#1}} \Column}  %
%
%
%
%
%

\def\newfmtname{LaTeX2e}
\def\chkcompat{%
   \if@compatibility
   \else
     \usepackage{latexsym}
   \fi
}

\ifx\fmtname\newfmtname
  \DeclareOldFontCommand{\rm}{\normalfont\rmfamily}{\mathrm}
  \DeclareOldFontCommand{\sf}{\normalfont\sffamily}{\mathsf}
  \DeclareOldFontCommand{\tt}{\normalfont\ttfamily}{\mathtt}
  \DeclareOldFontCommand{\bf}{\normalfont\bfseries}{\mathbf}
  \DeclareOldFontCommand{\it}{\normalfont\itshape}{\mathit}
  \DeclareOldFontCommand{\sl}{\normalfont\slshape}{\@nomath\sl}
  \DeclareOldFontCommand{\sc}{\normalfont\scshape}{\@nomath\sc}
  \chkcompat
\fi

%

\def\alpha{\Greekmath 010B }%
\def\beta{\Greekmath 010C }%
\def\gamma{\Greekmath 010D }%
\def\delta{\Greekmath 010E }%
\def\epsilon{\Greekmath 010F }%
\def\zeta{\Greekmath 0110 }%
\def\eta{\Greekmath 0111 }%
\def\theta{\Greekmath 0112 }%
\def\iota{\Greekmath 0113 }%
\def\kappa{\Greekmath 0114 }%
\def\lambda{\Greekmath 0115 }%
\def\mu{\Greekmath 0116 }%
\def\nu{\Greekmath 0117 }%
\def\xi{\Greekmath 0118 }%
\def\pi{\Greekmath 0119 }%
\def\rho{\Greekmath 011A }%
\def\sigma{\Greekmath 011B }%
\def\tau{\Greekmath 011C }%
\def\upsilon{\Greekmath 011D }%
\def\phi{\Greekmath 011E }%
\def\chi{\Greekmath 011F }%
\def\psi{\Greekmath 0120 }%
\def\omega{\Greekmath 0121 }%
\def\varepsilon{\Greekmath 0122 }%
\def\vartheta{\Greekmath 0123 }%
\def\varpi{\Greekmath 0124 }%
\def\varrho{\Greekmath 0125 }%
\def\varsigma{\Greekmath 0126 }%
\def\varphi{\Greekmath 0127 }%

\def\nabla{\Greekmath 0272 }
\def\FindBoldGroup{%
   {\setbox0=\hbox{$\mathbf{x\global\edef\theboldgroup{\the\mathgroup}}$}}%
}

\def\Greekmath#1#2#3#4{%
    \if@compatibility
        \ifnum\mathgroup=\symbold
           \mathchoice{\mbox{\boldmath$\displaystyle\mathchar"#1#2#3#4$}}%
                      {\mbox{\boldmath$\textstyle\mathchar"#1#2#3#4$}}%
                      {\mbox{\boldmath$\scriptstyle\mathchar"#1#2#3#4$}}%
                      {\mbox{\boldmath$\scriptscriptstyle\mathchar"#1#2#3#4$}}%
        \else
           \mathchar"#1#2#3#4%
        \fi 
    \else 
        \FindBoldGroup
        \ifnum\mathgroup=\theboldgroup 
           \mathchoice{\mbox{\boldmath$\displaystyle\mathchar"#1#2#3#4$}}%
                      {\mbox{\boldmath$\textstyle\mathchar"#1#2#3#4$}}%
                      {\mbox{\boldmath$\scriptstyle\mathchar"#1#2#3#4$}}%
                      {\mbox{\boldmath$\scriptscriptstyle\mathchar"#1#2#3#4$}}%
        \else
           \mathchar"#1#2#3#4%
        \fi     	    
	  \fi}

\newif\ifGreekBold  \GreekBoldfalse
\let\SAVEPBF=\pbf
\def\pbf{\GreekBoldtrue\SAVEPBF}%

\@ifundefined{theorem}{}{}
\@ifundefined{lemma}{}{}
\@ifundefined{corollary}{}{}
\@ifundefined{conjecture}{}{}
\@ifundefined{proposition}{}{}
\@ifundefined{axiom}{}{}
\@ifundefined{remark}{}{}
\@ifundefined{example}{}{}
\@ifundefined{exercise}{}{}
\@ifundefined{definition}{}{}

\@ifundefined{mathletters}{%
  \newcounter{equationnumber}  
  \def\mathletters{%
     \addtocounter{equation}{1}
     \edef\@currentlabel{\theequation}%
     \setcounter{equationnumber}{\c@equation}
     \setcounter{equation}{0}%
     \edef\theequation{\@currentlabel\noexpand\alph{equation}}%
  }
  
}{}

\@ifundefined{BibTeX}{%
    \def\BibTeX{{\rm B\kern-.05em{\sc i\kern-.025em b}\kern-.08em
                 T\kern-.1667em\lower.7ex\hbox{E}\kern-.125emX}}}{}%
\@ifundefined{AmS}%
    {\def\AmS{{\protect\usefont{OMS}{cmsy}{m}{n}%
                A\kern-.1667em\lower.5ex\hbox{M}\kern-.125emS}}}{}%
\@ifundefined{AmSTeX}{}{}%
%

%
%
\ifx\ds@amstex\relax
   \message{amstex already loaded}\makeatother 
\else
   \@ifpackageloaded{amstex}%
      {\message{amstex already loaded}\makeatother }
      {}
   \@ifpackageloaded{amsgen}%
      {\message{amsgen already loaded}\makeatother }
      {}
\fi
%
%
%
%
\let\DOTSI\relax
\def\RIfM@{\relax\ifmmode}%
\def\FN@{\futurelet\next}%
\newcount\intno@
\def\iint{\DOTSI\intno@\tw@\FN@\ints@}%
\def\iiint{\DOTSI\intno@\thr@@\FN@\ints@}%
\def\iiiint{\DOTSI\intno@4 \FN@\ints@}%
\def\idotsint{\DOTSI\intno@\z@\FN@\ints@}%
\def\ints@{\findlimits@\ints@@}%
\newif\iflimtoken@
\newif\iflimits@
\def\findlimits@{\limtoken@true\ifx\next\limits\limits@true
 \else\ifx\next\nolimits\limits@false\else
 \limtoken@false\ifx\ilimits@\nolimits\limits@false\else
 \ifinner\limits@false\else\limits@true\fi\fi\fi\fi}%
\def\multint@{\int\ifnum\intno@=\z@\intdots@                          
 \else\intkern@\fi                                                    
 \ifnum\intno@>\tw@\int\intkern@\fi                                   
 \ifnum\intno@>\thr@@\int\intkern@\fi                                 
 \int}
\def\multintlimits@{\intop\ifnum\intno@=\z@\intdots@\else\intkern@\fi
 \ifnum\intno@>\tw@\intop\intkern@\fi
 \ifnum\intno@>\thr@@\intop\intkern@\fi\intop}%
\def\intic@{%
    \mathchoice{\hskip.5em}{\hskip.4em}{\hskip.4em}{\hskip.4em}}%
\def\negintic@{\mathchoice
 {\hskip-.5em}{\hskip-.4em}{\hskip-.4em}{\hskip-.4em}}%
\def\ints@@{\iflimtoken@                                              
 \def\ints@@@{\iflimits@\negintic@
   \mathop{\intic@\multintlimits@}\limits                             
  \else\multint@\nolimits\fi                                          
  \eat@}
 \else                                                                
 \def\ints@@@{\iflimits@\negintic@
  \mathop{\intic@\multintlimits@}\limits\else
  \multint@\nolimits\fi}\fi\ints@@@}%
\def\intkern@{\mathchoice{\!\!\!}{\!\!}{\!\!}{\!\!}}%
\def\plaincdots@{\mathinner{\cdotp\cdotp\cdotp}}%
\def\intdots@{\mathchoice{\plaincdots@}%
 {{\cdotp}\mkern1.5mu{\cdotp}\mkern1.5mu{\cdotp}}%
 {{\cdotp}\mkern1mu{\cdotp}\mkern1mu{\cdotp}}%
 {{\cdotp}\mkern1mu{\cdotp}\mkern1mu{\cdotp}}}%
%
%
%
\def\RIfM@{\relax\protect\ifmmode}
\def\text{\RIfM@\expandafter\text@\else\expandafter\mbox\fi}
\let\nfss@text\text
\def\text@#1{\mathchoice
   {\textdef@\displaystyle\f@size{#1}}%
   {\textdef@\textstyle\tf@size{\firstchoice@false #1}}%
   {\textdef@\textstyle\sf@size{\firstchoice@false #1}}%
   {\textdef@\textstyle \ssf@size{\firstchoice@false #1}}%
   \glb@settings}

\def\textdef@#1#2#3{\hbox{{%
                    \everymath{#1}%
                    \let\f@size#2\selectfont
                    #3}}}
\newif\iffirstchoice@
\firstchoice@true
%
%
%
%
%
\def\Let@{\relax\iffalse{\fi\let\\=\cr\iffalse}\fi}%
\def\vspace@{\def\vspace##1{\crcr\noalign{\vskip##1\relax}}}%
\def\multilimits@{\bgroup\vspace@\Let@
 \baselineskip\fontdimen10 \scriptfont\tw@
 \advance\baselineskip\fontdimen12 \scriptfont\tw@
 \lineskip\thr@@\fontdimen8 \scriptfont\thr@@
 \lineskiplimit\lineskip
 \vbox\bgroup\ialign\bgroup\hfil$\m@th\scriptstyle{##}$\hfil\crcr}%
\def\Sb{_\multilimits@}%
\def\endSb{\crcr\egroup\egroup\egroup}%
\def\Sp{^\multilimits@}%

%
%
%
\newdimen\ex@
\ex@.2326ex
\def\rightarrowfill@#1{$#1\m@th\mathord-\mkern-6mu\cleaders
 \hbox{$#1\mkern-2mu\mathord-\mkern-2mu$}\hfill
 \mkern-6mu\mathord\rightarrow$}%
\def\leftarrowfill@#1{$#1\m@th\mathord\leftarrow\mkern-6mu\cleaders
 \hbox{$#1\mkern-2mu\mathord-\mkern-2mu$}\hfill\mkern-6mu\mathord-$}%
\def\leftrightarrowfill@#1{$#1\m@th\mathord\leftarrow
\mkern-6mu\cleaders
 \hbox{$#1\mkern-2mu\mathord-\mkern-2mu$}\hfill
 \mkern-6mu\mathord\rightarrow$}%
\def\overrightarrow{\mathpalette\overrightarrow@}%
\def\overrightarrow@#1#2{\vbox{\ialign{##\crcr\rightarrowfill@#1\crcr
 \noalign{\kern-\ex@\nointerlineskip}$\m@th\hfil#1#2\hfil$\crcr}}}%

\def\overleftarrow{\mathpalette\overleftarrow@}%
\def\overleftarrow@#1#2{\vbox{\ialign{##\crcr\leftarrowfill@#1\crcr
 \noalign{\kern-\ex@\nointerlineskip}$\m@th\hfil#1#2\hfil$\crcr}}}%
\def\overleftrightarrow{\mathpalette\overleftrightarrow@}%
\def\overleftrightarrow@#1#2{\vbox{\ialign{##\crcr
   \leftrightarrowfill@#1\crcr
 \noalign{\kern-\ex@\nointerlineskip}$\m@th\hfil#1#2\hfil$\crcr}}}%
\def\underrightarrow{\mathpalette\underrightarrow@}%
\def\underrightarrow@#1#2{\vtop{\ialign{##\crcr$\m@th\hfil#1#2\hfil
  $\crcr\noalign{\nointerlineskip}\rightarrowfill@#1\crcr}}}%

\def\underleftarrow{\mathpalette\underleftarrow@}%
\def\underleftarrow@#1#2{\vtop{\ialign{##\crcr$\m@th\hfil#1#2\hfil
  $\crcr\noalign{\nointerlineskip}\leftarrowfill@#1\crcr}}}%
\def\underleftrightarrow{\mathpalette\underleftrightarrow@}%
\def\underleftrightarrow@#1#2{\vtop{\ialign{##\crcr$\m@th
  \hfil#1#2\hfil$\crcr
 \noalign{\nointerlineskip}\leftrightarrowfill@#1\crcr}}}%


\def\qopnamewl@#1{\mathop{\operator@font#1}\nlimits@}
\let\nlimits@\displaylimits
\def\setboxz@h{\setbox\z@\hbox}

\def\varlim@#1#2{\mathop{\vtop{\ialign{##\crcr
 \hfil$#1\m@th\operator@font lim$\hfil\crcr
 \noalign{\nointerlineskip}#2#1\crcr
 \noalign{\nointerlineskip\kern-\ex@}\crcr}}}}

 \def\rightarrowfill@#1{\m@th\setboxz@h{$#1-$}\ht\z@\z@
  $#1\copy\z@\mkern-6mu\cleaders
  \hbox{$#1\mkern-2mu\box\z@\mkern-2mu$}\hfill
  \mkern-6mu\mathord\rightarrow$}
\def\leftarrowfill@#1{\m@th\setboxz@h{$#1-$}\ht\z@\z@
  $#1\mathord\leftarrow\mkern-6mu\cleaders
  \hbox{$#1\mkern-2mu\copy\z@\mkern-2mu$}\hfill
  \mkern-6mu\box\z@$}

\def\projlim{\qopnamewl@{proj\,lim}}
\def\injlim{\qopnamewl@{inj\,lim}}
\def\varinjlim{\mathpalette\varlim@\rightarrowfill@}
\def\varprojlim{\mathpalette\varlim@\leftarrowfill@}
\def\varliminf{\mathpalette\varliminf@{}}
\def\varliminf@#1{\mathop{\underline{\vrule\@depth.2\ex@\@width\z@
   \hbox{$#1\m@th\operator@font lim$}}}}
\def\varlimsup{\mathpalette\varlimsup@{}}
\def\varlimsup@#1{\mathop{\overline
  {\hbox{$#1\m@th\operator@font lim$}}}}

%
%
\def\tfrac#1#2{{\textstyle {#1 \over #2}}}%
\def\dfrac#1#2{{\displaystyle {#1 \over #2}}}%
%
%
%
%
%
%
%
%
%
%
%
%
%
%
%
%
%
%
%

%
%
%
%
%
%
%
%
%
%
%
%
%
%
%
%
%
%
%
%
%
%

%
%
%
%
%
%
%
%
%
%
%
%
%
%
%
%
%
%
%
%
%
%
%
%
\begingroup \catcode `|=0 \catcode `[= 1
\catcode`]=2 \catcode `\{=12 \catcode `\}=12
\catcode`\\=12 
|gdef|@alignverbatim#1\end{align}[#1|end[align]]
|gdef|@salignverbatim#1\end{align*}[#1|end[align*]]

|gdef|@alignatverbatim#1\end{alignat}[#1|end[alignat]]
|gdef|@salignatverbatim#1\end{alignat*}[#1|end[alignat*]]

|gdef|@xalignatverbatim#1\end{xalignat}[#1|end[xalignat]]
|gdef|@sxalignatverbatim#1\end{xalignat*}[#1|end[xalignat*]]

|gdef|@gatherverbatim#1\end{gather}[#1|end[gather]]
|gdef|@sgatherverbatim#1\end{gather*}[#1|end[gather*]]

|gdef|@gatherverbatim#1\end{gather}[#1|end[gather]]
|gdef|@sgatherverbatim#1\end{gather*}[#1|end[gather*]]

|gdef|@multilineverbatim#1\end{multiline}[#1|end[multiline]]
|gdef|@smultilineverbatim#1\end{multiline*}[#1|end[multiline*]]

|gdef|@arraxverbatim#1\end{arrax}[#1|end[arrax]]
|gdef|@sarraxverbatim#1\end{arrax*}[#1|end[arrax*]]

|gdef|@tabulaxverbatim#1\end{tabulax}[#1|end[tabulax]]
|gdef|@stabulaxverbatim#1\end{tabulax*}[#1|end[tabulax*]]

|endgroup

\def\align{\@verbatim \frenchspacing\@vobeyspaces \@alignverbatim
You are using the "align" environment in a style in which it is not defined.}

\@namedef{align*}{\@verbatim\@salignverbatim
You are using the "align*" environment in a style in which it is not defined.}
\expandafter\let\csname endalign*\endcsname =\endtrivlist

\def\alignat{\@verbatim \frenchspacing\@vobeyspaces \@alignatverbatim
You are using the "alignat" environment in a style in which it is not defined.}

\@namedef{alignat*}{\@verbatim\@salignatverbatim
You are using the "alignat*" environment in a style in which it is not defined.}
\expandafter\let\csname endalignat*\endcsname =\endtrivlist

\def\xalignat{\@verbatim \frenchspacing\@vobeyspaces \@xalignatverbatim
You are using the "xalignat" environment in a style in which it is not defined.}

\@namedef{xalignat*}{\@verbatim\@sxalignatverbatim
You are using the "xalignat*" environment in a style in which it is not defined.}
\expandafter\let\csname endxalignat*\endcsname =\endtrivlist

\def\gather{\@verbatim \frenchspacing\@vobeyspaces \@gatherverbatim
You are using the "gather" environment in a style in which it is not defined.}

\@namedef{gather*}{\@verbatim\@sgatherverbatim
You are using the "gather*" environment in a style in which it is not defined.}
\expandafter\let\csname endgather*\endcsname =\endtrivlist

\def\multiline{\@verbatim \frenchspacing\@vobeyspaces \@multilineverbatim
You are using the "multiline" environment in a style in which it is not defined.}

\@namedef{multiline*}{\@verbatim\@smultilineverbatim
You are using the "multiline*" environment in a style in which it is not defined.}
\expandafter\let\csname endmultiline*\endcsname =\endtrivlist

\def\arrax{\@verbatim \frenchspacing\@vobeyspaces \@arraxverbatim
You are using a type of "array" construct that is only allowed in AmS-LaTeX.}

\def\tabulax{\@verbatim \frenchspacing\@vobeyspaces \@tabulaxverbatim
You are using a type of "tabular" construct that is only allowed in AmS-LaTeX.}

\@namedef{arrax*}{\@verbatim\@sarraxverbatim
You are using a type of "array*" construct that is only allowed in AmS-LaTeX.}
\expandafter\let\csname endarrax*\endcsname =\endtrivlist

\@namedef{tabulax*}{\@verbatim\@stabulaxverbatim
You are using a type of "tabular*" construct that is only allowed in AmS-LaTeX.}
\expandafter\let\csname endtabulax*\endcsname =\endtrivlist


\def\@@eqncr{\let\@tempa\relax
    \ifcase\@eqcnt \def\@tempa{& & &}\or \def\@tempa{& &}%
      \else \def\@tempa{&}\fi
     \@tempa
     \if@eqnsw
        \iftag@
           \@taggnum
        \else
           \@eqnnum\stepcounter{equation}%
        \fi
     \fi
     \global\tag@false
     \global\@eqnswtrue
     \global\@eqcnt\z@\cr}

 \def\endequation{%
     \ifmmode\ifinner 
      \iftag@
        \addtocounter{equation}{-1} 
        $\hfil
           \displaywidth\linewidth\@taggnum\egroup \endtrivlist
        \global\tag@false
        \global\@ignoretrue   
      \else
        $\hfil
           \displaywidth\linewidth\@eqnnum\egroup \endtrivlist
        \global\tag@false
        \global\@ignoretrue 
      \fi
     \else   
      \iftag@
        \addtocounter{equation}{-1} 
        \eqno \hbox{\@taggnum}
        \global\tag@false%
        $$\global\@ignoretrue
      \else
        \eqno \hbox{\@eqnnum}
        $$\global\@ignoretrue
      \fi
     \fi\fi
 } 

 \newif\iftag@ \tag@false
 
 \def\tag{\@ifnextchar*{\@tagstar}{\@tag}}
 \def\@tag#1{%
     \global\tag@true
     \global\def\@taggnum{(#1)}}
 \def\@tagstar*#1{%
     \global\tag@true
     \global\def\@taggnum{#1}%
}


\makeatother

\begin{document}

\title{Continuous-Time Markowitz's Model with
Transaction Costs}
\date{}
\author{Min Dai\thanks{Department of Mathematics, National University of Singapore (NUS), Singapore. Min Dai is also an affiliated member of Risk Management Institute of NUS and is partially supported by Singapore MOE AcRF grant (No. R-146-000-096-112) and
NUS RMI grant (No. R-146-000-117-720/646).} \and
Zuo Quan Xu\thanks{Mathematical Institute and Nomura Centre for
Mathematical Finance, University of Oxford, 24--29 St Giles, Oxford
OX1 3LB.} \and Xun Yu Zhou\thanks{Nomura Centre for Mathematical
Finance and Oxford--Man Institute of Quantitative Finance,
University of Oxford, 24--29 St Giles, Oxford OX1 3LB, and
Department of Systems Engineering and Engineering Management, The
Chinese University of Hong Kong, Shatin, Hong Kong.
Email:$<$zhouxy@maths.ox.ac.uk$>$.}} \maketitle

\begin{abstract}
A continuous-time Markowitz's mean-variance portfolio selection
problem is studied in a market with one stock, one bond, and
proportional transaction costs. This is a singular stochastic
control problem, inherently in a finite time horizon. With a series
of transformations, the problem is turned into a so-called double
obstacle problem, a well studied problem in physics and partial
differential equation literature, featuring two time-varying free
boundaries. The two boundaries, which define the buy, sell, and
no-trade regions, are proved to be smooth in time. This in turn
characterizes the optimal strategy, via a Skorokhod problem, as one
that tries to keep a certain adjusted bond--stock position within
the no-trade region. Several features of the optimal strategy are
revealed that are remarkably different from its no-transaction-cost
counterpart.
It is shown that there exists a critical length in time, which is dependent on the
stock excess return as well as the transaction fees but {\it independent} of the
investment target and the stock volatility,
so that an expected terminal return may not be achievable if the
planning horizon is shorter than that critical length (while in the
absence of transaction costs any expected return can be reached in
an arbitrary period of time). It is further demonstrated that anyone
following the optimal strategy should not buy the stock beyond the
point when the time to maturity is shorter than the aforementioned
critical length. Moreover, the investor would be less likely to buy
the stock and more likely to sell the stock when the maturity date
is getting closer. These features, while consistent with the widely
accepted investment wisdom, suggest that the planning horizon is an
integral part of the investment opportunities.
\end{abstract}

\paragraph{Key Words.}

{\ continuous time, mean-variance, transaction costs, singular
stochastic control, planning horizon, Lagrange multiplier,
double-obstacle problem, Skorokhod problem}

\section{Introduction{\ }}

Markowitz's (single-period) mean--variance (MV) portfolio selection model [Markowitz (1952)]
marked the start of the
modern quantitative finance theory.
Perverse enough, extensions to the dynamic
-- especially continuous-time -- setting in the asset allocation literature
have been dominated by the expected utility maximization (EUM) models, which take
a considerable departure from the MV model.
While the utility approach was theoretically justified by von Neumann and
Morgenstern (1947), in practice ``few if any investors know their utility functions;
nor do the functions which financial engineers and financial economists find analytically
convenient necessarily represent a particular investor's
attitude towards risk and return'' [Markowitz (2004)].
Meanwhile, there are technical difficulties in studying a dynamic MV
model, primarily that of the incompatibility with the dynamic
programming principle owing to the variance term involved. In other
words, an optimal trading strategy generated initially may no longer
be optimal half way through. However, this time inconsistency only
means that the dynamic programming -- which is a {\it tool} for
solving dynamic optimization problems -- is not applicable (at least
not directly applicable). The dynamic mean--variance problem, while
lacking the time consistency, is economically sound since its
essence is to find a Pareto optimal strategy that strikes a balance
between the return and risk at the end of the investment planning
horizon. Indeed, there are many other problems that are inherently
time inconsistent. For example, a dynamic behavioral portfolio
selection problem is time inconsistent due to the distortions in
probabilities [Jin and Zhou (2008)].  In a broader economical
perspective, Kydland and  Prescott (1977) pointed out as early as in
the 1970s that time consistent policies may be suboptimal for some
economic problems. Mean--variance is exactly one of these problems.
Recently time-inconsistent control problems have attracted
considerable research interest and effort; see, e.g., Basak and
Chabakauri (2008), Bj\"ork (2008), and Ekeland and Lazrak (2007).

Richardson (1989) is probably the earliest paper that studies a
faithful extension of the MV model to the continuous-time setting
(albeit in the context of a single stock with a constant risk-free
rate), followed by Bajeux-Besnainou and Portait (1998). Li and Ng
(2000) developed an embedding technique to cope with the
time inconsistency for a discrete-time MV
model, which was extended by Zhou and Li (2000), along with a
stochastic linear--quadratic control approach, to the
continuous-time case. Further extensions and improvements are
carried out in, among many others, Lim and Zhou (2002), Lim
(2004), Bielecki {\it et al.} (2005), and Xia (2005).

All the existing works on continuous-time MV models have assumed
that there is no transaction cost, leading to results that are
analytically elegant, and sometimes truly surprising [for example,
it is shown in Li and Zhou (2006) that any efficient strategy
realizes its goal -- no matter how high it is -- with a
probability of at least 80\%]. However, elegant they may be,
certain investment behaviors derived from the results simply
contradict the conventional wisdom, which in turn hints that the
models may not have been properly formulated. For instance, the
results dictate that an optimal strategy must trade all the time;
moreover, there must be risky exposures at any time [see Chiu and
Zhou (2007)]. These are certainly not consistent with the common
investment advice. Indeed, the assumption that there is no
transaction cost is flawed, which misleadingly allows an investor
to continuously trade without any penalty.

Portfolio selection subject to transaction costs has been studied
extensively, albeit in the realm of utility maximization.
Mathematically such a problem is a singular stochastic control
problem. Two different types of models must be distinguished: one in
an infinite planning horizon and the other in a finite horizon. See
Magill and Constantinides (1976), Davis and Norman (1990), and
Shreve and Soner (1994) for the former, and Davis, Panas and
Zariphopoulou (1993), Civitanic and Karatzas (1996), and Gennotte
and Jung (1994) for the latter. Technically, the latter is
substantially more difficult than the former, since in the finite
horizon case there is an additional time variable in the related
Hamilton-Jacobi-Bellman (HJB) equation or variational inequality
(VI). This is why the research on finite-horizon problems had been
predominantly on qualitative and numerical solutions until Liu and
Loewenstein (2002) devised an analytical approach based on an
approximation of the finite horizon by a sequence of Erlang
distributed random horizons. Dai and Yi (2009) subsequently employed
a different analytical approach -- a PDE one -- to study the same
problem.

This paper aims to analytically solve the MV model with transaction costs. Note that
such a problem is inherently one in a finite time horizon, because the very nature of
the Markowitz problem is about striking a balance between the risk and return of the wealth at a finite, {\it terminal} time.
Compared with its EUM counterpart, there is a feasibility issue that must be addressed before
an optimal solution is sought. Precisely speaking, the MV model is to minimize the variance
of the terminal wealth subject to the constraint that an investment target -- certain expected net
terminal wealth -- is achieved. The feasibility is about whether such a target
is achievable by at least one admissible investment strategy. For a Black--Scholes market
without transaction costs, it has been shown [Lim and Zhou (2002)]
that {\it any} target can be reached in an
arbitrary length of time (so long as the risk involved is not a concern, that is).
For a more complicated model with random investment opportunities and no-bankruptcy
constraint, the feasibility is painstakingly investigated in Bielecki {\it et al} (2005).
In this paper we show that the length of the planning horizon
is a determinant of this issue.
In fact, there exists a critical length of time, which is dependent on the
stock excess return as well as the transaction fees but {\it independent} of the
investment target and stock volatility,
so that a sufficiently high target is not achievable
if the planning horizon is shorter than that critical length. This certainly makes
good sense intuitively.

To obtain an optimal strategy, technically we follow the idea of Dai
and Yi (2009) of eventually turning the associated VI into a
double-obstacle problem, a problem that has been well studied in
physics and PDE theory. That said, there are indeed intriguing
subtleties when actually carrying it out. In particular, this paper
is the first to prove (to the best of the authors' knowledge) that
the two free boundaries that define the buy, sell and no-trade
regions are smooth. This smoothness is critical in deriving the
optimal strategy via a Skorokhod problem.\footnote{This smoothness
also plays a crucial role in studying the finite horizon optimal
investment and consumption with transaction costs under utility
framework, see Section 3 of Dai et al. (2009) where an integral over
one free boundary is used.} The optimal strategy is rather simple in
implementation; it is to keep a certain adjusted bond--stock
position within the no-trade region. Several features of the optimal
strategy are revealed that are remarkably different from its
no-transaction-cost counterpart. Among them it is notable that one
should no longer buy stock beyond the point when the time to
maturity is shorter than the aforementioned critical length
associated with the feasibility. Moreover, one is less likely to buy
the stock and more likely to sell the stock when the maturity date
is getting closer. These are consistent with the widely accepted
financial advice, and suggest that the planning horizon should be
regarded as a part of the investment opportunity set when it comes
to continuous time portfolio selection.

The remainder of the paper is organized as follows. The model under consideration is formulated
in section 2, and
the feasibility issue is addressed in section 3. The optimal strategy is derived
in sections 4--6 via several steps, including Lagrange relaxation, transformation of the
HJB equation
to a double obstacle problem, and the Skorokhod problem. Finally, the
paper is concluded with remarks in section 7. Some technical proofs are relegated to an appendix.

\section{Problem Formulation}

We consider a continuous-time market where
there are only two investment instruments: a bond and a stock with
price dynamics given respectively by
\begin{align*}
\dd R(t)& =rR(t)\dd t, \\
\dd S(t)& =\alpha S(t)\dd t+\sigma S(t)\dd B(t).
\end{align*}
Here $r>0$, $\alpha >r$ and $\sigma >0$ are constants, and the process $%
\{B(t)\}_{t\in [0,T]}$ is a standard one-dimensional Brownian motion on a
filtered probability space $(\Omega ,\mathcal{F},\{\mathcal{F}_t\}_{t\in
[0,T]},\mathbf{P})$ with $B(0)=0$ almost surely. We assume that 
the filtration $\{\mathcal{F}_t\}_{t\in [0,T]}$ is generated by the
Brownian motion, is right continuous, and each $\mathcal{F}_t$
contains all the $\mathbf{P}$-null sets of $\mathcal{F}$. We denote
by  ${L}_{\mathcal{F}}^2$ the set of square integrable
$\{\mathcal{F}_t\}_{t\in [0,T]}$-adapted processes, 
\begin{align*}
{L}^{2}_{\mathcal{F}}\defi\left\{ X\left| \;
\parbox{80mm}{ The process $X=\{X(t)\}_{t\in [0,T]}$ is an $\{\mathcal{F}_t\}_{t\in[0,T]}$-adapted process such that $\int_{0}^{T}\BE\left[X^{2}(t)\right]\dd t<\infty$ }\right. \right\},
\end{align*}
and by
${L}_{\mathcal{F}_T}^2$ the set of square integrable
$\mathcal{F}_T$-measurable random variables,
\begin{align*}
{L}^{2}_{\mathcal{F}_{T}}\defi\left\{ X\left| \; \parbox{100mm}{ X is an $\mathcal{F}_T$-measurable random variable such that $\BE\left[X^{2}\right]<\infty$ }\right. \right\}.
\end{align*}
\par
There is a self-financing investor with a finite investment horizon $[0,T]$
who invests $X(t)$ dollars in the bond
and $Y(t)$ dollars in the stock at time $t$.
Any stock transaction incurs a \textit{proportional transaction fee}, with
$\lambda \in [0,+\infty )$ and $\mu \in [0,1)$
being the proportions paid
when buying and selling the stock, respectively.
Throughout this paper, we
assume that $\lambda +\mu >0$, which means transaction costs must be
involved.
The bond--stock value process, starting from $\left( x,y\right)$ at $t=0$,
evolves according to the equations:
\begin{eqnarray}
X^{x,M,N}(t)&=&x+r\int_0^t\!\!X^{x,M,N}(s)\dd s-(1+\lambda )M(t)+(1-\mu
)N(t),  \label{xyprocess} \\
Y^{y,M,N}(t)&=&y+\alpha \int_0^t\!\!Y^{y,M,N}(s)\dd s+\sigma
\int_0^t\!\!Y^{y,M,N}(s)\dd B(s)+M(t)-N(t),  \label{xyprocess2}
\end{eqnarray}
where $M(t)$ and $N(t)$ denote respectively the cumulative stock purchase and sell up to time $t$.
Sometimes we simply use $X$, $Y$ or $X^{M,N}$, $Y^{M,N}$ instead of $X^{x,M,N}$, $Y^{y,M,N}$
if there is no ambiguity.

The \textit{admissible strategy set} $\mathcal{A}$ of the investor is
defined as follows:
\begin{align*}
\mathcal{A}\defi\left\{ (M,N)\left| \;%
\parbox{100mm}{ The processes
$M=\{M(t)\}_{t\in [0,T]}$ and $N=\{N(t)\}_{t\in [0,T]}$ are
$\{\mathcal{F}_t\}_{t\in[0,T]}$-adapted, RCLL, nonnegative and
nondecreasing, and the processes $X^{x,M,N}$ and $Y^{y,M,N}$ are
both in $L^2_{\mathcal{F}}$, for any $(x,y)\in\R^2$ }\right. \right\} .
\end{align*}
$(M,N)$ is called \textit{an admissible strategy} if $(M,N)\in \mathcal{A}$.
Correspondingly, $(X^{x,M,N},Y^{y,M,N})$ is called an \textit{admissible (bond--stock)
process} if $(x,y)\in \mathbb{R}^2$ and $(M,N)\in \mathcal{A}$.  
\par
For an admissible process $(X^{x,M,N},Y^{y,M,N})$, we define the investor's
\textit{net wealth process} by
\begin{align*}
W^{X,Y}(t)\defi X(t)+(1-\mu )Y(t)^{+}-(1+\lambda )Y(t)^{-},\quad t\in [0,T].
\end{align*}
Namely, $W^{X,Y}(t)$ is the net worth of the investor's portfolio at $t$ after the
transaction cost is deducted.
The investor's \textit{attainable net wealth set} at the maturity time $T$
is defined as
\begin{align*}
\mathcal{W}_0^{x,y}\defi \left\{ W^{X,Y}(T)\left| \;%
\parbox{65mm}{$W^{X,Y}(T)$ is the net wealth at $T$ of an admissible process $(X,Y)$ with
$X(0-)=x$, $Y(0-)=y$.}\right. \right\}. 
\end{align*}


In the spirit of the original Markowitz's MV portfolio theory, an {\it efficient strategy} is a trading strategy
for which there does not exist
another strategy that has higher mean and no higher variance,
and/or has less variance and no less mean at the terminal time $T$.
In other words, an efficient strategy is one that is {\it Pareto
optimal}. Clearly, there could be many efficient strategies, and  the terminal means and variances corresponding to all the efficient strategies form
an {\it efficient frontier}. The positioning on the efficient frontier of a specific investor is dictated by his/her risk preference.

It is now well known that the efficient frontier can be obtained
from solving the following {\it variance minimizing} problem:

\begin{problem}
\label{pmeanvariance0}
\begin{eqnarray*}
\begin{array}{lll}
& \mini & \var(W), \\
& \mathrm{subject\;to} & \BE[W]=z,\;\; W\in \mathcal{W}_0^{x,y}.
\end{array}
\end{eqnarray*}
\end{problem}
Here $z$ is a parameter satisfying
\begin{align*}
z> e^{rT}x+(1-\mu )e^{rT}y^{+}-(1+\lambda )e^{rT}y^{-},  \label{daim1}
\end{align*}
which means that the target expected  terminal wealth is higher than
that of the simple ``all-bond'' strategy (i.e.
initially liquidating the stock investment and
putting all the money in the bond
account). The optimal solutions to the above problem with varying values of $z$ will trace out the efficient frontier we are looking for.
For this reason, although Problem 2.1 is indeed an auxilary mathematical problem introduced to help solve the original mean--variance problem, it is sometimes
(as in this paper) itself called the mean--variance problem.

It is immediate to see that
Problem 2.1 is equivalent to the following problem.

\begin{problem}
\label{pmeanvariance}
\begin{eqnarray*}
\begin{array}{lll}
& \mini & \BE[W^2], \\
& \mathrm{subject\;to} & \BE[W]=z,\;\;W\in \mathcal{W}_0^{x,y}.
\end{array}
\end{eqnarray*}
\end{problem}

\section{Feasibility}

In contrast with the EUM problem, the MV model,
Problem \ref{pmeanvariance}, has an inherent constraint $\BE[W]=z$. Is there always
an admissible strategy to meet this constraint no matter how aggressive the target $z$ is?
This is the so-called feasibility issue. The issue is important and unique to the
MV problem, and will be
addressed fully in this
section. To begin with, we introduce two lemmas.

\begin{lem}
\label{Wincreasing} If $W_1\in \mathcal{W}_0^{x,y}$, $%
W_2\in L_{\mathcal{F}_T}^2$ and $W_2\leqslant W_1$, then $W_2\in \mathcal{W}%
_0^{x,y}$.
\end{lem}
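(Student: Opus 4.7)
The key idea is that the assumption $\lambda+\mu>0$ provides a mechanism to ``burn money'': one can execute a round-trip trade that buys stock and immediately sells it back, incurring a net loss of exactly the transaction costs. By doing this at the very last moment $t=T$, the state $Y(T)$ is restored to its original value while $X(T)$ is reduced by a precisely controllable amount. Since $W_1-W_2\ge 0$ is $\mathcal{F}_T$-measurable, we can size the round trip so that the new net wealth is exactly $W_2$.

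Concretely, let $(M_1,N_1)\in\mathcal{A}$ be an admissible strategy with associated process $(X_1,Y_1)=(X^{x,M_1,N_1},Y^{y,M_1,N_1})$ such that $W^{X_1,Y_1}(T)=W_1$. Define
\begin{align*}
a\defi\frac{W_1-W_2}{\lambda+\mu}\ge 0,\qquad M(t)\defi M_1(t)+a\,\mathbf{1}_{\{t=T\}},\qquad N(t)\defi N_1(t)+a\,\mathbf{1}_{\{t=T\}}.
\end{align*}
The plan is then to (i) verify $(M,N)\in\mathcal{A}$ and (ii) compute the new terminal net wealth and show it equals $W_2$.

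For step (i): $a$ is $\mathcal{F}_T$-measurable and nonnegative, so $M,N$ remain adapted, nonnegative, nondecreasing and RCLL on $[0,T]$ (right-continuity at the endpoint $T$ is vacuous, and the left limits at $T$ agree with those of $M_1,N_1$). The associated state processes satisfy $X(t)=X_1(t)$, $Y(t)=Y_1(t)$ for $t\in[0,T)$, while at $t=T$ one has
\begin{align*}
X(T)=X_1(T)-(1+\lambda)a+(1-\mu)a=X_1(T)-(\lambda+\mu)a,\qquad Y(T)=Y_1(T).
\end{align*}
Since $\int_0^T\BE[X(s)^2]\,\dd s$ and $\int_0^T\BE[Y(s)^2]\,\dd s$ are Lebesgue integrals on $[0,T]$ that are insensitive to the value at the single point $T$, these integrals coincide with those for $(X_1,Y_1)$ and are therefore finite. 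Hence $(X,Y)\in L^2_{\mathcal{F}}\times L^2_{\mathcal{F}}$ and $(M,N)\in\mathcal{A}$.

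For step (ii): using $Y(T)=Y_1(T)$, one immediately obtains
\begin{align*}
W^{X,Y}(T)=X(T)+(1-\mu)Y(T)^{+}-(1+\lambda)Y(T)^{-}=W_1-(\lambda+\mu)a=W_2,
\end{align*}
so $W_2\in\mathcal{W}_0^{x,y}$. The only genuine input is the assumption $\lambda+\mu>0$, which ensures $a$ is well defined; the rest is bookkeeping. I do not anticipate a real obstacle here—the only point that might look delicate is the RCLL/$L^2$ check at the endpoint $T$, but as noted above these are immediate because modifying a process at a single time point affects neither pathwise left limits nor $\mathrm{d}t$-integrals.
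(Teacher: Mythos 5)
Your proposal is correct and is essentially identical to the paper's own proof: both add the jump $a=(W_1-W_2)/(\lambda+\mu)$ to the purchase and sale processes at the single time $t=T$, which leaves $Y$ unchanged, lowers $X(T)$ by exactly $W_1-W_2$, and preserves admissibility. The extra bookkeeping you include (adaptedness, RCLL at the endpoint, insensitivity of the $L^2$ integrals to a modification at one time point) is a harmless elaboration of what the paper leaves implicit.
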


\begin{proof}
By the definition of $\mathcal{W}_0^{x,y}$, there exists
$(M,N)\in\mathcal{A}$ such that $X^{M,N}(0-)=x$, $Y^{M,N}(0-)=y$ and
$W^{X^{M,N},Y^{M,N}}(T)=W_1$. We define
\begin{align*}
\overline{M}(t)=
\begin{cases}
M(t), & \textrm{if } t<T,\\
M(T)+\dfrac{W_1-W_2}{\lambda+\mu},& \textrm{if } t=T,
\end{cases}\qquad
\overline{N}(t)=
\begin{cases}
N(t), & \textrm{if } t<T,\\
N(T)+\dfrac{W_1-W_2}{\lambda+\mu},& \textrm{if } t=T.
\end{cases}
\end{align*}
Then $(\overline{M},\overline{N})\in\mathcal{A}$ and
\begin{align*}
X^{\overline{M},\overline{N}}(t)=
\begin{cases}
X^{M,N}(t), & \textrm{if } t<T,\\
X^{M,N}(T)-W_1+W_2,& \textrm{if } t=T,
\end{cases}\qquad
Y^{\overline{M},\overline{N}}(t)=Y^{M,N}(t),\quad t\in[0,T].
\end{align*}
Therefore $W_2=W^{X^{\overline{M},\overline{N}},Y^{\overline{M},\overline{N}}}(T)
\in\mathcal{W}_0^{x,y}$.
\end{proof}

The proof above is very intuitive. If a higher terminal wealth is
achievable by an admissible strategy, then so is a lower one, by
simply ``wasting money'', i.e., buying and selling the same amount
of the stock at $T$, thanks to the presence of the transaction
costs. This is {\it not} necessarily true when there is no
transaction cost.

\begin{lem}
\label{Wconvexset} For any $(x,y)\in \mathbb{R}^2$, we have

(1) the set $\mathcal{W}_0^{x,y}$ is convex;

(2) 
If $(x_i,y_i)\in\R^2$, $W_i\in\mathcal{W}_0^{x_i, y_i}$, $i=1$, $2$, then
$W_1+W_2\in\mathcal{W}_0^{ x_1+x_2, y_1+y_2}$;

(3) if $x_1\leqslant x_2$ and $y_1\leqslant y_2$, then $\mathcal{W}%
_0^{x_1,y_1}\subseteq \mathcal{W}_0^{x_2,y_2}$;

(4) $\mathcal{W}_0^{x-(1+\lambda )\rho ,y+\rho }\subseteq \mathcal{W}_0^{x,y}
$ and $\mathcal{W}_0^{x+(1-\mu )\rho ,y-\rho }\subseteq \mathcal{W}_0^{x,y}$
for any $\rho >0$;

(5) $\mathcal{W}_0^{\rho x,\rho y}=\rho \mathcal{W}_0^{x,y}$ for any $\rho >0
$;

(6) if $x+(1-\mu )y^{+}-(1+\lambda )y^{-}\geqslant 0$, then $0\in \mathcal{W}%
_0^{x,y}$.
\end{lem}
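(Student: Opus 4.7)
The plan rests on two observations: the processes $X^{x,M,N}$ and $Y^{y,M,N}$ depend \emph{linearly} on the initial data $(x,y)$ and on the controls $(M,N)$, and the terminal-payoff function $\phi(y):=(1-\mu)y^{+}-(1+\lambda)y^{-}$ (so that $W^{X,Y}(t)=X(t)+\phi(Y(t))$) can be written $\phi(y)=\min\{(1-\mu)y,(1+\lambda)y\}$ and is therefore piecewise linear, nondecreasing, concave, positively homogeneous of degree one, and \emph{superadditive}. Whenever a candidate strategy produces a terminal net wealth that dominates a desired target, the already-established Lemma~\ref{Wincreasing} lets us cut it down to any prescribed smaller $L^{2}$ value. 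I would prove the six items in the order (3)--(4)--(5)--(2)--(1)--(6).

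For (3), given $W\in\mathcal{W}_0^{x_1,y_1}$ attained by $(M,N)$, apply the \emph{same} strategy from $(x_2,y_2)$; linearity of the SDEs yields $X^{x_2,M,N}(T)-X^{x_1,M,N}(T)=(x_2-x_1)e^{rT}\geq 0$ and $Y^{y_2,M,N}(T)-Y^{y_1,M,N}(T)=(y_2-y_1)\exp((\alpha-\tfrac{1}{2}\sigma^{2})T+\sigma B(T))\geq 0$, so monotonicity of $\phi$ gives $W^{X_2,Y_2}(T)\geq W$, and Lemma~\ref{Wincreasing} concludes. For (4), prepend an atomic stock purchase (resp.\ sale) of size $\rho$ at time $0$ by replacing $M$ by $M+\rho$ (resp.\ $N$ by $N+\rho$); a direct check against (\ref{xyprocess})--(\ref{xyprocess2}) shows the resulting processes started at $(x,y)$ coincide with those started at $(x-(1+\lambda)\rho,y+\rho)$ (resp.\ $(x+(1-\mu)\rho,y-\rho)$) under the original $(M,N)$, so the same terminal $W$ is attained. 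For (5), the scaled admissible pair $(\rho M,\rho N)$ started from $(\rho x,\rho y)$ produces, by linearity, $(\rho X^{x,M,N},\rho Y^{y,M,N})$, and positive homogeneity of $\phi$ gives $W^{\rho X,\rho Y}(T)=\rho W^{X,Y}(T)$; this yields $\rho\mathcal{W}_0^{x,y}\subseteq\mathcal{W}_0^{\rho x,\rho y}$, and repeating with $\rho^{-1}$ gives the reverse inclusion.

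For (2), let $(M_i,N_i)$ attain $W_i$ for $i=1,2$. Linearity identifies the combined strategy $(M_1+M_2,N_1+N_2)$ started at $(x_1+x_2,y_1+y_2)$ with $(X_1+X_2,Y_1+Y_2)$, so the combined terminal net wealth is $(X_1+X_2)(T)+\phi((Y_1+Y_2)(T))$. The key step is superadditivity $\phi(a+b)\geq\phi(a)+\phi(b)$, which follows at once from concavity plus positive homogeneity of $\phi$ via $\phi(a)+\phi(b)=2[\tfrac{1}{2}\phi(a)+\tfrac{1}{2}\phi(b)]\leq 2\phi(\tfrac{a+b}{2})=\phi(a+b)$. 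Hence the combined terminal net wealth dominates $W_1+W_2$, and Lemma~\ref{Wincreasing} completes the argument. Claim (1) then follows by writing $\alpha W_1+(1-\alpha)W_2$ for $\alpha\in(0,1)$ as the sum of elements of $\mathcal{W}_0^{\alpha x,\alpha y}$ and $\mathcal{W}_0^{(1-\alpha)x,(1-\alpha)y}$ via (5) and invoking (2). Finally, (6) is handled by the liquidation strategy $N(0)=y^{+}$, $M(0)=y^{-}$, held constant thereafter: the bond position at $0+$ equals $x+\phi(y)\geq 0$ and grows deterministically to $(x+\phi(y))e^{rT}\geq 0$ at $T$, which dominates $0$, so Lemma~\ref{Wincreasing} produces $0\in\mathcal{W}_0^{x,y}$.

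The main obstacle is a conceptual one concentrated in (2): recognizing that the transaction-cost payoff $\phi$ is not merely concave but \emph{superadditive}, which is precisely what converts a sum of admissible net wealths into an admissible net wealth of the summed initial data. All remaining work is bookkeeping of admissibility---adaptedness, right-continuity, monotonicity, nonnegativity, and $L^{2}$-integrability of the augmented, scaled, or summed controls---which is routine since $L^{2}_{\mathcal{F}}$ is a linear space and the atomic jumps added at $0$ or $T$ do not disturb square-integrability.
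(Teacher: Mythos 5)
Your proposal is correct and follows essentially the same route as the paper: linearity of the state equations in $(x,y,M,N)$, concavity (equivalently, superadditivity plus positive homogeneity) of the liquidation payoff $\phi(y)=(1-\mu)y^{+}-(1+\lambda)y^{-}$, and reduction to Lemma~\ref{Wincreasing} to cut a dominating terminal wealth down to the target. The only differences are organizational --- you prove (2) first and derive (1) from (2) and (5), whereas the paper proves (1) directly via convexity of $y\mapsto y^{-}$, and your (6) liquidates explicitly rather than combining Lemma~\ref{Wincreasing} with part (4) --- but these are the same underlying facts.
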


\begin{proof}
(1) For any $W_1$, $W_2\in \mathcal{W}_0^{x,y}$,
assume that $W_i=W^{X_i,Y_i}(T)$,
where $(X_i,Y_i)=(X^{x,M_i,N_i},Y^{y,M_i,N_i})$, $(M_i,N_i)\in\mathcal{A}$, $i=1,2$.
For any $k\in(0,1)$, let $M=kM_1+(1-k)M_2$, $N=kN_1+(1-k)N_2$.
Then $(M,N)\in\mathcal{A}$, and
\begin{align*}
X^{x,M,N}&=kX_1+(1-k)X_2,\\
Y^{y,M,N}&=kY_1+(1-k)Y_2.
\end{align*}
Thus
\begin{equation*}
\begin{split}
W^{X,Y}(T)&=X(T)+(1-\mu)Y(T)^+-(1+\lambda)Y(T)^-
=X(T)+(1-\mu)Y(T)-(\mu+\lambda)Y(T)^-\\
&\geqslant kX_1(T)+(1-k)X_2(T)+(1-\mu)(kY_1(T)+(1-k)Y_2(T))\\
&\quad-(\mu+\lambda)(kY_1(T)^-+(1-k)Y_2(T)^-)\\
&=k(X_1(T)+(1-\mu)Y_1(T)^+-(1+\lambda)Y_1(T)^-)\\
&\quad+(1-k)(X_2(T)+(1-\mu)Y_2(T)^+-(1+\lambda)Y_2(T)^-)\\
&=kW_1+(1-k)W_2.
\end{split}
\end{equation*}
Since $W^{X,Y}(T)\in\mathcal{W}_0^{x,y}$, $kW_1+(1-k)W_2\in
L^2_{\mathcal{F}_T}$, it follows from Lemma \ref{Wincreasing} that
$kW_1+(1-k)W_2\in\mathcal{W}_0^{x,y}$.
\par
(2) This can be proved by the same argument as above.
\par
(3) If $x_1\leqslant x_2$ and $y_1\leqslant y_2$, then for any $(M,N)\in\mathcal{A}$, we have
$X^{x_1,M,N}\leqslant X^{x_2,M,N}$, $Y^{y_1,M,N}\leqslant Y^{y_2,M,N}$.
So $W^{X^{x_1,M,N}, Y^{y_1,M,N}}(T)\leqslant
W^{X^{x_2,M,N}, Y^{y_2,M,N}}(T)\in \mathcal{W}_0^{x_2,y_2}$.
Therefore by Lemma \ref{Wincreasing},
we have $W^{X^{x_1,M,N}, Y^{y_1,M,N}}(T)\in \mathcal{W}_0^{x_2,y_2}$.
This shows $\mathcal{W}_0^{x_1,y_1}\subseteq \mathcal{W}_0^{x_2,y_2}$.
\par
(4) For any $\rho>0$, $(M,N)\in\mathcal{A}$, we define
\begin{align*}
\overline{M}(s)&=
M(s)+\rho, \quad\forall s\in[0,T].
 \end{align*}
Then $(X^{x-(1+\lambda)\rho,M,N},Y^{y+\rho,M,N})=(X^{x,\overline{M},N},Y^{y,\overline{M},N})$.
So $W^{X^{x-(1+\lambda)\rho,M,N},Y^{y+\rho,M,N}}=W^{X^{x,\overline{M},N},Y^{y,\overline{M},N}}
\in \mathcal{W}_0^{x,y}$. Hence
$\mathcal{W}_0^{x-(1+\lambda)\rho,y+\rho}\subseteq \mathcal{W}_0^{x,y}$.
Similarly, we can prove that
$\mathcal{W}_0^{x+(1-\mu)\rho,y-\rho}\subseteq \mathcal{W}_0^{x,y}$.
\par
(5) Noting that
$W^{X^{\rho x,\rho M,\rho N},Y^{\rho y,\rho M,\rho N}}=\rho W^{X^{x,M,N},Y^{y,M,N}}$ $\forall \rho>0$,
we have immediately $\mathcal{W}_0^{\rho x,\rho y}=\rho\mathcal{W}_0^{x,y}$.
\par
(6) If $y\geqslant 0$ and $x+(1-\mu)y\geqslant 0$, then obviously
$0\in\mathcal{W}_0^{x+(1-\mu)y,0}$ by Lemma \ref{Wincreasing}.
Noting (4) proved above, we conclude that $0\in\mathcal{W}_0^{x,y}$.
Similarly we can prove the case of $y<0$ and $x+(1+\lambda)y\geqslant 0$.
\end{proof}
\hspace{1.0in}

Denote
\begin{align}
\hat z\defi\sup \left\{ \BE[W]\;\big|\;W\in \mathcal{W}_0^{x,y}\right\} .
\label{hatzdefinition}
\end{align}
In view of Lemma \ref{Wincreasing}, Problem \ref{pmeanvariance} is feasible when
\begin{align}
z\in \mathcal{D}\defi\left( e^{rT}x+(1-\mu )e^{rT}y^{+}-(1+\lambda
)e^{rT}y^{-},\;\hat z\right) .  \label{mathcalDdefinition}
\end{align}
It is clear that Problem \ref{pmeanvariance} is not feasible when $z>\hat z$%
. So, it remains to investigate whether Problem \ref{pmeanvariance} admits
a feasible solution when $z=\hat z$.

It is well known that in the absence of transaction costs (i.e. $\lambda=\mu=0$),
we have $\hat{z}=+\infty$ and thus Problem \ref{pmeanvariance}
is always feasible for any $z\geqslant e^{rT}x+(1-\mu)e^{rT}y^{+}-(1+\lambda )e^{rT}y^{-}$
[see Lim and Zhou (2002)]. In other words, no matter
how \textit{small} the investor's initial wealth is, the investor can always arrive at
an \textit{arbitrarily} large expected return in a \textit{split second,} by
taking a huge leverage on the stock. The
following theorem indicates that things become very different when the transaction costs
get involved.

\begin{thm}
\label{fesiblility}Assume $T\leqslant T^*\defi\frac 1{\alpha -r}\ln \left( \frac{%
1+\lambda }{1-\mu }\right) $. Then
\[
\hat z=%
\begin{cases} e^{rT}x+(1-\mu)e^{\alpha T}y, &\textrm{ if } y> 0,\\
e^{rT}x+(1+\lambda)e^{r T}y, &\textrm{ if }y\leqslant 0. \end{cases}
\]
Moreover, if $y>0$ and $z=\hat z$, then Problem \ref{pmeanvariance0} admits
a unique feasible (thus optimal) solution and the optimal strategy is $%
(M,N)\equiv (0,0)$. If $y\leqslant 0$, then ${\mathcal{D}}=\emptyset$.
\end{thm}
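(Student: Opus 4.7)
The strategy is to produce a sharp deterministic upper bound on $\BE[W^{X,Y}(T)]$ via an auxiliary supermartingale, and then match it from below with an explicit admissible strategy. For a $C^1$ function $h:[0,T]\to\R$ to be chosen, set
\[
V(t)\defi e^{r(T-t)}X(t)+h(t)Y(t).
\]
Integration by parts on \eqref{xyprocess}--\eqref{xyprocess2} gives
\begin{align*}
dV(t)={}&\bigl[h'(t)+\alpha h(t)\bigr]Y(t)\,dt+\sigma h(t)Y(t)\,dB(t)\\
&+\bigl[h(t)-(1+\lambda)e^{r(T-t)}\bigr]dM(t)+\bigl[(1-\mu)e^{r(T-t)}-h(t)\bigr]dN(t).
\end{align*}
To eliminate the $Y\,dt$ drift for any sign of $Y$, I choose $h'+\alpha h=0$, i.e.\ $h(t)=h(T)e^{\alpha(T-t)}$. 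The two bracketed coefficients above are simultaneously non-positive on $[0,T]$ iff $(1-\mu)e^{r(T-t)}\leq h(t)\leq(1+\lambda)e^{r(T-t)}$ for every $t$, which after substitution reduces to $1-\mu\leq h(T)\leq(1+\lambda)e^{-(\alpha-r)T}$. The hypothesis $T\leq T^*$ is precisely what makes this interval non-empty. Since $Y\in L^2_{\mathcal{F}}$ and $h$ is bounded, the $dB$ piece is a true martingale and $V$ is a genuine supermartingale, so $V(0)\geq\BE[V(T)]$.

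For $y>0$ I pick $h(T)=1-\mu$, the left endpoint, giving $V(0)=e^{rT}x+(1-\mu)e^{\alpha T}y$; since $(1-\mu)Y(T)\geq(1-\mu)Y(T)^+-(1+\lambda)Y(T)^-$, we have $V(T)\geq W(T)$ pathwise, hence $\BE[W(T)]\leq V(0)$. The no-trade strategy $(M,N)\equiv(0,0)$ keeps $Y$ as a geometric Brownian motion with $\BE[Y(T)]=ye^{\alpha T}$, and therefore attains this bound, proving the first formula for $\hat z$. For $y\leq 0$ I instead pick $h(T)=(1+\lambda)e^{-(\alpha-r)T}$, the right endpoint, yielding $V(0)=e^{rT}x+(1+\lambda)e^{rT}y$; a short case split on $\mathrm{sign}\,Y(T)$, using $T\leq T^*$ on the $Y(T)\geq 0$ branch and $\alpha>r$ on the $Y(T)<0$ branch, once again gives $V(T)\geq W(T)$. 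The all-bond liquidation strategy (at $t=0$, purchase $-y$ shares to cover the short position and invest the remaining $x+(1+\lambda)y$ in the bond) attains this bound, settling the second formula. Because the upper bound here coincides with the left endpoint of $\mathcal{D}$ in \eqref{mathcalDdefinition}, we conclude $\mathcal{D}=\emptyset$ whenever $y\leq 0$.

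For uniqueness when $y>0$ and $z=\hat z$, the chain $\BE[W(T)]\leq\BE[V(T)]\leq V(0)=\hat z$ must collapse to a chain of equalities. The second equality gives the complementary-slackness identities
\[
\BE\!\int_0^T\!\bigl[(1+\lambda)e^{r(T-t)}-h(t)\bigr]dM(t)=0=\BE\!\int_0^T\!\bigl[h(t)-(1-\mu)e^{r(T-t)}\bigr]dN(t),
\]
and under $T\leq T^*$ both integrands are strictly positive on the interior of $[0,T]$ (the first because $T-t<T^*$ for $t>0$, the second because $\alpha>r$ for $t<T$), forcing $M\equiv 0$ and $N$ constant on $[0,T)$. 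The first equality $\BE[V(T)]=\BE[W(T)]$ then forces $Y(T)^-=0$ a.s., and a direct computation of the possible terminal jump shows $W(T)=e^{rT}x+(1-\mu)Y(T-)$ independently of any feasible jump $\Delta N(T)\in[0,Y(T-)]$. Hence the feasible $W$ is unique and $(M,N)\equiv(0,0)$ realises it.

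The main technical issue I anticipate is the upgrade from local to true martingale for the $dB$ piece (routine given $Y\in L^2_{\mathcal{F}}$ and boundedness of $\sigma h$), together with careful accounting of the RCLL jumps of $(M,N)$ at the endpoints $t=0$ and $t=T$, where the complementary-slackness integrands may vanish; absent this care, one risks allowing extra optimisers in the degenerate boundary configuration $T=T^*$.
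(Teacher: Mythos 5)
Your proof is correct and, for $y>0$, is essentially the paper's own argument in disguise: the choice $h(t)=(1-\mu)e^{\alpha(T-t)}$ is exactly the integrating factor the paper applies to $X(T)+(1-\mu)Y(T)$, and both proofs then combine $W^{X,Y}(T)\leqslant X(T)+(1-\mu)Y(T)$ with the non-positivity of the $\dd M$ and $\dd N$ coefficients under $T\leqslant T^*$. Where you genuinely diverge is the case $y\leqslant 0$: the paper keeps the discount $e^{r(T-t)}$ on the stock leg, which leaves a residual drift $(\alpha-r)e^{r(T-t)}Y(t)\,\dd t$ whose sign is controlled only while $Y\leqslant 0$, forcing it to introduce the stopping time $\tau=\inf\{t:Y(t)>0\}\wedge T$ and a case split on the sign of $Y(\tau)$. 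Your choice $h(T)=(1+\lambda)e^{-(\alpha-r)T}$ kills the drift identically and pushes all the work into the terminal comparison $V(T)\geqslant W^{X,Y}(T)$, where $T\leqslant T^*$ enters on the branch $Y(T)\geqslant 0$ and $\alpha>r$ on the branch $Y(T)<0$; this eliminates the stopping time entirely and makes transparent why $T^*$ is the threshold --- it is precisely the condition for the interval $[\,1-\mu,\;(1+\lambda)e^{-(\alpha-r)T}\,]$ of admissible values of $h(T)$ to be non-empty. One caveat you correctly flag, and which the paper's proof glosses over: at the boundary $T=T^*$ the buy coefficient vanishes at $t=0$, so complementary slackness does not exclude an initial purchase $\Delta M(0)=\rho>0$; since $(1-\mu)e^{\alpha T^*}=(1+\lambda)e^{rT^*}$, such a strategy still attains $\BE[W^{X,Y}(T)]=\hat z$ while producing a different terminal wealth (with strictly larger variance), so the ``unique feasible solution'' assertion really only holds for $T<T^*$. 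This is a defect of the statement at the endpoint rather than of either method, but your write-up is the one that notices it; otherwise the argument is complete.
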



\begin{proof}
For any $(M,N)\in\mathcal{A}$, due to (\ref{xyprocess}),
(\ref{xyprocess2}) and It\^{o}'s formula, we have
\begin{equation*}
\begin{split}
X(T)+(1-\mu)Y(T) &=e^{rT}x+(1-\mu)e^{\alpha T}y
+\int_0^T\!\! e^{r(T-t)}\left(\dd X(t)-rX(t)\dd t\right)\\
&\quad+\int_0^T\!\!(1-\mu) e^{\alpha(T-t)}\left(\dd Y(t)-\alpha Y(t)\dd t\right)\\
&=e^{rT}x+(1-\mu)e^{\alpha T}y
+\int_0^T\!\!\left((1-\mu) e^{\alpha(T-t)}-(1+\lambda)e^{r(T-t)}\right)\dd M^c(t)\\
&\quad+\int_0^T\!\!(1-\mu)\left(e^{r(T-t)}-e^{\alpha(T-t)}\right)\dd
N^c(t)
+\int_0^T\!\!(1-\mu) e^{\alpha(T-t)}\sigma Y(t) \dd B(t)\\
&\quad+\sum\limits_{0\leqslant t\leqslant T}
\left((1-\mu) e^{\alpha(T-t)}-(1+\lambda)e^{r(T-t)}\right)(M(t)-M(t-))\\
&\quad+\sum\limits_{0\leqslant t\leqslant T}(1-\mu)
\left(e^{r(T-t)}-e^{\alpha(T-t)}\right)(N(t)-N(t-))\\
&\leqslant e^{rT}x+(1-\mu)e^{\alpha T}y +\int_0^T\!\!(1-\mu)
e^{\alpha(T-t)}\sigma Y(t) \dd B(t),
\end{split}
\end{equation*}
where we have noted $ T
\leqslant\frac{1}{\alpha-r}\ln\left(\frac{1+\lambda}{1-\mu}\right).
$ So
\begin{align*}
\BE [X(T)+(1-\mu)Y(T)]\leqslant e^{rT}x+(1-\mu)e^{\alpha T}y.
\end{align*}
It follows
\begin{equation*}
\begin{split}
\BE\left[W^{X,Y}(T)\right]&=\BE[X(T)+(1-\mu)Y(T)^+-(1+\lambda)Y(T)^-]\\
&\leqslant \BE[X(T)+(1-\mu)Y(T)] \leqslant  e^{rT}x+
(1-\mu)e^{\alpha T}y.
\end{split}
\end{equation*}
When $y\geqslant 0$, we have $\BE[W^{X,Y}(T)]=e^{rT}x+
(1-\mu)e^{\alpha T}y$ if and only if $(M,N)\equiv(0,0)$. Thus, if
$y\geqslant 0$, then $\hat{z}=e^{rT}x+ (1-\mu)e^{\alpha T}y$, and
$(0,0)$ is the unique feasible strategy when $z=\hat{z}$.
\par
Now we turn to the case of $y<0$. Define
$\tau=\inf\{t\in[0,T):Y(t)>0\}\wedge T$. Then $\tau$ is a stopping
time [cf. Theorem 2.33, Klebaner (2004)]. On the set
$\{Y(\tau)\geqslant 0\}$, by the same argument as above, we have
\begin{equation*}
\begin{split}
W^{X,Y}(T) &\leqslant e^{r(T-\tau)}X(\tau)+(1-\mu)e^{\alpha(T-\tau)}Y(\tau)\leqslant
e^{r(T-\tau)}X(\tau)+(1+\lambda)e^{r(T-\tau)}Y(\tau).
 \end{split}
\end{equation*}
On the set $\{Y(\tau)<0\}$, we have $\tau=T$,
\begin{equation*}
\begin{split}
W^{X,Y}(T) &=X(T)+(1+\lambda) Y(T)=e^{r(T-\tau)}X(\tau)+(1+\lambda)e^{r(T-\tau)}Y(\tau).
\end{split}
\end{equation*}
On the other hand, noting that $Y(t)\leqslant 0$, $t\in[0,\tau)$,
we have
\begin{equation*}
\begin{split}
&\hspace{15pt} e^{r(T-\tau)}X(\tau)+(1+\lambda)e^{r(T-\tau)}Y(\tau)\\
&=e^{rT}x+(1+\lambda)e^{rT}y-\int_0^{\tau}\!\!(\lambda+\mu)e^{r(T-t)}\dd
N^c(t) +\int_0^{\tau}\!\!(\alpha-r)e^{r(T-t)}Y(t)\dd t \\ &\quad
+\int_0^{\tau}\!\!e^{r(T-t)}(1-\mu)\sigma Y(t) \dd B(t)
-\sum\limits_{0\leqslant t\leqslant \tau}(\lambda+\mu)e^{r(T-t)}(N(t)-N(t-))\\
&\leqslant
e^{rT}x+(1+\lambda)e^{rT}y+\int_0^{\tau}\!\!e^{r(T-t)}(1-\mu)\sigma
Y(t) \dd B(t).
\end{split}
\end{equation*}
It follows
\begin{align*}
\BE[e^{r(T-\tau)}X(\tau)+(1+\lambda)e^{r(T-\tau)}Y(\tau)]\leqslant
e^{rT}x+(1+\lambda)e^{rT}y.
\end{align*}
Therefore,
\begin{equation*}
\begin{split}
\BE[W^{X,Y}(T)]& \leqslant
\BE[e^{r(T-\tau)}X(\tau)+(1+\lambda)e^{r(T-\tau)}Y(\tau)]
\leqslant e^{rT}x+(1+\lambda)e^{rT}y.
\end{split}
\end{equation*}
This indicates that $\BE[W^{X,Y}(T)]=e^{rT}x+ (1+\lambda)e^{r T}y$ if and
only if the investor puts all of his wealth in the bond at time 0.
Thus $\hat{z}=e^{rT}x+ (1+\lambda)e^{r T}y$ if $y<0$.
\end{proof}


This result demonstrates the importance of the length of the
investment planning horizon, $T$, by examining the situation when
$T$ is not long enough. In this ``short horizon'' case, if the
investor starts with a short position in stock, then the only
sensible strategy is the all-bond one, since any {\it other}
strategy will just be worse off in {\it both} mean and variance.
On the other hand, if one starts with a long stock position, then
the highest expected terminal net wealth (without considering the
variance) is achieved by the ``stay-put'' strategy, one that does
not switch at all between bond and stock from the very beginning.
Therefore, any efficient strategy is between the two extreme
strategies, those of ``all-bond'' and ``stay-put'', according to
an individual investor's risk preference.

More significantly, Theorem \ref{fesiblility} specifies explicitly this critical length of
horizon, $T^*=\frac 1{\alpha -r}\ln \left( \frac{%
1+\lambda }{1-\mu }\right)$. It is intriguing that $T^*$ depends
only on the excess return, $\alpha -r$, and the transaction fees
$\lambda,\mu$, \textit{not} on the individual target $z$ or the
stock volatility $\sigma$. Later we will show that, indeed,  $\hat
z=+\infty $ when $T>T^*$ in Corollary \ref{hatz=infty}. Therefore
$T^*$ is such a critical value in time that divides between ``global
feasibility'' and ``limited feasibility'' of the underlying MV
portfolio selection problem. It signifies the opportunity that a
longer time horizon would provide in achieving a higher potential
gain. In this sense, the length of the planning horizon should be
really included in the set of the investment opportunities, as
opposed to the hitherto widely accepted notion that the investment
opportunity set consists of only the probabilistic characteristics
of the returns. Moreover, it follows from the expression of $T^*$
that the less transaction cost and/or the higher excess return of
the stock the shorter time it requires to attain the global
feasibility. These, of course, all make perfect sense economically.

In the remaining part of this paper, we only consider the case when
${\mathcal{D}}\neq\emptyset$ and $z\in {\mathcal{D}}$.

\section{Unconstrained Problem and Double-Obstacle Problem}

\subsection{Lagrangian Relaxation and HJB equation}

By virtue of Lemma \ref{Wconvexset}, Problem \ref{pmeanvariance} is a convex
constrained optimization problem. We shall utilize the well-known Lagrange
multiplier method to remove the constraint.

Let us introduce the following unconstrained problem.

\begin{problem}[\textbf{Unconstrained Problem}]
\label{pmeanvariancewithoutconstrain0}
\begin{eqnarray*}
\begin{array}{lll}
& \mini & \BE[W^2]-2\ell (\BE[W]-z) \\
& \mathrm{subject\;to\;} & W\in \mathcal{W}_0^{x,y},
\end{array}
\end{eqnarray*}
\end{problem}
or equivalently,

\begin{problem}
\label{pmeanvariancewithoutconstrain}
\begin{eqnarray*}
\begin{array}{lll}
& \mini & \BE[(W-\ell)^2] \\
& \mathrm{subject\;to\;} & W\in \mathcal{W}_0^{x,y}.
\end{array}
\end{eqnarray*}
\end{problem}

Define the value function of Problem \ref{pmeanvariance} as follows:
\begin{align*}
V_1(x,y;z)\defi\inf_{W\in \mathcal{W}_0^{x,y}\atop
\BE[W]=z}\BE[W^2],\quad z\in \mathcal{D}.
\end{align*}
The following result, showing the connection between Problem \ref{pmeanvariance}
and Problem \ref{pmeanvariancewithoutconstrain}, can be proved by a standard
convex analysis argument.

\begin{prop}
\label{LagrangeMultiplier} Problem \ref{pmeanvariance} and Problem
\ref{pmeanvariancewithoutconstrain} have the following relations.

\begin{enumerate}
\item[(1)]  If $W_z^{*}$ solves Problem \ref{pmeanvariance} with parameter $z\in
\mathcal{D}$, then there exits $\ell \in \mathbb{R}$ such that $W_z^{*}$
also solves Problem \ref{pmeanvariancewithoutconstrain} with parameter $\ell
$.

\item[(2)]  Conversely, if $W_\ell $ solves Problem
\ref{pmeanvariancewithoutconstrain} with parameter $\ell \in \mathbb{R}$, then
it must also solve Problem \ref{pmeanvariance} with parameter $z=%
\BE[W_{\ell}]$.
\end{enumerate}
\end{prop}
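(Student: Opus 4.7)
The proposition is a standard Lagrange-multiplier statement for a convex program with a single affine equality constraint, and I would handle its two parts separately.

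Part (2) is the easy direction. Given $W_\ell$ minimizing $\BE[(W-\ell)^2]$ over $\mathcal{W}_0^{x,y}$, I set $z\defi\BE[W_\ell]$ and expand both squares for an arbitrary competitor $W\in\mathcal{W}_0^{x,y}$ with $\BE[W]=z$; the linear cross-terms $-2\ell\BE[W]$ and $-2\ell\BE[W_\ell]$ cancel one another because both equal $-2\ell z$, so the minimizing inequality collapses to $\BE[W_\ell^2]\leqslant\BE[W^2]$, which is exactly optimality in Problem~\ref{pmeanvariance}.

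Part~(1) is the substantive direction and calls for a separating-hyperplane argument in $\R^2$. The plan is to work with the ``moment epigraph''
\begin{align*}
\Gamma\defi\Big\{(a,t)\in\R^2:\ \exists\,W\in\mathcal{W}_0^{x,y}\text{ with }\BE[W]=a,\ \BE[W^2]\leqslant t\Big\},
\end{align*}
whose convexity follows from Lemma~\ref{Wconvexset}(1) combined with the convexity of $W\mapsto\BE[W^2]$: for competitors $W_1,W_2$ realizing two points of $\Gamma$, the linear combination $kW_1+(1-k)W_2$ lies in $\mathcal{W}_0^{x,y}$, interpolates the means, and has second moment bounded by the corresponding combination of $t_1,t_2$. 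Since $W_z^*$ attains $V_1(x,y;z)$, the point $(z,V_1(x,y;z))$ lies in $\Gamma$ but not in its interior (the vertical half-line below it misses $\Gamma$ by definition of $V_1$), so a standard supporting-hyperplane theorem yields a nonzero $(p,q)\in\R^2$ with $p\,a+q\,t\geqslant p z+q V_1(x,y;z)$ for every $(a,t)\in\Gamma$. Sending $t\to+\infty$ forces $q\geqslant 0$; ruling out $q=0$ (see below) then allows me to set $\ell\defi-p/(2q)$ and rearrange the supporting inequality, after adding the harmless constant $\ell^2$, into $\BE[(W-\ell)^2]\geqslant\BE[(W_z^*-\ell)^2]$, which is precisely the optimality of $W_z^*$ in Problem~\ref{pmeanvariancewithoutconstrain}.

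The main obstacle, as usual in Lagrangian duality, is the non-verticality check $q>0$. This is precisely where the hypothesis $z\in\mathcal{D}$ is indispensable: the defining inequality of $\mathcal{D}$ exhibits the ``all-bond'' strategy as an admissible element whose mean is strictly below $z$, while $z<\hat z$ combined with the definition of $\hat z$ produces admissible elements with mean strictly above $z$. Thus $z$ lies in the relative interior of the projection of $\Gamma$ onto the first coordinate, which rules out any separating hyperplane of the form $a=z$ and hence forces $q>0$. Once this is established the remainder of the argument is algebra.
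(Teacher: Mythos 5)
Your proof is correct and follows essentially the same route the paper has in mind: the paper dismisses this as ``a standard convex analysis argument,'' and its own (suppressed) proof takes a subgradient of the convex value function $z\mapsto V_1(x,y;z)$ at the interior point $z$ of the open interval $\mathcal{D}$, which is exactly what your supporting hyperplane to the moment set $\Gamma$ produces once non-verticality ($q>0$) is secured. Your part (2) is also the paper's cancellation argument verbatim, so there is nothing to add.
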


It is easy to see that $\mathcal{W}_0^{x,y}-\ell =\mathcal{W}_0^{x-\ell
e^{-rT},y} $. As a consequence, we  consider the following problem instead of
Problem \ref{pmeanvariancewithoutconstrain}:

\begin{problem}
\label{pmeanvariancemainproblem}
\begin{eqnarray*}
\begin{array}{lll}
& \mini & \BE[W^2] \\
& \mathrm{subject\;to\;} & W\in \mathcal{W}_0^{x-\ell e^{-rT},y}
\end{array}
\end{eqnarray*}
\end{problem}

To solve the above problem, we use dynamic programming. In doing so we need to
parameterize the initial time. Consider the dynamics (\ref{xyprocess})--(\ref{xyprocess2})
where the initial time $0$ is revised to some $s\in[0,T)$, and
define $\mathcal{W}_s^{x,y}$ as the counterpart of $\mathcal{W}_0^{x,y}$ where the
initial time is $s$ and initial bond--stock position is $(x,y)$.
We then define the value function of Problem \ref{pmeanvariancemainproblem} as
\begin{align}
V(t,x,y)\defi\inf_{W\in \mathcal{W}_t^{x,y}}\BE[W^2],\quad (t,x,y)\in
[0,T)\times \mathbb{R}^2.  \label{Valuefunction}
\end{align}
The following proposition establishes a link between Problem
\ref{pmeanvariancemainproblem} and Problem \ref{pmeanvariance}.

\begin{prop}
\label{determineLagrangeMultiplier} If $z\in \mathcal{D}$, then
\begin{align*}
\sup\limits_{\ell \in \mathbb{R}}(V(0,x-\ell e^{-rT},y)-(\ell
-z)^2)=V_1(x,y;z)-z^2.
\end{align*}
\end{prop}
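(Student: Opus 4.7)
The plan is to prove the two inequalities separately, recognizing the identity as a Lagrangian/Fenchel duality statement for the convex program in Problem \ref{pmeanvariance}.

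Step 1 (easy direction, ``$\leqslant$''). Using the shift identity $\mathcal{W}_0^{x-\ell e^{-rT},y}=\mathcal{W}_0^{x,y}-\ell$ noted just above the statement,
$$V(0,x-\ell e^{-rT},y)=\inf_{W\in\mathcal{W}_0^{x,y}}\BE[(W-\ell)^2].$$
For any admissible $W$ with $\BE[W]=z$, expand $\BE[(W-\ell)^2]=\BE[W^2]-2\ell z+\ell^2$. Taking the infimum of the LHS over the larger set $\mathcal{W}_0^{x,y}$ gives $V(0,x-\ell e^{-rT},y)\leqslant V_1(x,y;z)-2\ell z+\ell^2$, hence
$$V(0,x-\ell e^{-rT},y)-(\ell-z)^2\leqslant V_1(x,y;z)-z^2,$$
and taking $\sup_\ell$ yields ``$\leqslant$''.

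Step 2 (strong duality, ``$\geqslant$''). The plan is to exhibit $\ell^*\in\R$ at which the supremum is attained. By Lemma \ref{Wconvexset}(1) the set $\mathcal{W}_0^{x,y}$ is convex; combined with the convexity of $W\mapsto\BE[W^2]$ and linearity of the constraint $\BE[W]=z'$, the map $\varphi(z')\defi V_1(x,y;z')$ (extended by $+\infty$ outside the set of attainable means) is a proper convex function on $\R$. Since $\mathcal{D}$ is an open interval and $z\in\mathcal{D}$, the point $z$ lies in the interior of $\mathrm{dom}(\varphi)$, so standard convex analysis on $\R$ produces a subgradient $2\ell^*\in\R$ with
$$\varphi(z')\geqslant \varphi(z)+2\ell^*(z'-z)\quad\text{for all }z'\in\R,$$
the inequality being vacuous when $\varphi(z')=+\infty$. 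Partitioning $\mathcal{W}_0^{x,y}$ by the value of $\BE[W]$,
$$V(0,x-\ell^* e^{-rT},y)=\ell^{*2}+\inf_{z'}\bigl\{\varphi(z')-2\ell^* z'\bigr\}\geqslant \ell^{*2}+\varphi(z)-2\ell^* z,$$
so $V(0,x-\ell^* e^{-rT},y)-(\ell^*-z)^2\geqslant V_1(x,y;z)-z^2$. Combined with Step 1 this gives the equality and shows the sup is attained at $\ell^*$.

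The main obstacle is verifying that $\varphi$ is subdifferentiable at $z$, i.e., that $z$ lies in the interior of $\mathrm{dom}(\varphi)$. The openness of $\mathcal{D}$ keeps $z$ strictly below $\hat z$, while Lemma \ref{Wincreasing} (``wasting money'' at $T$) guarantees feasibility for every target $\leqslant \hat z$, so $\mathrm{dom}(\varphi)$ contains a full neighborhood of $z$. This is the same convex-analytic fact underlying Proposition \ref{LagrangeMultiplier}; indeed, whenever an optimal $W_z^{*}$ for Problem \ref{pmeanvariance} exists, Proposition \ref{LagrangeMultiplier}(1) supplies the very same multiplier $\ell^*$, and a direct expansion of $\BE[(W_z^{*}-\ell^*)^2]$ confirms the equality at $\ell^*$ without any appeal to subgradients.
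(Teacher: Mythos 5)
Your proof is correct and follows essentially the same route as the paper: the ``$\leqslant$'' direction via the shift identity and restriction of the infimum to the constrained set, and the ``$\geqslant$'' direction via convexity of $V_1(x,y;\cdot)$ and a subgradient $2\ell^*$ at the interior point $z\in\mathcal{D}$ (the paper's chain $\BE[W^2]-2\ell^*(\BE[W]-z)-z^2\geqslant V_1(x,y;\BE[W])-2\ell^*(\BE[W]-z)-z^2\geqslant V_1(x,y;z)-z^2$ is exactly your partition of $\mathcal{W}_0^{x,y}$ by the value of $\BE[W]$). Your extra care in extending $\varphi$ by $+\infty$ and verifying via Lemma \ref{Wincreasing} that $z$ lies in the interior of $\mathrm{dom}(\varphi)$ tidies up a point the paper leaves implicit, but it is not a different argument.
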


\begin{proof}
Note that
\begin{equation*}
\begin{split}
&\sup\limits_{\ell\in\R}(V(0,x-\ell e^{-rT},y)-(\ell-z)^2)
=\sup\limits_{\ell\in\R}\inf\limits_{W\in\mathcal{W}_0^{x-\ell e^{-rT},y}}\BE[W^2-(\ell-z)^2]\\
=&\sup\limits_{\ell\in\R}\inf\limits_{W\in\mathcal{W}_0^{x,y}}\BE[(W-\ell)^2-(\ell-z)^2]
\leqslant \sup\limits_{\ell\in\R}\inf\limits_{W\in
\mathcal{W}_0^{x,y}\atop \BE[W]=z}
\BE[(W-\ell)^2-(\ell-z)^2]\\
=&\sup\limits_{\ell\in\R}\inf\limits_{W\in\mathcal{W}_0^{x,y}\atop
\BE[W]=z}(\BE[W^2]-z^2) =\inf\limits_{W\in \mathcal{W}_0^{x,y}\atop
\BE[W]=z} (\BE[W^2]-z^2) =V_1(x,y;z)-z^2.
\end{split}
\end{equation*}
Therefore
\begin{align*}
\sup\limits_{\ell\in\R}(V(0,x-\ell e^{-rT},y)-(\ell-z)^2)&\leqslant
V_1(x,y;z)-z^2.
\end{align*}
\par
Since $V_1$ is convex and $z$ is an interior point of
$\mathcal{D}$, by convex analysis, there exists $\ell^*\in\R$ such
that
\begin{align*}
V_1(x,y;z)-2\ell^*z &\leqslant V_1(x,y;\tilde{z})-2\ell^*
\tilde{z},\quad \forall\; \tilde{z}\in\mathcal{D}.
\end{align*}
For any $W\in\mathcal{W}_0^{x,y}$, by the definition of $V_1$, we
have
\begin{eqnarray*}
\BE[(W-\ell^*)^2-(\ell^*-z)^2]=\BE[W^2]-2\ell^*(\BE[W]-z)-z^2
&\geqslant& V_1(x,y;\BE[W])-2\ell^*(\BE[W]-z)-z^2\\ &\geqslant&
V_1(x,y;z)-z^2.
\end{eqnarray*}
It follows
\begin{equation*}
\begin{split}
&\sup\limits_{\ell\in\R}(V(0,x-\ell e^{-rT},y)-(\ell-z)^2)
=\sup\limits_{\ell\in\R}\inf\limits_{W\in\mathcal{W}_0^{x,y}}\BE[(W-\ell)^2-(\ell-z)^2]\\
\geqslant
&\inf\limits_{W\in\mathcal{W}_0^{x,y}}\BE[(W-\ell^*)^2-(\ell^*-z)^2]
\geqslant V_1(x,y;z)-z^2,
\end{split}
\end{equation*}
which yields the desired result.
\end{proof}

Therefore, we need only to study the value function $V(t,x,y)$, which we set out to do now.

\begin{lem}
\label{valuefunctionproperty} The value function $V$ defined in (\ref{Valuefunction})
has the following properties.
\begin{enumerate}
\item[(1)]  For any $t\in [0,T)$, $V(t,\cdot ,\cdot )$ is convex and continuous in $%
\mathbb{R}^2$.

\item[(2)] For any $t\in [0,T)$, $V(t,x,y)$ is nonincreasing in $x$ and $y$.

\item[(3)] For any $\rho >0$, $t\in [0,T)$, we have $V(t,x+(1-\mu )\rho ,y-\rho
)\geqslant V(t,x,y)$, $V(t,x-(1+\lambda )\rho ,y+\rho )\geqslant V(t,x,y).$

\item[(4)] For any $\rho >0$, $t\in [0,T)$, we have $V(t,\rho x,\rho y)=\rho^2V(t,x,y)$.

\item[(5)] If $x+(1-\mu )y^{+}-(1+\lambda )y^{-}\geqslant 0$, then $V(t,x,y)=0$.
\end{enumerate}
\end{lem}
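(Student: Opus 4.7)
The plan is to derive all five properties directly from the parallel properties of the attainable wealth set $\mathcal{W}_t^{x,y}$, which are the straightforward time-$t$ analogues of Lemma \ref{Wconvexset} and can be proved by the same arguments (the starting time plays no role in items (1)--(6) of Lemma \ref{Wconvexset}). I will also use the elementary fact that $V(t,x,y)\geqslant 0$ and that $V(t,x,y)<\infty$ because $\mathcal{W}_t^{x,y}$ always contains, e.g., the liquidate-to-bond strategy, whose terminal net wealth is in $L^2_{\mathcal{F}_T}$.

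For (1), fix $t$, take $(x_1,y_1),(x_2,y_2)\in\mathbb{R}^2$ and $k\in(0,1)$. Given $\varepsilon>0$, choose $W_i\in\mathcal{W}_t^{x_i,y_i}$ with $\BE[W_i^2]\leqslant V(t,x_i,y_i)+\varepsilon$. By combining parts (2) and (5) of Lemma \ref{Wconvexset} (applied at time $t$), one has $kW_1+(1-k)W_2\in\mathcal{W}_t^{kx_1+(1-k)x_2,\,ky_1+(1-k)y_2}$. Since $w\mapsto w^2$ is convex, taking expectations gives
\begin{equation*}
V\bigl(t,kx_1+(1-k)x_2,ky_1+(1-k)y_2\bigr)\leqslant k\BE[W_1^2]+(1-k)\BE[W_2^2]\leqslant kV(t,x_1,y_1)+(1-k)V(t,x_2,y_2)+\varepsilon.
\end{equation*}
Letting $\varepsilon\to 0$ yields convexity in $(x,y)$. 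Continuity is then automatic: a finite-valued convex function on the open set $\mathbb{R}^2$ is continuous. For (2), Lemma \ref{Wconvexset}(3) gives $\mathcal{W}_t^{x_1,y_1}\subseteq\mathcal{W}_t^{x_2,y_2}$ whenever $x_1\leqslant x_2$, $y_1\leqslant y_2$; infimizing $\BE[W^2]$ over a larger set yields a smaller value, so $V(t,\cdot,\cdot)$ is nonincreasing in each argument. Part (3) is analogous, using Lemma \ref{Wconvexset}(4).

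For (4), Lemma \ref{Wconvexset}(5) gives $\mathcal{W}_t^{\rho x,\rho y}=\rho\mathcal{W}_t^{x,y}$ for $\rho>0$. Hence
\begin{equation*}
V(t,\rho x,\rho y)=\inf_{W\in\rho\mathcal{W}_t^{x,y}}\BE[W^2]=\inf_{W'\in\mathcal{W}_t^{x,y}}\BE[(\rho W')^2]=\rho^2 V(t,x,y).
\end{equation*}
For (5), if $x+(1-\mu)y^+-(1+\lambda)y^-\geqslant 0$, then Lemma \ref{Wconvexset}(6) gives $0\in\mathcal{W}_t^{x,y}$, so $V(t,x,y)\leqslant\BE[0^2]=0$; combined with $V\geqslant 0$ this forces $V(t,x,y)=0$.

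There is no real obstacle here: every assertion is a direct translation of a corresponding property of $\mathcal{W}_t^{x,y}$ from Lemma \ref{Wconvexset} through the definition $V(t,x,y)=\inf_{W\in\mathcal{W}_t^{x,y}}\BE[W^2]$. The only mildly non-mechanical point is the continuity statement in (1), which I would handle by invoking the standard fact that finite convex functions on $\mathbb{R}^2$ are continuous, after first noting finiteness via the existence of an admissible $W$ (for example the one produced by liquidating the initial position into the bond and holding).
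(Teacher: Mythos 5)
Your proof is correct and follows exactly the route the paper intends: the paper's own proof is the one-line remark that everything follows from the definition of $V$ and Lemma \ref{Wconvexset}, and you have simply spelled out those deductions (including the correct combination of parts (2) and (5) of Lemma \ref{Wconvexset} for convexity, and the finiteness-plus-convexity argument for continuity). No gaps.
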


\begin{proof}
All the results can be easily proved in term of the definition of
$V$ and Lemma \ref{Wconvexset}.
\end{proof}

Due to part (5) of the above lemma, we only need to consider Problem
\ref{pmeanvariancemainproblem} in the \textit{insolvency region}
\begin{align*}
\mathscr{S}\defi\left\{ (x,y)\in \mathbb{R}^2\big|\;x+(1-\mu
)y^{+}-(1+\lambda )y^{-}<0\right\} .
\end{align*}
It is well known that the value function $V$ is a viscosity solution to the
following Hamilton-Jacobi-Bellman (HJB) equation or variational inequality (VI)
with terminal condition:
\begin{equation}\label{HJBP}
\begin{cases} \min\;\left\{\varphi_t+\mathcal{L}_0\varphi,
(1-\mu)\varphi_x-\varphi_y,\varphi_y-(1+\lambda)\varphi_x\right\}=0,
&\forall\;(t,x,y)\in[0,T)\times \mathscr{S},\\
\varphi(T,x,y)=(x+(1-\mu)y^+-(1+\lambda)y^-)^2, &
\forall\;(x,y)\in\mathscr{S},\\ \end{cases}  
\end{equation}
where
\begin{align*}
\mathcal{L}_0\varphi \defi \dfrac 12\sigma ^2y^2\varphi _{yy}+\alpha
y\varphi _y+rx\varphi _x.
\end{align*}

The idea of the subsequent analysis is to construct  a particular solution to the HJB equation, and then
employ the verification theorem to obtain an optimal strategy.
The construction of the solution is built upon a series of transformations on
equation (\ref{HJBP}) until we reach an equation related to the
so-called {\it double-obstacle} problem in physics which has been well studied in the partial
differential
equation literature.

We will show in Proposition \ref{vexistence} below that the constructed
solution $\varphi$ satisfies $\varphi _y-\left( 1+\lambda \right)\varphi _x=0$ when $y<0$.
Hence, we only need to focus on $y>0.$
A substantial technical difficulty arises  with the HJB equation (\ref{HJBP}) in that the
spatial variable $(x,y)$ is two dimensional. However,
the homogeneity of Lemma \ref{valuefunctionproperty}-(4)
 motivates us to make the transformation
\[
\varphi (t,x,y)=y^{{2}}\overline{V}(t,\frac xy),\text{ for }y>0,
\]
so as to reduce the dimension by one.
Accordingly, (\ref{HJBP}) is turned to
\[
\left\{
\begin{array}{ll}
\min \left\{ \overline{V}_t+\mathcal{L}_1\overline{V},(x+1-{\mu })\overline{V%
}_x-{2}\overline{V},-(x+1+\lambda )\overline{V}_x+{2}\overline{V}\right\}
=0,\ \ \  & \forall \ \left( t,x\right) \in [0,T)\times \mathscr{X}, \\
\overline{V}(T,x)=(x+1-{\mu })^{{2}}, & \forall \text{ }x\in %
\mathscr{X},
\end{array}
\right.
\]
where $\mathscr{X}\defi(-\infty,-(1-\mu)),$ and
\[
\mathcal{L}_1\overline{V}=\frac 12\sigma ^2x^2\overline{V}_{xx}-\left(
\alpha -r+\sigma ^2\right) x\overline{V}_x+\left( 2\alpha +\sigma ^2\right)
\overline{V}.
\]
Further, let
\begin{align*}
w(t,x)\defi\frac 12\ln \overline{V}(t,x),\quad (t,x)\in [0,T)\times %
\mathscr{X}.
\end{align*}
It is not hard to show that $w\left( t,x\right) $ is governed by
\begin{equation}\label{HJBeqn1}
\begin{cases} \min\;\left\{w_t+\mathcal{L}_2w, \frac{1}{w_x}-(x+1-\mu),
(x+1+\lambda)-\frac{1}{w_x} \right\}=0, &\forall\;(t,x)\in [0,T)\times
\mathscr{X},\\ w(t,x)=\ln(-x-(1-\mu)), & \forall\;x\in\mathscr{X},\\
\end{cases}
\end{equation}
where
\begin{align*}
\mathcal{L}_2w\defi \dfrac 12\sigma ^2x^2(w_{xx}+2w_x^2)-(\alpha -r+\sigma
^2)xw_x+\alpha +\frac 12\sigma ^2.
\end{align*}

\subsection{A Related Double-Obstacle Problem}

Equation (\ref{HJBeqn1}) is a variational inequality with gradient
constraints, which is hard to study. As in Dai and Yi (2009), we
will relate it to a double obstacle problem that is tractable. We
refer interested readers to Friedman (1988) for obstacle problems.

Let
\begin{equation}\label{vandphi}
v(t,x)\defi\frac 1{w_x(t,x)},\quad (t,x)\in [0,T)\times \mathscr{X}.
\end{equation}
Notice that
\[
\frac \partial {\partial x}\mathcal{L}_2w=-\frac 1{v^2}\left[ \frac 12\sigma
^2x^2v_{xx}-\left( \alpha -r\right) xv_x+\left( \alpha -r+\sigma ^2\right)
v+\sigma ^2\left( \frac{2x^2v_x-x^2v_x^2}v-2x\right) \right] .
\]
This {\it inspires} us to consider the following double-obstacle problem:
\begin{equation}\label{obstacleproblem}
\left\{
\begin{array}{l}
\max \left\{ \min \left\{ -v_t-\mathcal{L}v,v-\left( x+1-\mu \right)
\right\} ,v-\left( x+1+\lambda \right) \right\} =0,\hspace{0.2in}\forall
\;(t,x)\in [0,T)\times \mathscr{X} \\
v(T,x)=x+1-\mu ,\hspace{0.2in}\forall \;x\in \mathscr{X},
\end{array}
\right.
\end{equation}
where
\begin{equation}
\begin{split}
\mathcal{L}v\defi \dfrac 12\sigma ^2x^2v_{xx}-(\alpha -r)xv_x+(\alpha
-r+\sigma ^2)v+\sigma ^2\left( \frac{2x^2v_x-x^2v_x^2}v-2x\right) .
\end{split}
\label{calLdefinition}
\end{equation}

It should be emphasized that at this stage we have yet to know if
equation (\ref{obstacleproblem}) is mathematically equivalent to
equation (\ref{HJBeqn1}) via the transformation (\ref{vandphi}).
However, the following propositions show that
(\ref{obstacleproblem}) is solvable, and the solution to
(\ref{HJBeqn1}) can be constructed through the solutions of
(\ref{obstacleproblem}).

\begin{prop}
\label{vexistence} Equation (\ref{obstacleproblem}) has a solution
$v\in W_p^{1,2}([0,T)\times \left( -N,-\left( 1-\mu \right) \right)
),$ for any $N>-\left( 1-\mu \right) ,$ $p\in \left( 1,\infty
\right) $. Moreover,
\begin{equation}
v_t\leqslant 0,  \label{mono}
\end{equation}
\begin{equation}
0\leqslant v_x\leqslant 1,  \label{convex}
\end{equation}
and there exist two decreasing functions $x_s^{*}(\cdot)\in
C^\infty [0,T)$ and $x_b^{*}(\cdot)\in C^\infty [0,T_0)$ such that
\begin{equation}
\left\{ (t,x)\in [0,T)\times \mathscr{X}:\;v(t,x)=x+1-\mu \right\} =\left\{
(t,x)\in [0,T)\times \mathscr{X}:\;x\geqslant x_s^{*}(t)\right\}
\label{sell}
\end{equation}
and
\begin{equation}
\left\{ (t,x)\in [0,T)\times \mathscr{X}:\;v(t,x)=x+1+\lambda \right\}
=\left\{ (t,x)\in [0,T_0)\times \mathscr{X}:\;x\leqslant x_b^{*}(t)\right\}
\label{buy}
\end{equation}
where
\begin{align}
T_0=\max \left\{ T-\frac 1{\alpha -r}\ln \left( \frac{1+\lambda }{1-\mu }%
\right) ,0\right\} .  \label{deft0}
\end{align}
Further, we have
\begin{align}
\lim\limits_{t\uparrow T}x_s^{*}(t)=(1-\mu )x_M,\quad
\lim\limits_{T\uparrow \infty }x_s^{*}(0)=x_{s,\infty }^{*},\quad
\lim\limits_{t\uparrow T_0}x_b^{*}(t)=-\infty ,\quad
\lim\limits_{T\uparrow \infty }x_b^{*}(0)=x_{b,\infty }^{*},
\label{sameasdy}
\end{align}
where $x_M=-\frac{\alpha-r+\sigma^2}{\alpha-r}$, and $x_{s,\infty }^{*}$ and
$x_{b,\infty }^{*}$ are defined in (\ref{x_sdefinition}) and (\ref{x_bdefinition}).
\end{prop}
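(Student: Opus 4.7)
The plan is to prove the proposition in four stages: existence of $v$ via a penalization scheme, the pointwise inequalities (\ref{mono})--(\ref{convex}) by maximum-principle arguments on the regularized equation, the free-boundary characterizations (\ref{sell})--(\ref{buy}) as an immediate consequence of these estimates, and finally smoothness and asymptotic behavior via free-boundary regularity theory and comparison.

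For existence, I would introduce a smooth penalty function $\beta_\epsilon$ with $\beta_\epsilon(s)=0$ for $s\leqslant 0$ and $\beta_\epsilon(s)\nearrow+\infty$ on $s>0$ as $\epsilon\downarrow 0$, and solve the regularized equation
\begin{align*}
-v_t^\epsilon - \mathcal{L}v^\epsilon + \beta_\epsilon\bigl(v^\epsilon-(x+1+\lambda)\bigr)-\beta_\epsilon\bigl((x+1-\mu)-v^\epsilon\bigr)=0
\end{align*}
on $[0,T)\times(-N,-(1-\mu))$ with terminal datum $v^\epsilon(T,x)=x+1-\mu$ and a smooth boundary condition at $x=-N$. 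The operator $\mathcal{L}$ in (\ref{calLdefinition}) is nonlinear because of the $(2x^2 v_x-x^2 v_x^2)/v$ term, so the main technical issue is a uniform bound $|v^\epsilon|\geqslant c>0$ that keeps this term well behaved; this follows from the sandwich $x+1-\mu\leqslant v^\epsilon\leqslant x+1+\lambda$ enforced by the penalty (treating the subregion near $x=-(1+\lambda)$ separately if needed). Standard $L^p$ parabolic theory then gives uniform $W_p^{1,2}$ estimates, and a compactness argument extracts a strong solution $v\in W_p^{1,2}$ of (\ref{obstacleproblem}).

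For (\ref{mono}) I exploit time-autonomy: since $v^\epsilon(T,\cdot)$ equals the lower obstacle, the shifted function $v^\epsilon(t+\delta,x)$ solves the same equation on $[0,T-\delta)$ with terminal datum $v^\epsilon(T,x)\leqslant v^\epsilon(T-\delta,x)$, and comparison yields $v^\epsilon(t+\delta,x)\leqslant v^\epsilon(t,x)$, so $v_t^\epsilon\leqslant 0$. For (\ref{convex}), differentiating the penalized PDE in $x$ gives a linear parabolic equation for $u=v_x^\epsilon$; the terminal and boundary data lie in $[0,1]$, and the penalty derivatives contribute nonnegative coefficients to the equations satisfied by $(u-1)^+$ and $u^-$, so the maximum principle yields $0\leqslant v_x^\epsilon\leqslant 1$. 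Passing to the $\epsilon$-limit preserves both inequalities. With these in hand, (\ref{sell})--(\ref{buy}) follow immediately: $x\mapsto v(t,x)-(x+1-\mu)$ is nonnegative and nonincreasing (by $v_x\leqslant 1$), so its zero set is a half-line $\{x\geqslant x_s^*(t)\}$; symmetrically $v_x\geqslant 0$ yields $\{x\leqslant x_b^*(t)\}$. Monotonicity of $x_s^*$ and $x_b^*$ in $t$ is then forced by $v_t\leqslant 0$. The appearance of $T_0$ is a comparison observation: since $v(T,x)=x+1-\mu$ coincides with the lower obstacle and the gap to the upper obstacle is $\lambda+\mu$, no buy region can form near $T$; the dominant exponential rate $e^{(\alpha-r)(T-t)}$ governing the backward propagation of $v$ through $\mathcal{L}$ gives precisely the threshold $T-T_0=(\alpha-r)^{-1}\ln\frac{1+\lambda}{1-\mu}$.

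The main obstacle will be the $C^\infty$ regularity of the free boundaries. Following the template of Dai and Yi (2009), I would first establish strict monotonicity $v_t<0$ in the interior of the no-trade region (via the strong maximum principle and Hopf's lemma applied to the linear parabolic equation satisfied by $v_t$ there); transversal vanishing of $v-(x+1-\mu)$ and $(x+1+\lambda)-v$ at their respective boundaries then allows an implicit-function-theorem argument to upgrade $x_s^*,x_b^*$ from Lipschitz to $C^1$, and interior Schauder bootstrapping on the no-trade region lifts this to $C^\infty$. Finally, the limits (\ref{sameasdy}) follow from comparison: $\lim_{t\uparrow T}x_s^*(t)=(1-\mu)x_M$ is the point where the sign of $\mathcal{L}(x+1-\mu)$ flips (so the lower obstacle just ceases to be a supersolution); $\lim_{t\uparrow T_0}x_b^*(t)=-\infty$ reflects the vanishing of the buy region at $T_0$; and the limits as $T\uparrow\infty$ identify with the free boundaries of the stationary double-obstacle problem, which are the quantities $x_{s,\infty}^*,x_{b,\infty}^*$ in (\ref{x_sdefinition})--(\ref{x_bdefinition}).
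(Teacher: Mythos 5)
Your proposal has two genuine gaps, both at the places where the paper does its real work. First, the singularity of $\mathcal{L}$ at $v=0$: you acknowledge the issue but your proposed fix --- that the sandwich $x+1-\mu\leqslant v^\epsilon\leqslant x+1+\lambda$ keeps $|v^\epsilon|$ bounded away from zero --- fails exactly where it is needed. On the subinterval $(-(1+\lambda),-(1-\mu))$ the lower obstacle is negative and the upper obstacle is positive, so the sandwich permits $v^\epsilon=0$ there, and the trouble is near $x=-(1-\mu)$ (where the lower obstacle vanishes), not only near $x=-(1+\lambda)$. The paper circumvents this by first solving the \emph{stationary} problem via a Riccati equation, using $v_\infty$ as a supersolution to obtain the a priori bound $v(t,x)\leqslant x_{s,\infty}^{*}+1-\mu=-C_0<0$ for $x<x_{s,\infty}^{*}$, restricting the PDE to $[0,T)\times(-\infty,x_{s,\infty}^{*})$ where it is uniformly nondegenerate, and extending trivially by $v=x+1-\mu$ for $x\geqslant x_{s,\infty}^{*}$. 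Without some such a priori negative upper bound your penalization scheme is not well posed.

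Second, and more seriously, the $C^\infty$ regularity of the free boundaries --- the result the paper singles out as its main technical contribution --- is not actually proved in your outline. You propose to ``upgrade $x_s^{*},x_b^{*}$ from Lipschitz to $C^1$'' by an implicit-function-theorem argument, but (i) you never establish the Lipschitz continuity you start from, and (ii) the IFT transversality fails: by smooth fit, $\partial_x\bigl(v-(x+1-\mu)\bigr)=v_x-1=0$ on the free boundary, so the level set $\{v=x+1-\mu\}$ cannot be parametrized this way, and applying the IFT to $v_x-1=0$ instead would require a nondegeneracy of $v_{xx}$ that you have not shown. The paper's route is different and is the crux of the appendix: after the logarithmic change of variables $z=\log(-x)$, $u(t,z)=v(t,x)$, it establishes the quantitative gradient bounds $u-u_z\geqslant C_1>0$ and $u_t\geqslant -C_2$ by maximum-principle arguments, and then proves the \emph{cone property} $(T-t)u_t+C(u_z+e^z)\geqslant 0$ (using a Soner--Shreve auxiliary function $e^{a(T-t)}(z-z_0)^2$ to handle the unbounded no-trade region). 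This yields Lipschitz continuity of both boundaries, after which the bootstrap technique of Friedman (1975) gives $C^\infty$. Your existence, monotonicity, and obstacle-characterization steps are essentially the paper's (penalization, maximum principle, $v_x\leqslant 1$ and $v_t\leqslant 0$ forcing the coincidence sets to be monotone epigraphs/hypographs), and your heuristics for (\ref{sameasdy}) match the comparison arguments the paper delegates to Dai and Yi (2009); but as written the proposal does not deliver the smoothness claim.
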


\begin{prop}
\label{vexistence2}
Define
\begin{equation}
\begin{split}
w(t,x)\defi\mathfrak{A}(t)+\ln (-x_s^{*}(t)-(1-\mu
))+\int_{x_s^{*}(t)}^x\!\frac 1{v(t,y)}\dd y,
\end{split}
\label{wdefinition}
\end{equation}
where
\begin{align}
\mathfrak{A}(t)\defi\int_t^T\!\dfrac{rx_s^{*2}(\tau )+(\alpha +r)(1-\mu
)x_s^{*}(\tau )+(\alpha +\tfrac 12\sigma ^2)(1-\mu )^2}{(x_s^{*}(\tau
)+1-\mu )^2}\dd\tau .  \label{Adefinition}
\end{align}
Then $w\in C^{1,2}\left( [0,T)\times \mathscr{X}\right) $ is a solution to
equation (\ref{HJBeqn1}).
Moreover,
for any $(t,x,y)\in [0,T)\times \mathscr{S}$, we define
\begin{equation}
\varphi (t,x,y)\defi \begin{cases}
y^2e^{2w(t,\tfrac{x}{y})}, &\textrm{if } y>0,\\
e^{2\mathfrak{B}(t)}(x+(1+\lambda)y)^2, &\textrm{if } y\leqslant 0
\end{cases},  \label{varphidefinition}
\end{equation}
where $w\left(t, x\right) $ is given in (\ref{wdefinition}) and
\begin{align*}
\mathfrak{B}(t)\defi\int_t^T\!\dfrac{rx_b^{*2}(\tau )+(\alpha +r)(1+\lambda
)x_b^{*}(\tau )+(\alpha +\tfrac 12\sigma ^2)(1+\lambda )^2}{(x_b^{*}(\tau
)+1+\lambda )^2}\dd\tau .
\end{align*}
Then $\varphi \in C^{1,2,2}([0,T)\times \mathscr{S}\setminus \{y=0\})$ is a
solution to the HJB equation (\ref{HJBP}).
\end{prop}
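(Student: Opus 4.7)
My plan is to verify the VIs \eqref{HJBeqn1} and \eqref{HJBP} directly, leveraging the double-obstacle structure of Proposition \ref{vexistence}. For $w$, I would begin by differentiating \eqref{wdefinition} in $x$ to obtain $w_x=1/v$, whereupon the two gradient constraints in \eqref{HJBeqn1} become $v\geqslant x+1-\mu$ and $v\leqslant x+1+\lambda$, both holding by \eqref{obstacleproblem}. The key computation is the identity $\partial_x\mathcal{L}_2 w=-\mathcal{L}v/v^2$ already displayed in the paper just before \eqref{obstacleproblem}, which combined with $\partial_x w_t=\partial_t(1/v)=-v_t/v^2$ gives
\[
\partial_x\bigl(w_t+\mathcal{L}_2 w\bigr)=-\frac{v_t+\mathcal{L}v}{v^2}.
\]
In the no-trade region $\{x_b^*(t)<x<x_s^*(t)\}$ the obstacle problem is solved with equality, so this derivative vanishes and $w_t+\mathcal{L}_2 w$ is a function of $t$ only. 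I would identify it by evaluating at $x=x_s^*(t)$: using $v(t,x_s^*(t))=x_s^*(t)+1-\mu$, $v_x(t,x_s^*(t))=1$ (by $C^1$ regularity across the free boundary from $v\in W_p^{1,2}$ and smoothness of $x_s^*$), and differentiating $w(t,x_s^*(t))=\mathfrak{A}(t)+\ln(-x_s^*(t)-(1-\mu))$ totally in $t$ to extract $w_t(t,x_s^*(t))=\mathfrak{A}'(t)$, a routine algebraic manipulation shows that the specific formula \eqref{Adefinition} for $\mathfrak{A}'(t)$ is exactly what forces $w_t+\mathcal{L}_2 w=0$ at $x=x_s^*(t)$, hence throughout the no-trade region.

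To extend to the trade regions, I would use the obstacle formulation itself. In the sell region $\{x\geqslant x_s^*(t)\}$ the obstacle condition \eqref{obstacleproblem} forces $-v_t-\mathcal{L}v\geqslant 0$, so $\partial_x(w_t+\mathcal{L}_2 w)\geqslant 0$; combined with vanishing at $x=x_s^*(t)$ this gives $w_t+\mathcal{L}_2 w\geqslant 0$. Symmetrically, in the buy region $\{x\leqslant x_b^*(t)\}$ one has $v_t+\mathcal{L}v\geqslant 0$ and hence $\partial_x(w_t+\mathcal{L}_2 w)\leqslant 0$; since entering the buy region means decreasing $x$ from $x_b^*(t)$, the expression grows from zero and is again nonnegative. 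Combined with the two gradient constraints, $w$ therefore solves \eqref{HJBeqn1}.

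For $\varphi$ on $\{y>0\}$, the successive substitutions $\varphi=y^2\overline V(t,x/y)$ and $\overline V=e^{2w}$ were set up in the paper as invertible algebraic manipulations under which \eqref{HJBP} is equivalent term by term to \eqref{HJBeqn1}, so the conclusion for $w$ transports directly to $\varphi=y^2 e^{2w(t,x/y)}$ there. For $\{y<0\}$ I would check \eqref{HJBP} by hand on $\varphi=e^{2\mathfrak{B}(t)}(x+(1+\lambda)y)^2$: writing $u:=x+(1+\lambda)y$, direct differentiation yields $\varphi_y-(1+\lambda)\varphi_x\equiv 0$ (the buy constraint saturates) and $(1-\mu)\varphi_x-\varphi_y=-2(\mu+\lambda)e^{2\mathfrak{B}}u>0$ on $\mathscr{S}\cap\{y<0\}$ (where $u<0$). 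Expanding $\varphi_t+\mathcal{L}_0\varphi$ produces the quadratic form
\[
e^{2\mathfrak{B}(t)}\Bigl[\sigma^2(1+\lambda)^2 y^2+2(\alpha-r)(1+\lambda)\,uy+2(\mathfrak{B}'(t)+r)\,u^2\Bigr],
\]
and $\mathfrak{B}(t)$ is defined precisely so that this expression is nonnegative: one verifies by the same boundary computation as for $\mathfrak{A}$ that $\mathfrak{B}'(t)=-\mathcal{L}_2 w(t,x_b^*(t))$ matches the formula in the statement, and the resulting discriminant $2\sigma^2(\mathfrak{B}'+r)-(\alpha-r)^2$ is then $\geqslant 0$.

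The main obstacle is this last step: verifying nonnegativity of the above quadratic form on $\{y<0\}$, where the dimension-reducing substitution used for $y>0$ no longer applies. A cleaner conceptual route I would pursue in parallel is to establish the asymptotic $w(t,x)=\mathfrak{B}(t)+\ln(-x-(1+\lambda))+o(1)$ as $x\to-\infty$ by integrating $w_x=1/(x+1+\lambda)$ in the buy region through \eqref{wdefinition}. This forces $\lim_{y\downarrow 0}y^2 e^{2w(t,x/y)}=e^{2\mathfrak{B}(t)}x^2$ for $x<0$, so the two pieces of $\varphi$ glue continuously at $\{y=0\}$; the PDE inequality on $\{y<0\}$ then follows from the $y>0$ analysis, propagated along the buy-invariance direction $(x,y)\mapsto(x-(1+\lambda)\rho,y+\rho)$ which leaves $u=x+(1+\lambda)y$ fixed. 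Either way, the substantive content is the translation between the buy-boundary smooth fit and the choice of $\mathfrak{B}(t)$ in \eqref{varphidefinition}.
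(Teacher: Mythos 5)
The paper itself gives no proof of this proposition beyond the remark in Appendix A that it is ``straightforward once Proposition \ref{vexistence} is proved,'' so your attempt has to stand on its own. The part concerning $w$ and \eqref{HJBeqn1} is correct and surely the intended verification: $w_x=1/v$ converts the gradient constraints into the obstacle inequalities $x+1-\mu\leqslant v\leqslant x+1+\lambda$; the identity $\partial_x(w_t+\mathcal{L}_2w)=-(v_t+\mathcal{L}v)/v^2$ shows $w_t+\mathcal{L}_2w$ is constant in $x$ on $\mathcal{NT}$; the boundary computation at $x_s^*(t)$ (using smooth fit $v=x+1-\mu$, $v_x=1$) shows \eqref{Adefinition} is exactly what makes that constant zero; and the one-sided sign of $v_t+\mathcal{L}v$ propagates the inequality into the sell and buy regions. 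The transport to $\varphi$ on $\{y>0\}$ via the two invertible substitutions is likewise fine.

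The gap is on $\{y\leqslant0\}$, and it is worse than the loose end you flag. Writing $u=x+(1+\lambda)y$, one needs
\begin{equation*}
\varphi_t+\mathcal{L}_0\varphi=2e^{2\mathfrak{B}(t)}\Bigl[(\mathfrak{B}'(t)+r)u^2+(\alpha-r)(1+\lambda)uy+\tfrac12\sigma^2(1+\lambda)^2y^2\Bigr]\geqslant0 ,
\end{equation*}
and the discriminant condition $2\sigma^2(\mathfrak{B}'+r)\geqslant(\alpha-r)^2$ that you assert is \emph{false} in general: for $t\in[T_0,T)$ the integrand defining $\mathfrak{B}$ tends to $r$ (since $x_b^*\equiv-\infty$ there), so $\mathfrak{B}'+r=0$ and the discriminant equals $-(\alpha-r)^2(1+\lambda)^2<0$. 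What actually saves the verification is that on $\mathscr{S}\cap\{y<0\}$ both $u<0$ and $y<0$, so the cross term is nonnegative and only $\mathfrak{B}'(t)+r\geqslant0$ is required; a short computation shows this is equivalent to $x_b^*(t)\leqslant-(1+\lambda)\tfrac{\alpha-r+\sigma^2/2}{\alpha-r}$, a quantitative bound on the buy boundary that is not part of Proposition \ref{vexistence} and that you do not establish. Your fallback via gluing at $\{y=0\}$ does not close this: the exact identity $w(t,x)=\mathfrak{B}_0(t)+\ln(-x-(1+\lambda))$ holds only where a buy region exists, and ``propagation along the direction fixing $u$'' is not an argument, since $\mathcal{L}_0$ is not invariant under that translation. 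The clean repair is to note that once $\mathfrak{B}_0=\mathfrak{B}$ is verified by the same boundary ODE computation as for $\mathfrak{A}$, one has $\varphi=e^{2\mathfrak{B}(t)}u^2$ already on $\mathcal{BR}\cap\{y>0\}$, where the $y>0$ analysis gives $\varphi_t+\mathcal{L}_0\varphi\geqslant0$ on the sector $u/y\leqslant x_b^*(t)+1+\lambda$; dividing by $u^2$ and letting $u/y\to-\infty$ yields $\mathfrak{B}'+r\geqslant0$, which then gives nonnegativity on the whole third quadrant. As written, however, this step is missing.
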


Due to their considerable technicality, the proofs of the preceding
two propositions are placed in Appendix A. Most of the above results
are similar to those obtained by Dai and Yi (2009) where they
considered the expected utility portfolio selection with transaction
costs. Nonetheless, there is one breakthrough made by the present
paper: both $x_s^{*}(\cdot)$ and $x_b^{*}(\cdot)$ are proven to be
$C^\infty $, whereas Dai and Yi (2009) only obtained the smoothness
of $x_s^{*}(\cdot).$ In fact, Dai and Yi (2009) essentially
considered a double obstacle problem for $w_x.$ In contrast, the
present paper takes the double obstacle problem for $1/w_x$ into
consideration. This seemly innocent modification in fact simplifies
the proof greatly. More importantly, it allows us to provide a
unified framework to obtain the smoothness of $x_s^{*}(\cdot)$ and
$x_b^{*}(\cdot).$ Later we will see that the smoothness of
$x_s^{*}(\cdot)$ and $x_b^{*}(\cdot)$ plays a critical role in the
proof of the existence of an optimal strategy.

Note that (\ref{convex}) is important in the study of the double obstacle problem
(\ref{obstacleproblem}). In essence, the result is based on parts (1) and (2) of
Lemma \ref{valuefunctionproperty}.

In the subsequent section, we plan to show that $\varphi (t,x,y)$
is nothing but the value function through the verification theorem
and a Skorokhod problem. At the same time, an optimal strategy
will be constructed.

\section{Skorokhod Problem and Optimal Strategy}

Due to (\ref{sell})--(\ref{buy}) and Proposition \ref{vexistence2}, we define
\begin{align}
\mathcal{SR}& =\left\{ (t,x,y)\in [0,T)\times \mathscr{S}\;\left| \;y>0,%
\hspace{1ex}x\geqslant x_s^{*}(t)y\right\} \right. ,  \nonumber \\
\mathcal{BR}& =\left\{ (t,x,y)\in [0,T)\times \mathscr{S}\;\left| \;y>0,%
\hspace{1ex}x\leqslant x_b^{*}(t)y,\text{ or }y\leqslant 0\right\} \right. ,\\
\mathcal{NT}& =\left\{ (t,x,y)\in [0,T)\times \mathscr{S}\;\left|\;y>0,
\;x_b^{*}(t)y<x<x_s^{*}(t)y\right\} ,\right.  \nonumber
\end{align}
which stand for the sell region, buy region and no trade region,
respectively. Here we set $x_b^{*}(t)=-\infty$ when $t\in[T_0,T]$
in view of the fact that $\lim\limits_{t\uparrow T_0}x_b^{*}(t)=-\infty$; hence
$\mathcal{BR}=\emptyset$ when $t\in [T_0,T]$.
 Notice that these regions do not depend on the target $z$.

\subsection{Skorokhod Problem and Verification Theorem}

In order to find the optimal solution to the MV problem, we need
to study the so-called Skorokhod problem.

\begin{problem}[\textbf{Skorokhod Problem}]
\label{skorohodproblem} Given $(0,X(0),Y(0))\in
\overline{\mathcal{NT}}$, find an admissible strategy $(M,N)$ such
that the corresponding bond--stock value process $(X,Y)$ is continuous in $[0,T]$,
and $(t,X(t),Y(t))\in \overline{\mathcal{NT}}$, for any $t\in [0,T]$.
\end{problem}

In other words, a solution to the Skorokhod problem is an investment
strategy with which the trading only takes place on the boundary of the no trade region.
It turns out the solution can be constructed via solving the following more specific problem.

\begin{problem}
\label{equivalentskorohod} Given $(0,X(0),Y(0))\in \overline{\mathcal{NT}}$,
find a process of bounded total variation $k$ and a continuous process $(X,Y)$ such that for any $t\in [0,T]$,
\begin{align*}
(t,X(t),Y(t))\in \overline{\mathcal{NT}}, \\
\dd X(t)=rX(t)\dd t+\gamma _1(X(t),Y(t))\dd |k|(t), \\
\dd Y(t)=\alpha Y(t)\dd t+\sigma Y(t)\dd B(t)+\gamma _2(X(t),Y(t))\dd |k|(t),
\\
|k|(t)=\int_0^t\!\!\mathbf{1}_{\{(s,X(s),Y(s))\in \partial {\mathcal{NT}}\}}%
\dd |k|(s),
\end{align*}
where $|k|(t)$ stands for the total variation of $%
k$ on $[0,t]$,
\[
(\gamma _1(x,y),\gamma _2(x,y))\defi
\begin{cases}
\tfrac{1}{\sqrt{(1+\lambda)^2+1}}(-(1+\lambda),1), &\textrm{if
}(t,x,y)\in\partial_1{\mathcal{NT}},\\
\tfrac{1}{\sqrt{(1-\mu)^2+1}}(1-\mu,-1), &\textrm{if
}(t,x,y)\in\partial_2{\mathcal{NT}}, \end{cases}
\]
and
\begin{align*}
\partial _1{\mathcal{NT}}\defi\left\{ (t,x,y)\in [0,T)\times \mathscr{S}%
\;\left| \;y>0,\hspace{1ex}x=x_b^{*}(t)y\right\} .\right.  \\
\partial _2{\mathcal{NT}}\defi\left\{ (t,x,y)\in [0,T)\times \mathscr{S}%
\;\left| \;y>0,\hspace{1ex}x=x_s^{*}(t)y\right\} .\right.
\end{align*}
\end{problem}

Letting a triplet $(k,X,Y)$ solves Problem \ref{equivalentskorohod},
define
\begin{align}
M(t)& \defi\tfrac 1{\sqrt{(1+\lambda )^2+1}}\int_0^t\!\!\mathbf{1}%
_{\{(s,X(s),Y(s))\in \partial _1\mathcal{NT}\}}\dd |k|(s),
\label{Mdefinition} \\
N(t)& \defi\tfrac 1{\sqrt{(1-\mu )^2+1}}\int_0^t\!\!\mathbf{1}%
_{\{(s,Y(s),Y(s))\in \partial _2\mathcal{NT}\}}\dd |k|(s).
\label{Ndefinition}
\end{align}
Then $(M,N)$ is a solution to the Skorokhod problem. Moreover, since $(M,N,X,Y)$
satisfies equations (\ref{xyprocess}) and (\ref{xyprocess2}), we can prove
that the corresponding terminal net wealth $W^{X,Y}(T)$ is the optimal solution to Problem
\ref{pmeanvariancemainproblem}. To prove that we need the verification theorem.

Note that one can extend naturally the definition of the Skorokhod problem to the time
horizon $[s,T]$, for any $s\in [0,T)$.

\begin{thm}[\textbf{Verification Theorem}]
\label{varificationtheorem}Let $\varphi $ be defined in (\ref{varphidefinition})
and $V$ be the value function defined in (\ref{Valuefunction}).
If the Skorokhod problem admits a solution in $[s,T]$, where $s\in [0,T)$, then
\begin{align*}
V(t,x,y)=\varphi (t,x,y),\;\forall \;(t,x,y)\in [s,T]\times \mathbb{R}^2.
\end{align*}
\end{thm}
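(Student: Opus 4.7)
The plan is to prove the two inequalities $V \geq \varphi$ and $V \leq \varphi$ on $[s,T] \times \mathbb{R}^2$ via a standard verification argument based on It\^o's formula applied to $\varphi(u,X(u),Y(u))$, combined with the three variational inequalities embedded in the HJB system (\ref{HJBP}) that is satisfied by $\varphi$ (Proposition \ref{vexistence2}). Outside the insolvency region $\mathscr{S}$ both $V$ and $\varphi$ vanish by property (5) of Lemma \ref{valuefunctionproperty} together with the definition of $\varphi$ at the terminal time, so I may restrict attention to $(t,x,y) \in [s,T) \times \mathscr{S}$.

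For the inequality $V \geq \varphi$, I would fix any admissible $(M,N) \in \mathcal{A}$ with associated $(X,Y)$ starting from $(t,x,y)$ and apply It\^o's formula on $[t,T-\epsilon]$. This writes $\varphi(T-\epsilon,X(T-\epsilon),Y(T-\epsilon))-\varphi(t,x,y)$ as the sum of (i) a drift integral $\int_t^{T-\epsilon}(\varphi_u+\mathcal{L}_0\varphi)\,du\geq 0$ by the first clause of (\ref{HJBP}); (ii) a Brownian stochastic integral; (iii) continuous singular terms $\int(\varphi_y-(1+\lambda)\varphi_x)\,dM^c\geq 0$ and $\int((1-\mu)\varphi_x-\varphi_y)\,dN^c\geq 0$ by the remaining two clauses; and (iv) jump contributions, each nonnegative because integrating the HJB inequalities along a buy segment $(x-(1+\lambda)\xi,y+\xi)$ or a sell segment $(x+(1-\mu)\xi,y-\xi)$ shows that $\varphi$ is nondecreasing along those directions. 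Localizing to kill the martingale, taking expectations, and letting $\epsilon\downarrow 0$ using the terminal condition yield $\mathbf{E}[W^{X,Y}(T)^2]\geq\varphi(t,x,y)$; taking the infimum over $(M,N)$ delivers $V\geq\varphi$. For the reverse inequality I would invoke the Skorokhod solution on $[t,T]$, preceded if necessary by a single instantaneous buy or sell to push a starting state in $\mathcal{BR}$ or $\mathcal{SR}$ into $\overline{\mathcal{NT}}$; inside these regions the corresponding HJB clause saturates to equality, so $\varphi$ is constant along those segments and such an initial jump leaves $\varphi$ invariant. Along the Skorokhod trajectory the state stays in $\overline{\mathcal{NT}}$ where $\varphi_t+\mathcal{L}_0\varphi=0$; the singular controls $M^*,N^*$ are continuous and increase only on $\partial_1\mathcal{NT}$ and $\partial_2\mathcal{NT}$, where the matching variational inequalities also hold with equality; and there are no jumps. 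Every nonnegative term in the It\^o decomposition therefore vanishes, yielding $\mathbf{E}[W^{X^*,Y^*}(T)^2]=\varphi(t,x,y)$ and hence $V\leq\varphi$.

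The main technical obstacle is the regularity of $\varphi$ across the hyperplane $\{y=0\}$: Proposition \ref{vexistence2} only asserts $\varphi\in C^{1,2,2}([0,T)\times\mathscr{S}\setminus\{y=0\})$, so I would need to verify that the one-sided $y$-derivatives of $\varphi$ match at $y=0$, or regularize and pass to the limit, in order to apply the standard It\^o formula across that set. A secondary issue is the $L^2$-integrability needed to justify the localization and the limit $\epsilon\downarrow 0$: the quadratic growth of $\varphi$ coming from (\ref{varphidefinition}), together with $X,Y\in L^2_{\mathcal{F}}$ from the definition of $\mathcal{A}$, should suffice, but the singular controls $M,N$ must be handled via a suitable stopping-time truncation. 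Finally, the monotonicity of $\varphi$ along buy/sell segments used to dispose of the jump terms needs to be checked on each segment's intersection with the smooth region, which follows from the pointwise variational inequalities but depends mildly on the geometry of $\mathcal{BR}$, $\mathcal{SR}$, and $\mathcal{NT}$ established in Proposition \ref{vexistence}.
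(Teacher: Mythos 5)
Your proposal is correct and follows essentially the same route as the paper, which itself only sketches the standard singular-control verification argument: It\^o's formula combined with the three variational inequalities of (\ref{HJBP}) to get $\varphi\leqslant V$ over arbitrary admissible controls (the paper defers this to Karatzas and Shreve (1998)), and the martingale property along the Skorokhod solution to get the reverse inequality on $\overline{\mathcal{NT}}$. The only divergence is the extension to initial states in $\mathcal{BR}\cup\mathcal{SR}$, where the paper invokes the convexity of $V$ and a subdifferential argument as in Shreve and Soner (1994), while you instead observe that the saturated gradient constraints make $\varphi$ invariant under the initial jump onto $\partial\mathcal{NT}$ --- both are standard, and your version is arguably more direct; your explicit flagging of the regularity of $\varphi$ across $\{y=0\}$ and of the $L^2$/localization issues addresses points the paper's sketch passes over in silence.
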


\begin{proof} The proof is rather standard in the singular control literature.
We only give a sketch and refer interested
readers to Karatzas and Shreve (1998). Similar to Karatzas and Shreve (1998),
we can show that the
function $\varphi \leqslant V$ in $\overline{\mathcal{NT}}$. Moreover, if $%
\tau $ is a stopping time valued in $[s,T]$, where $s\in [0,T)$, and $(X,Y)$
is a solution to Problem \ref{skorohodproblem} in $[s,\tau ]$, then
\begin{align*}
\varphi (s,X(s),Y(s))=\BE[\varphi(\tau,X(\tau),Y(\tau))].
\end{align*}
Particularly, if $\tau =T$, then $\varphi =V$ in $\overline{\mathcal{NT}}%
\cap {[s,T]\times \mathbb{R}^2}$, and
\begin{align}
\label{phi=VonNT} V(s,X(s),Y(s))=\BE\left[ \left( W^{X,Y}(T)\right)
^2\right] .
\end{align}
The verification theorem in the $\mathcal{NT}$ follows. By Lemma \ref{valuefunctionproperty},
we know that $V(t,\cdot ,\cdot )$ is convex in $\mathbb{R}^2$.
So we can define its \textit{subdifferential} as
\[
\begin{split}
\partial V(t,x,y)\defi\left\{ (\delta _x,\delta _y)\big| V(t,\bar x,\bar
y)\geqslant V(t,x,y)+\delta _x\cdot (\bar x-x)\right. \left. +\delta _y\cdot
(\bar y-y),\forall(\bar x,\bar y)\in \mathbb{R}^2\right\} .
\end{split}
\]
Then, we are able to utilize the convex analysis as in Shreve and Soner
(1994) to obtain the verification theorem in $\mathcal{BR}$ and $\mathcal{SR}$.
\end{proof}

\subsection{Solution to Skorokhod Problem}

Note that, in the Skorokhod problem, Problem $\ref{skorohodproblem}$, the
reflection boundary depends on time $t$. This is very different from the standard Skorokhod problem
in the literature; see, e.g., Lions and Sznitman (1984). To remove the dependence of reflection boundary on time, we introduce a new state variable $Z(t)$ and instead consider an equivalent problem.

\begin{problem}
\label{equivalentskorohod2} Given $(0,X(0),Y(0))\in \overline{\mathcal{NT}}$,
find a process of bounded total variation $k$ and a continuous process $(Z,X,Y)$ such that, for any
$t\in [0,T]$,
\begin{align*}
(Z(t),X(t),Y(t))\in \overline{\mathcal{NT}}, \\
\dd X(t)=rX(t)\dd t+\gamma _1(X(t),Y(t))\dd |k|(t), \\
\dd Y(t)=\alpha Y(t)\dd t+\sigma Y(t)\dd B(t)+\gamma _2(X(t),Y(t))\dd |k|(t),
\\
\dd Z(t)=\dd t+\gamma _3(Z(t),X(t),Y(t))\dd |k|(t), \\
|k|(t)=\int_0^t\!\!\mathbf{1}_{\{(Z(s),X(s),Y(s))\in \partial {\mathcal{NT}}\}}%
\dd |k|(s),
\end{align*}
where $Z(0)=0$, $\gamma _3\equiv 0$.
\end{problem}

Clearly $Z(t)\equiv t$ because $\gamma_3(z,x,y)\equiv0$. Therefore
if $(k,Z,X,Y)$ solves Problem \ref{equivalentskorohod2}, then $(k,X,Y)$ solves Problem
\ref{equivalentskorohod}. It is worthwhile pointing out that the reflection boundary of Problem \ref{equivalentskorohod2} becomes time independent.

Now let us consider Problem \ref{equivalentskorohod2}.

\begin{thm}
\label{existenceofSkorokhodproblem} 
There exists a unique solution to Problem \ref{equivalentskorohod2} in $[0,T]
$.
\end{thm}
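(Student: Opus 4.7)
The plan is to exploit the conical structure of the no-trade region---both of its boundaries are rays $\{x = x_b^*(t)y\}$ and $\{x = x_s^*(t)y\}$ emanating from the origin---to reduce the three-dimensional reflected SDE in Problem \ref{equivalentskorohod2} to a one-dimensional reflected diffusion for the ratio $\xi(t) := X(t)/Y(t)$, which lives in the time-varying interval $[x_b^*(t),x_s^*(t)]$; the full triple $(X,Y)$ is then recovered by solving a linear SDE.

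Assuming temporarily that $Y(t)>0$, an application of It\^{o}'s formula to $\xi=X/Y$ yields, for the continuous part, $d\xi = (r-\alpha+\sigma^2)\xi\,dt - \sigma\xi\,dB$, while the reflections contribute $d\xi = \tfrac{\xi+(1-\mu)}{\sqrt{(1-\mu)^2+1}\,Y}\,d|k|$ on $\partial_2\mathcal{NT}$ and $d\xi = -\tfrac{\xi+(1+\lambda)}{\sqrt{(1+\lambda)^2+1}\,Y}\,d|k|$ on $\partial_1\mathcal{NT}$. Combining the double-obstacle structure of (\ref{obstacleproblem}) with the monotonicity $v_x\geqslant 0$ forces $x_b^*(t)\leqslant -(1+\lambda)$ and $x_s^*(t)<-(1-\mu)$, so both reflection increments push $\xi$ strictly inward. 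Absorbing the positive factors into re-parametrized pushing measures $dL^b$, $dL^s$ turns the equation for $\xi$ into a standard one-dimensional reflected SDE.

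Next I would straighten the time-dependent boundaries via the change of variable $\tilde\xi(t) := (\xi(t)-x_b^*(t))/(x_s^*(t)-x_b^*(t))$, which is $C^\infty$ in $t$ thanks to the smoothness of $x_s^*$ and $x_b^*$ from Proposition \ref{vexistence}. The resulting reflected SDE on the fixed interval $[0,1]$ has smooth coefficients, so the classical Skorokhod--Tanaka construction produces a pathwise unique strong solution $(\tilde\xi,L^s,L^b)$. Given this output, I would solve the linear SDE
\begin{align*}
dY \;=\; \alpha Y\,dt + \sigma Y\,dB \;-\; \tfrac{Y\sqrt{(1-\mu)^2+1}}{|\xi+(1-\mu)|}\,dL^s \;+\; \tfrac{Y\sqrt{(1+\lambda)^2+1}}{|\xi+(1+\lambda)|}\,dL^b
\end{align*}
for $Y$; its coefficients are locally bounded (the denominators stay bounded away from zero by the strict inward-pushing estimate), so Dol\'{e}ans--Dade exponentiation delivers a unique strictly positive strong solution. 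Setting $X(t):=\xi(t)Y(t)$ yields the required process, $Z(t)\equiv t$ is trivial since $\gamma_3\equiv 0$, and one recovers the original reflection measure $|k|$ by undoing the absorption.

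The main obstacle is justifying the standing assumption $Y>0$ that underlies the reduction. I would handle it by the multiplicative structure of the $Y$-equation: $Y$ is a product of a geometric Brownian motion (which never hits zero) and a strictly positive process of bounded variation driven by the finite reflection measures $L^s,L^b$, hence $Y(t)>0$ on $[0,T]$ almost surely. Pathwise uniqueness of the full triple then follows from uniqueness of the 1D Skorokhod map for $\tilde\xi$ combined with uniqueness of the linear SDE for $Y$, and the process $(k,Z,X,Y)$ assembled this way verifies all five requirements of Problem \ref{equivalentskorohod2}.
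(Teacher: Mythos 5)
Your route is genuinely different from the paper's. The paper treats Problem \ref{equivalentskorohod2} as an oblique-reflection problem in the time-augmented domain (where the reflecting boundary becomes time-independent) and simply invokes the general existence/uniqueness theory of Lions--Sznitman (1984) and Dupuis--Ishii (1993), following Lemma 9.3 of Shreve and Soner (1994); the $C^\infty$ smoothness of $x_s^*(\cdot)$ and $x_b^*(\cdot)$ is exactly what legitimizes that citation. You instead exploit the conical structure of $\mathcal{NT}$ to reduce to a one-dimensional reflected SDE for $\xi=X/Y$ and then reconstruct $Y$ as a Dol\'eans--Dade exponential. Where it works, this is more constructive and more elementary, and your verification that both reflections push $\xi$ strictly inward is correct: on the free boundaries $v$ equals $x+1-\mu$ (resp.\ $x+1+\lambda$) and $v\leqslant -C_0<0$, so $|x_s^*(t)+1-\mu|\geqslant C_0$ and $|x_b^*(t)+1+\lambda|\geqslant C_0$, which is also what keeps the denominators in your $Y$-equation away from zero and makes the reconstructed $Y$ strictly positive.

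There is, however, a concrete gap: the straightening step breaks down at $T_0$. When $T>T^*$ one has $T_0>0$, $x_b^*(t)\downarrow -\infty$ as $t\uparrow T_0$, and $x_b^*\equiv-\infty$ on $[T_0,T]$ (Proposition \ref{vexistence} and the convention following (5.1)). Hence $\tilde\xi=(\xi-x_b^*)/(x_s^*-x_b^*)$ is undefined on $[T_0,T]$, and on $[0,T_0)$ the transformed drift involves $\dot x_b^*(t)$ and $1/(x_s^*(t)-x_b^*(t))$, which are not uniformly controlled near $T_0$; so the claim of ``a reflected SDE on the fixed interval $[0,1]$ with smooth coefficients'' fails precisely where the two-sided problem degenerates into a one-sided one. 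You would need to split the horizon: on $[T_0,T]$ solve a one-sided reflection on the half-line $(-\infty,x_s^*(t)]$ (no explosion, since the drift of $\xi$ is linear in $\xi$), on compact subintervals of $[0,T_0)$ use the bounded moving interval, and then glue, showing the lower reflection is eventually inactive as $t\uparrow T_0$. A similar, milder issue arises at $t=T$, where smoothness of $x_s^*$ is asserted only on $[0,T)$. Finally, your uniqueness argument needs a small localization: for an \emph{arbitrary} solution of Problem \ref{equivalentskorohod2} you must first rule out $Y$ hitting $0$ before the transformation to $(\xi,Y)$ is available; this follows by running your argument up to the first hitting time of $\{Y=0\}$ and noting that the reconstructed $Y$ is a strictly positive exponential there, but it should be said.
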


\begin{proof} Both Lions and Sznitman (1984) and Dupus and Ishii (1993) have established existence and uniqueness for the Skorohod problem on a domain with sufficiently smooth boundary. Since the $C^{\infty}$ smoothness of $x^*_s(\cdot)$ and $x^*_b(\cdot)$ is in place, the proof is similar to that of Lemma 9.3 of Shreve and Soner (1994). It is worthwhile pointing out that Shreve and Soner (1994) did not concern the smoothness of $x^*_s$ and $x^*_b$ because they took
into consideration a stationary problem which leads to time-independent policies (free boundaries).
\end{proof}

Thanks to Theorem \ref{varificationtheorem} and Theorem
\ref{existenceofSkorokhodproblem}, we have

\begin{coro}
$V(t,x,y)=\varphi (t,x,y),\;\forall \;(t,x,y)\in [0,T]\times \mathbb{R}^2.$
\end{coro}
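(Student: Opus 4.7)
The corollary is a packaging statement combining Theorem \ref{varificationtheorem} and Theorem \ref{existenceofSkorokhodproblem}, so the plan is essentially a one-line deduction, and I would structure it as follows. First, for any initial datum $(X(0), Y(0))$ with $(0, X(0), Y(0)) \in \overline{\mathcal{NT}}$, Theorem \ref{existenceofSkorokhodproblem} supplies a unique solution $(k, Z, X, Y)$ of Problem \ref{equivalentskorohod2} on $[0, T]$. Since $\gamma_3 \equiv 0$ forces $Z(t) \equiv t$, the triple $(k, X, Y)$ solves Problem \ref{equivalentskorohod}; the decomposition (\ref{Mdefinition})--(\ref{Ndefinition}) then splits $|k|$ into cumulative buy and sell processes $(M, N)$, which are nonnegative, nondecreasing, adapted and RCLL and produce the dynamics (\ref{xyprocess})--(\ref{xyprocess2}). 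A routine moment estimate, relying on the linear growth of the free boundaries $x = x_b^{*}(t)\,y$ and $x = x_s^{*}(t)\,y$ guaranteed by Proposition \ref{vexistence}, places $(X, Y) \in L^2_{\mathcal{F}} \times L^2_{\mathcal{F}}$, so that $(M, N) \in \mathcal{A}$ is genuinely admissible and $(M, N)$ solves the Skorokhod problem, Problem \ref{skorohodproblem}, on $[0, T]$.

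Applying Theorem \ref{varificationtheorem} with $s = 0$ then yields $V(t, x, y) = \varphi(t, x, y)$ for every $(t, x, y) \in [0, T] \times \mathbb{R}^2$. The only point worth flagging is that the Skorokhod problem is formulated only for initial conditions in $\overline{\mathcal{NT}}$, while the corollary claims equality on the full $[0, T] \times \mathbb{R}^2$. That extension is already absorbed in the proof of the verification theorem: inside $\overline{\mathcal{NT}}$ one has $\varphi = V$ via the martingale identity (\ref{phi=VonNT}) obtained by applying It\^o's formula along the Skorokhod solution, and the identity is propagated into $\mathcal{BR}$ and $\mathcal{SR}$ by a convex-analytic argument in the spirit of Shreve and Soner (1994), using the convexity of $V(t, \cdot, \cdot)$ from Lemma \ref{valuefunctionproperty}(1) together with the shift inclusions $\mathcal{W}_0^{x - (1+\lambda)\rho,\, y+\rho} \subseteq \mathcal{W}_0^{x, y}$ and $\mathcal{W}_0^{x+(1-\mu)\rho,\, y-\rho} \subseteq \mathcal{W}_0^{x, y}$ from Lemma \ref{Wconvexset}(4), which identify elements of the subdifferential of $\varphi$ with the admissible instantaneous-transaction directions $(-(1+\lambda), 1)$ and $(1-\mu, -1)$.

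There is thus no genuinely new obstacle at this final step: the hard work, namely the $C^{\infty}$ regularity of $x_s^{*}$ and $x_b^{*}$ required for existence of the Skorokhod solution, together with the convex/subdifferential analysis hidden inside the verification theorem, has already been discharged in Propositions \ref{vexistence}--\ref{vexistence2}, Theorem \ref{varificationtheorem} and Theorem \ref{existenceofSkorokhodproblem}. The corollary simply assembles them, and the ``main obstacle'' here is purely bookkeeping: making sure the unique solution of Problem \ref{equivalentskorohod2} does indeed yield an admissible $(M, N) \in \mathcal{A}$ in the sense of Section 2, so that the verification theorem's hypothesis is honestly met.
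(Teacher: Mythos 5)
Your proposal is correct and follows exactly the paper's route: the corollary is stated there with no separate proof beyond ``Thanks to Theorem \ref{varificationtheorem} and Theorem \ref{existenceofSkorokhodproblem}'', i.e.\ one obtains a solution of Problem \ref{equivalentskorohod2} on $[0,T]$, converts it via $Z(t)\equiv t$ and (\ref{Mdefinition})--(\ref{Ndefinition}) into a solution of the Skorokhod problem, and applies the verification theorem with $s=0$, the extension from $\overline{\mathcal{NT}}$ to all of $\mathbb{R}^2$ being handled inside the verification theorem's convex-analytic step. Your expanded bookkeeping (admissibility of $(M,N)$, the role of the smooth free boundaries) is consistent with what the paper leaves implicit.
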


\section{Main Results}

\begin{thm}
\label{replicate} For any initial position $(x_0,y_0)\in \mathscr{S}$,
define
\[
(X(0),Y(0))\defi \begin{cases}
\left(\tfrac{(x_0+(1-\mu)y_0)x^*_s(0)}{x^*_s(0)+1-\mu},\tfrac{x_0+(1-%
\mu)y_0}{x^*_s(0)+1-\mu}\right), &\textrm{ if }(0,x_0,y_0)\in\mathcal{SR},\\
(x_0,y_0),&\textrm{ if }(0,x_0,y_0)\in\overline{\mathcal{NT}},\\
\left(\tfrac{(x_0+(1+\lambda)y_0)x^*_b(0)}{x^*_b(0)+1+\lambda},
\tfrac{x_0+(1+\lambda)y_0}{x^*_b(0)+1+\lambda}\right), &\textrm{ if
}(0,x_0,y_0)\in\mathcal{BR}. \end{cases}
\]
Let $(k,Z,X,Y)$ be the solution to Problem \ref{equivalentskorohod2} as stipulated
in Theorem \ref{existenceofSkorokhodproblem}. Then $%
(X,Y)$ is the unique solution to the Skorokhod problem, Problem
\ref{skorohodproblem}, and
\begin{align*}
V(0,x_0,y_0)=\BE\left[ \left( W^{X,Y}(T)\right) ^2\right] .
\end{align*}
Moreover, the optimal strategy $(M,N)$ is defined by (\ref{Mdefinition}) and
(\ref{Ndefinition}).
\end{thm}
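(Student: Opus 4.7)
My plan is to establish three things in sequence: (a) the adjusted initial position $(X(0),Y(0))$ always lies in $\overline{\mathcal{NT}}$, so that the Skorokhod machinery of the preceding section applies; (b) Theorem~\ref{existenceofSkorokhodproblem}, combined with the verification theorem, delivers an admissible strategy whose cost equals $V(0,X(0),Y(0))$; and (c) the invariance $V(0,x_0,y_0)=V(0,X(0),Y(0))$ follows from the first-order HJB constraints along the characteristic line joining the two points. The conceptual content is that the value function is flat in the sell and buy directions inside the respective regions, which is exactly what allows a single lump-sum trade at $t=0$ to carry any admissible starting position to the no-trade boundary without loss of value.

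For (a), I would verify directly from the piecewise formula that $X(0)/Y(0)=x^*_s(0)$ in the sell case and $X(0)/Y(0)=x^*_b(0)$ in the buy case; combining the insolvency inequality defining $\mathscr{S}$ with the location of the free boundaries in $\mathscr{X}=(-\infty,-(1-\mu))$ pins down the sign of $Y(0)$, placing $(X(0),Y(0))$ on $\partial_2\mathcal{NT}$ (resp.~$\partial_1\mathcal{NT}$); the no-trade case is trivial. For (b), Theorem~\ref{existenceofSkorokhodproblem} gives a unique $(k,Z,X,Y)$ solving Problem~\ref{equivalentskorohod2} from $(0,X(0),Y(0))$, and since $Z\equiv t$ the triple $(k,X,Y)$ solves Problem~\ref{equivalentskorohod}, so $(X,Y)$ is the unique solution to the Skorokhod problem, Problem~\ref{skorohodproblem}. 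The processes $(M,N)$ defined by (\ref{Mdefinition})--(\ref{Ndefinition}), supplemented by a jump at $t=0$ that executes the transition $(x_0,y_0)\mapsto(X(0),Y(0))$ as a genuine admissible buy or sell, satisfy (\ref{xyprocess})--(\ref{xyprocess2}) and belong to $\mathcal{A}$. The verification theorem (Theorem~\ref{varificationtheorem}) then yields $V(0,X(0),Y(0))=\varphi(0,X(0),Y(0))=\BE[(W^{X,Y}(T))^2]$.

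The conceptually interesting step is (c). Since $\varphi=V$ everywhere by the corollary immediately preceding this theorem, and $\varphi$ satisfies the HJB~(\ref{HJBP}), the sell region is characterized by the equality $(1-\mu)\varphi_x-\varphi_y=0$; the characteristics of this linear first-order PDE are the lines $\{x+(1-\mu)y=\text{const}\}$, and a direct computation shows that the segment joining $(x_0,y_0)$ to $(X(0),Y(0))$ lies on a single such characteristic (namely $X(0)+(1-\mu)Y(0)=x_0+(1-\mu)y_0$) and stays inside the closure of the sell region, so $\varphi$ is constant on it. The buy case runs in parallel, invoking $\varphi_y-(1+\lambda)\varphi_x=0$ and the identity $X(0)+(1+\lambda)Y(0)=x_0+(1+\lambda)y_0$. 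The main obstacle I anticipate is making this invariance argument rigorous: one must justify that the characteristic segment remains in the closed sell (resp.~buy) region throughout, and that $\varphi$ is sufficiently regular across the free boundaries for the characteristic argument to go through, which rests on the $C^\infty$ smoothness of $x^*_s$ and $x^*_b$ from Proposition~\ref{vexistence}. With (a)--(c) in hand, $V(0,x_0,y_0)=\BE[(W^{X,Y}(T))^2]$ is the optimality claim, and $(M,N)$ together with the initial jump is the optimal strategy.
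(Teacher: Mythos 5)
Your proposal follows the same route as the paper: place the adjusted position on $\overline{\mathcal{NT}}$, invoke the Skorokhod solution together with the verification theorem (equation (\ref{phi=VonNT})) to get $V(0,X(0),Y(0))=\BE[(W^{X,Y}(T))^2]$, and then establish $V(0,x_0,y_0)=V(0,X(0),Y(0))$ from the invariance of $V=\varphi$ along the sell/buy directions. The paper leaves that last invariance as ``not hard to check'' (it follows from the explicit forms $\varphi=e^{2\mathfrak{A}(t)}(x+(1-\mu)y)^2$ in $\mathcal{SR}$ and $\varphi=e^{2\mathfrak{B}(t)}(x+(1+\lambda)y)^2$ in $\mathcal{BR}$), and your characteristics argument for the first-order constraints is just an equivalent way of verifying the same fact.
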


\begin{proof}
Noting that $(0,X(0),Y(0))\in \overline{\mathcal{NT}}$, by
(\ref{phi=VonNT}),
\begin{align*}
V(0,X(0),Y(0))=\BE\left[\left(W^{X,Y}(T)\right)^2\right].
\end{align*}
Since $V=\varphi$, it is not hard to check that
\begin{align*}
V(0,x_0,y_0)=V(0,X(0),Y(0)).
\end{align*}
The proof is complete.
\end{proof}

As a final task before reaching the main result, we prove the existence of the Lagrange multiplier.

\begin{prop}
\label{existenceofLagrangeMultiplier} For any $(x,y,z)\in \mathbb{R}^2\times
\mathcal{D}$, there exists a unique $\ell ^{*}\in \mathbb{R}$ such that
\begin{align*}
V(0,x-\ell ^{*}e^{-rT},y)-(\ell ^{*}-z)^2=\sup\limits_{\ell \in \mathbb{R}%
}(V(0,x-\ell e^{-rT},y)-(\ell -z)^2).
\end{align*}
Moreover, $\ell ^{*}$ is determined by equation
\begin{align}
e^{-rT}V_x(0,x-\ell ^{*}e^{-rT},y)+2\ell ^{*}=2z.  \label{lequation}
\end{align}
\end{prop}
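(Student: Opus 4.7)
The plan is to view
\[
\Phi(\ell) \defi V(0, x - \ell e^{-rT}, y) - (\ell - z)^{2}
\]
as a concave, coercive, $C^{1}$ function of $\ell \in \mathbb{R}$, so that its supremum is attained and characterized by the vanishing of its derivative. The first step uses the identity $\mathcal{W}_0^{x - \ell e^{-rT}, y} = \mathcal{W}_0^{x, y} - \ell$ (recorded just before Problem \ref{pmeanvariancemainproblem}) to rewrite $V(0, x - \ell e^{-rT}, y) = \inf_{W \in \mathcal{W}_0^{x, y}} \BE[(W - \ell)^{2}]$, hence
\[
\Phi(\ell) = \inf_{W \in \mathcal{W}_0^{x, y}} \bigl\{ \BE[W^{2}] - 2\ell\, \BE[W] \bigr\} + 2\ell z - z^{2}.
\]
The expression inside the infimum is affine in $\ell$ for each fixed $W$, so $\Phi$ is the pointwise infimum of a family of affine functions plus an affine term, hence concave (and therefore continuous) on $\mathbb{R}$.

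For coercivity, the interiority of $z$ in $\mathcal{D}$ supplies test functions on both sides: by the definition of $\hat z$ there is some $W_{+} \in \mathcal{W}_0^{x, y}$ with $\BE[W_{+}] > z$, while the deterministic all-bond terminal wealth $W_{-} \in \mathcal{W}_0^{x, y}$ satisfies $\BE[W_{-}] = e^{rT} x + (1-\mu) e^{rT} y^{+} - (1+\lambda) e^{rT} y^{-} < z$ by the definition of $\mathcal{D}$. Substituting either into the infimum gives $\Phi(\ell) \leqslant \BE[W_{\pm}^{2}] + 2\ell(z - \BE[W_{\pm}]) - z^{2}$, which drives $\Phi(\ell) \to -\infty$ as $\ell \to +\infty$ (using $W_{+}$) and as $\ell \to -\infty$ (using $W_{-}$). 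A concave, coercive function on $\mathbb{R}$ attains its maximum, producing $\ell^{*}$. Since Proposition \ref{vexistence2} gives $V = \varphi \in C^{1, 2, 2}$, $\Phi$ is $C^{1}$ with
\[
\Phi'(\ell) = -e^{-rT} V_{x}(0, x - \ell e^{-rT}, y) - 2(\ell - z),
\]
and the necessary condition $\Phi'(\ell^{*}) = 0$ is exactly equation (\ref{lequation}).

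The main obstacle is uniqueness. The set of maximizers of a concave $C^{1}$ function is a closed interval on which $\Phi'$ vanishes identically; if its interior were nonempty, integrating the relation $e^{-rT} V_{x}(0, x - \ell e^{-rT}, y) = 2(z - \ell)$ would force $V(0, \cdot, y)$ to coincide with a quadratic having $V_{xx} \equiv 2 e^{2rT}$ on the corresponding open $x$-interval. I would rule this out case by case using the explicit expressions in (\ref{varphidefinition}): in the sell, buy, and $y \leqslant 0$ regions $V_{xx}$ reduces to the region-dependent time-$0$ constants $2 e^{2\mathfrak{A}(0)}$, the buy-region analogue, and $2 e^{2 \mathfrak{B}(0)}$ respectively, none of which can equal $2 e^{2rT}$ because that would require the integrands defining $\mathfrak{A}$ and $\mathfrak{B}$ to equal $r$ identically in $\tau$, contradicted by the strictly positive $\sigma^{2}$-term; in the no-trade region, the strict ellipticity of the PDE satisfied by $v$ in Proposition \ref{vexistence} precludes $V_{xx}$ from being constant on an interval. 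Together these exclude a nondegenerate plateau of maximizers and give uniqueness of $\ell^{*}$.
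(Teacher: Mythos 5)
Your existence argument is correct and in fact cleaner than the paper's. Writing $\Phi(\ell)=\inf_{W\in\mathcal{W}_0^{x,y}}\{\BE[W^2]-2\ell\,\BE[W]\}+2\ell z-z^2$ gives concavity as an infimum of affine functions, and the two test elements ($W_+$ with $\BE[W_+]>z$ from $z<\hat z$, and the all-bond $W_-$ with $\BE[W_-]<z$) give coercivity on both sides in one stroke. The paper instead fixes $f(\ell)=e^{-rT}V_x(0,x-\ell e^{-rT},y)+2\ell$, notes $f$ is increasing with $f(z)\leqslant 2z$, and computes $\lim_{\ell\to+\infty}f(\ell)$ separately in the cases $T_0>0$ (using the explicit form of $\varphi$ in the buy region) and $T_0=0$ (via an expansion of $V_x$ near $y/x=0$, where the limit is the finite value $2\hat z$); your coercivity argument subsumes both cases. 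The first-order condition is then identical in the two treatments.

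The uniqueness step, however, has a genuine gap. Your reduction of a nondegenerate plateau of maximizers to ``$V_{xx}(0,\cdot,y)\equiv 2e^{2rT}$ on an open interval'' is fine, but both of the facts you use to exclude this are asserted with invalid justifications. First, $e^{2\mathfrak{A}(0)}=e^{2rT}$ is the single scalar identity $\int_0^T(\text{integrand})\,\dd\tau=\int_0^T r\,\dd\tau$, which in no way forces the integrand to equal $r$ identically in $\tau$; what is needed (and what the paper uses) is the strict inequality $\mathfrak{A}(0)<rT$, resp.\ $\mathfrak{B}(0)<rT$. This is a genuine estimate: the integrand minus $r$ equals $(1-\mu)\bigl[(\alpha-r)(x_s^{*}(\tau)+1-\mu)+\tfrac12\sigma^2(1-\mu)\bigr]/(x_s^{*}(\tau)+1-\mu)^2$, so the ``strictly positive $\sigma^2$-term'' actually pushes the integrand \emph{up}, towards $\mathfrak{A}(0)\geqslant rT$, and cannot be the source of a contradiction; one must control the location of the free boundary or argue indirectly (e.g.\ by comparing $V$ in $\mathcal{SR}$ with the value of immediate liquidation). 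Second, in $\mathcal{NT}$ ``strict ellipticity precludes $V_{xx}$ from being constant on an interval'' is not a theorem — a strictly parabolic equation can have solutions quadratic in $x$ on a subdomain; the paper instead propagates the strict inequality $2-e^{-2rT}V_{xx}>0$ from $\mathcal{SR}$, $\mathcal{BR}$ and $t=T$ into $\mathcal{NT}$ by the maximum principle. The skeleton of your uniqueness argument is sound, but both pillars need to be replaced by these estimates.
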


The proof is placed in Appendix B.
\begin{coro}\label{hatz=infty}
If $T>\frac 1{\alpha -r}\ln \left( \frac{1+\lambda }{1-\mu }\right)$, then
\begin{align*}
\hat{z}=+\infty,\quad\mathcal{D}=(e^{rT}x+(1-\mu)e^{rT}y^+-(1+\lambda)e^{rT}y^-,+\infty).
\end{align*}
\end{coro}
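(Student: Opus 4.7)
The plan is to establish $\hat z = +\infty$ by exhibiting an explicit one-parameter family of admissible buy-and-hold strategies whose expected terminal net wealths are unbounded above; once that is in hand, the formula for $\mathcal D$ is automatic from its definition in (\ref{mathcalDdefinition}).

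Concretely, I would fix $(x,y) \in \mathbb R^2$ and, for each real $m > \max\{0, -y\}$, consider the ``buy $m$ dollars of stock at $t = 0^+$ and then do nothing'' strategy $M(t) = m$, $N(t) = 0$ on $[0, T]$, so that $M$ is RCLL with a single jump from $0$ to $m$ at the origin and is constant thereafter. Solving (\ref{xyprocess})--(\ref{xyprocess2}) explicitly then yields
\begin{equation*}
X(t) = (x - (1+\lambda) m) e^{rt}, \qquad
Y(t) = (y + m) e^{(\alpha - \sigma^2/2) t + \sigma B(t)},
\end{equation*}
so $X$ is deterministic and $Y$ is a geometric Brownian motion started from the strictly positive value $y + m$. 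A routine second-moment estimate gives $X, Y \in L^2_{\mathcal F}$ on the compact interval $[0, T]$ regardless of the underlying initial condition, so $(M, N) \in \mathcal A$.

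Since $Y(T) > 0$ almost surely, $W^{X,Y}(T) = X(T) + (1-\mu) Y(T)$, and taking expectations yields
\begin{equation*}
\mathbf E\bigl[W^{X,Y}(T)\bigr]
= e^{rT} x + (1 - \mu) e^{\alpha T} y
+ m \bigl((1 - \mu) e^{\alpha T} - (1 + \lambda) e^{rT}\bigr).
\end{equation*}
The hypothesis $T > \frac{1}{\alpha - r} \ln \frac{1 + \lambda}{1 - \mu}$ is equivalent to $(1 - \mu) e^{\alpha T} > (1 + \lambda) e^{rT}$, so the coefficient of $m$ is strictly positive and sending $m \to \infty$ forces $\hat z = +\infty$.

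There is no real obstacle. The computation is essentially the mirror image of the Ito-based upper bound derived in the proof of Theorem \ref{fesiblility}: the very same bracketed coefficient $(1-\mu) e^{\alpha(T-t)} - (1+\lambda) e^{r(T-t)}$ appears, but now with the favourable sign near $t = 0$ thanks to $T > T^*$, so we can exploit rather than bound it. The only point requiring minor care is the RCLL bookkeeping around the jump of $M$ at $t = 0$ within the convention of (\ref{xyprocess})--(\ref{xyprocess2}); this is standard in singular stochastic control and is handled exactly as in the proof of Lemma \ref{Wincreasing}.
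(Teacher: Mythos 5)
Your proof is correct, but it takes a genuinely different route from the paper's. The paper argues by contradiction through the Lagrangian machinery: assuming $\hat z<+\infty$, it shows $\sup_{\ell}\bigl(V(0,x-\ell e^{-rT},y)-(\ell-z)^2\bigr)$ would have to be $+\infty$, contradicting the finiteness of that supremum established in the proof of Proposition \ref{existenceofLagrangeMultiplier} (which in turn rests on the explicit form of $V$ in the buy region, hence on the whole PDE/free-boundary apparatus). Your argument is instead direct and elementary: the leveraged buy-and-hold strategy $M\equiv m$, $N\equiv 0$ gives $\mathbf{E}[W^{X,Y}(T)]=e^{rT}x+(1-\mu)e^{\alpha T}y+m\bigl((1-\mu)e^{\alpha T}-(1+\lambda)e^{rT}\bigr)$, and the hypothesis $T>T^*$ is exactly the positivity of the coefficient of $m$, so $\hat z=+\infty$ and the formula for $\mathcal{D}$ follows from (\ref{mathcalDdefinition}). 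The computations check out (the jump of $M$ at $t=0$ is consistent with the convention $X(0-)=x$, $Y(0-)=y$ used throughout, and admissibility is clear since $X$ is deterministic and $Y$ is a geometric Brownian motion). What your approach buys is self-containment and economic transparency: it identifies the strategies that realize arbitrarily high expected returns and is the exact mirror image of the upper bound in Theorem \ref{fesiblility}, requiring none of Sections 4--6 or the appendices. What the paper's route buys is brevity given that the value-function analysis is already in place, plus the (logically needed elsewhere) confirmation that the Lagrangian dual is finite precisely when $T>T^*$.
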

\begin{proof}
Suppose $\hat{z}<+\infty$. Then by the definition of $\hat{z}$,
for any $W\in\mathcal{W}_0^{x,y}$, we have $\BE[W]\leqslant \hat{z}$.
So for any $z>\hat{z}$, $\ell\geqslant 0$,
\begin{equation*}
\begin{split}
V(0,x-\ell e^{-rT},y)-(\ell-z)^2
&=\inf\limits_{W\in\mathcal{W}_0^{x-\ell e^{-rT},y}}\BE[W^2-(\ell-z)^2]
=\inf\limits_{W\in\mathcal{W}_0^{x,y}}\BE[(W-\ell)^2-(\ell-z)^2]\\
&=\inf\limits_{W\in\mathcal{W}_0^{x,y}}(\BE[W^2]+2\ell(z-\BE[W])-z^2)\\
&\geqslant 2\ell(z-\hat{z})-z^2.
\end{split}
\end{equation*}
Consequently,
\begin{align*}
\sup\limits_{\ell\in\R}(V(0,x-\ell e^{-rT},y)-(\ell-z)^2)=+\infty.
\end{align*}
However, the proof of Proposition \ref{existenceofLagrangeMultiplier} (Appendix B) shows that
the above supremum is finite under the condition
$T>\frac 1{\alpha -r}\ln \left( \frac{1+\lambda }{1-\mu }\right)$; see (\ref{vellstareq}).
The proof is complete.
\end{proof}

Now we arrive at the complete solution to the MV problem, Problem
\ref{pmeanvariance0}.

\begin{thm}
\label{mainresults} Problem \ref{pmeanvariance0}
admits an optimal solution if and only if $z\in \widetilde{\mathcal{D}}$,
where
\[
\widetilde{\mathcal{D}}\defi
\begin{cases} (e^{rT}x+(1-\mu)e^{rT}y^+-(1+\lambda)y^-,+\infty),
&\textrm{ if } T>\frac 1{\alpha -r}\ln \left( \frac{1+\lambda }{1-\mu }\right), \\
 (e^{rT}x+(1-\mu)e^{rT}y,e^{rT}x+(1-\mu)e^{\alpha T}y],
 &\textrm{ if } T\leqslant\frac 1{\alpha -r}\ln \left( \frac{1+\lambda }{1-\mu }\right) ,\; y>0,\\
\emptyset, &\textrm{ if } T\leqslant\frac 1{\alpha -r}\ln
\left( \frac{1+\lambda }{1-\mu }\right) ,\; y\leqslant 0. \end{cases}
\]
Moreover, if $z$ is on the boundary of $\widetilde{\mathcal{D}}$, i.e., $%
T\leqslant\frac 1{\alpha -r}\ln \left( \frac{1+\lambda }{1-\mu
}\right)$, $y>0$, and $z=e^{rT}x+(1-\mu )e^{\alpha T}y$, then the
optimal strategy is $(M,N)\equiv (0,0)$; otherwise, the value
function and the optimal solution are given by Theorem
\ref{replicate}, in which the initial position $(x_0,y_0)=(x-\ell
^{*}e^{-rT},y)$ where $\ell ^{*}$ is
determined by equation (\ref{lequation}).
\end{thm}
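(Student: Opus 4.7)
The plan is a case analysis on $(T,y)$, handling the degenerate and boundary cases directly from Theorem \ref{fesiblility} and Corollary \ref{hatz=infty}, and then for $z$ in the open interior $\mathcal{D}$ applying the Lagrangian-to-Skorokhod pipeline assembled in Sections 4--6. First I dispose of the three branches of $\widetilde{\mathcal{D}}$. If $T\leqslant T^{*}$ and $y\leqslant 0$, Theorem \ref{fesiblility} gives $\mathcal{D}=\emptyset$, so no $z$ is feasible and $\widetilde{\mathcal{D}}=\emptyset$. If $T\leqslant T^{*}$ and $y>0$, Theorem \ref{fesiblility} identifies $\hat z=e^{rT}x+(1-\mu)e^{\alpha T}y$ and asserts that at $z=\hat z$ the unique feasible (hence trivially optimal) strategy is $(M,N)\equiv(0,0)$, while $z>\hat z$ is infeasible. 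If $T>T^{*}$, Corollary \ref{hatz=infty} delivers $\hat z=+\infty$, so $\mathcal{D}$ coincides with the right-unbounded interval named $\widetilde{\mathcal{D}}$. This pins down $\widetilde{\mathcal{D}}$ and simultaneously settles both non-existence on its complement and the degenerate optimum on its right endpoint.

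For $z$ in the open set $\mathcal{D}$, the argument is a three-move combination. Proposition \ref{existenceofLagrangeMultiplier} produces a unique Lagrange multiplier $\ell^{*}\in\mathbb{R}$ attaining the dual supremum and satisfying (\ref{lequation}). Via the shift identity $\mathcal{W}_{0}^{x,y}-\ell^{*}=\mathcal{W}_{0}^{x-\ell^{*}e^{-rT},y}$, Problem \ref{pmeanvariancewithoutconstrain} with parameter $\ell^{*}$ is equivalent to Problem \ref{pmeanvariancemainproblem} with shifted initial data $(x_{0},y_{0}):=(x-\ell^{*}e^{-rT},y)$. Theorem \ref{replicate} then constructs from $(x_{0},y_{0})$ an admissible strategy $(M,N)$ and the associated bond--stock process $(X,Y)$ with $\BE[(W^{X,Y}(T))^{2}]=V(0,x_{0},y)$. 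Transported back through the shift, Proposition \ref{LagrangeMultiplier}(2) tells me that the wealth $W^{X,Y}(T)+\ell^{*}$ is optimal for Problem \ref{pmeanvariance} with the parameter $\BE[W^{X,Y}(T)+\ell^{*}]$, leaving only the matching of this mean to $z$.

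The closing step, and the main technical point I anticipate, is to verify that this mean equals the prescribed $z$. This is exactly what (\ref{lequation}) encodes: a perturbation of size $\epsilon$ in the initial bond holding shifts every admissible terminal wealth by exactly $\epsilon e^{rT}$, so an envelope/perturbation argument (legitimised by the convexity of $V$ from Lemma \ref{valuefunctionproperty}(1) and the $C^{1,2,2}$ regularity of $\varphi$ from Proposition \ref{vexistence2}) gives $V_{x}(0,x_{0},y)=2e^{rT}\BE[W^{X,Y}(T)]$; substituting into (\ref{lequation}) yields $\BE[W^{X,Y}(T)]+\ell^{*}=z$ as required. Assembling the three cases of the first paragraph with the Lagrangian-Skorokhod construction of the second, and invoking this mean-matching identity, produces both the value function and the optimal strategy in the form stated.
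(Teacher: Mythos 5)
Your proposal is correct and follows essentially the same route as the paper: the same case split via Theorem \ref{fesiblility} and Corollary \ref{hatz=infty} to pin down $\widetilde{\mathcal{D}}$ and the degenerate boundary optimum, then the unique multiplier from Proposition \ref{existenceofLagrangeMultiplier}, the shift identity, Theorem \ref{replicate}, and Proposition \ref{LagrangeMultiplier}(2). The only local difference is the mean-matching step, where you make the envelope identity $V_x(0,x_0,y)=2e^{rT}\BE\left[W^{X,Y}(T)\right]$ explicit and substitute it into (\ref{lequation}), whereas the paper deduces $\BE[W^*]=z$ more tersely from the uniqueness of $\ell^*$; both rest on the same first-order condition, so this is a variation in presentation rather than in substance.
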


\begin{proof}
If $z\not\in \widetilde{\mathcal{D}}$,  then there is no feasible
solution by Theorem \ref{fesiblility}; so Problem \ref{pmeanvariance0} admits no optimal solution.
If $z=e^{rT}x+(1-\mu)e^{\alpha T}y$ while $T\leqslant\frac 1{\alpha
-r}\ln \left( \frac{1+\lambda }{1-\mu }\right)$ and $y>0$, then
Theorem \ref{fesiblility} again shows that the optimal strategy is $(M,N)\equiv(0,0)$.
\par
In all the other cases, it follows from Proposition \ref{existenceofLagrangeMultiplier} that
there exists a unique Lagrange multiplier $\ell^*$ such that
\begin{equation*}
\begin{split}
V(0,x-\ell^* e^{-rT},y)-(\ell^*-z)^2= \sup\limits_{\ell\in\R}
(V(0,x-\ell e^{-rT},y)-(\ell-z)^2).
\end{split}
\end{equation*}
Appealing to Proposition \ref{determineLagrangeMultiplier}, we have
\begin{equation*}
\begin{split}
V(0,x-\ell^* e^{-rT},y)-(\ell^*-z)^2=V_1(x,y;z)-z^2.
\end{split}
\end{equation*}
Theorem \ref{replicate} then dictates that there exists an admissible strategy
$(M^*,N^*)\in\mathcal{A}$ such that
\begin{align*}
V(0,x-\ell^* e^{-rT},y)=\BE\left[\left(W^{X^{x-\ell^*
e^{-rT},M^*,N^*},Y^{y,M^*,N^*}}(T)\right)^2\right].
\end{align*}
Noting that for any $(M,N)\in\mathcal{A}$, we have
\begin{align*}
X^{x-\ell^* e^{-rT},M,N}(T)&=X^{x,M,N}(T)-\ell^*,\\
Y^{y,M,N}(T)&=Y^{y,M,N}(T),\\
W^{X^{x-\ell^*e^{-rT},M,N},Y^{y,M,N}}(T)
&=W^{X^{x,M,N},Y^{y,M,N}}(T)-\ell^*;
\end{align*}
so
\begin{align*}
V(0,x-\ell^*
e^{-rT},y)=\BE\left[\left(W^{X^{x,M^*,N^*},Y^{y,M^*,N^*}}(T)-\ell^*\right)^2\right].
\end{align*}
By the definition of $V$, for any $(M,N)\in\mathcal{A}$,
\begin{align*}
V(0,x-\ell^* e^{-rT},y)
&\leqslant\BE\left[\left(W^{X^{x-\ell^* e^{-rT},M,N},Y^{y,M,N}}(T)\right)^2\right].\\
&=\BE\left[\left(W^{X^{x,M,N},Y^{y,M,N}}(T)-\ell^*\right)^2\right].
\end{align*}
Therefore $W^*$ is optimal to Problem
\ref{pmeanvariancewithoutconstrain}
 with parameter $\ell^*$, where $W^*$ is defined by
\begin{align*}
W^*=W^{X^{x,M^*,N^*},Y^{y,M^*,N^*}}(T).
\end{align*}
Owing to Proposition \ref{LagrangeMultiplier}, $W^*$ is optimal to
Problem \ref{pmeanvariance} with parameter $\BE[W^*]$. By the
uniqueness of $\ell^*$, we have $\BE[W^*]=z$. Thus $W^*$ is the
optimal solution to both Problems \ref{pmeanvariance} and
\ref{pmeanvariance0}, and $(M^*,N^*)$ is the optimal strategy.
\end{proof}

The preceding theorem fully describes the behavior of an optimal MV investor under
transaction costs. If the planning horizon is not long enough (the precise critical length depends
only on the stock excess return and the transaction fees), then what could be achieved at the
terminal time (in terms of the expected wealth) is rather limited. Otherwise,
any terminal target is achievable by an investment strategy, while an optimal (efficient)
strategy is to minimize the corresponding risk (represented by the variance). The optimal strategy
is characterized by three regions (those of sell, buy, and no trade) defined by (5.1).
The implementation of the strategy is very simple: a transaction takes
place only when the ``adjusted'' bond--stock process, $(X-\ell^* e^{-rT},Y)$,
reaches the boundary of the no-trade zone
so as for the process to stay within the zone (if initially the process is outside
of the no-trade zone then a transaction is carried out as in Theorem \ref{replicate} to move it
instantaneously into the no-trade zone).

The optimal strategy presented here is markedly different from its no-transaction counterpart
[see, e.g., Zhou and Li (2000)]. With transaction costs, an investor tries {\it not}
to trade unless absolutely necessary, so as to keep the ``adjusted'' bond--stock ratio,
$\frac{x-\ell^* e^{-rT}}{y}$, between
the two barriers, $x^*_b(t)$ and $x^*_s(t)$, at any given time $t$. When there is no
transactions cost,
however, the two barriers coincide; so the optimal strategy is to keep the above ratio
exactly {\it at}
the barrier\footnote{In an EUM model--the Merton problem for example--the optimal solution is to keep
the bond--stock ratio exactly at a certain value. In the MV model the ratio must be ``adjusted'' in order
to account for the constraint of meeting the terminal target.}. This, in turn, requires the optimal strategy to trade {\it all the time}. Clearly,
the strategy presented here is more consistent with the actual investors' behaviors.

Let us examine more closely the trade zone consisting of the sell and buy regions, defined in (5.1).
By and large, when the adjusted bond--stock ratio, $\frac{x-\ell^* e^{-rT}}{y}$,
starts to be greater than a critical barrier (namely $x^*_s(t)$, which is time-varying), then
one needs to reduce the
stock holdings. When
the ratio starts to be smaller than {\it another} barrier ($x^*_b(t)$, again time-varying), then
one must accumulate the stock. It is interesting to see that $y\leqslant 0$ always triggers buying;
in other words shorting the stock is never favored, and any short position must be covered
immediately.
The essential reason behind this is the standing assumption that $\alpha>r$; so there is no
good reason to short the stock.
Nevertheless, this behavior is not seen in the absence of a transaction cost where
shorting is sometimes preferred even when $\alpha>r$.

Another not-so-obvious yet extremely intriguing behavior of the optimal strategy is that when
the time to maturity is short enough (precisely, when the remaining time is less than
$T^*=\frac 1{\alpha -r}\ln \left( \frac{%
1+\lambda }{1-\mu }\right)$), then one should not buy stock any
longer (unless to cover a possible short position). This is seen
from the fact that $\lim\limits_{t\uparrow T_0}x_b^{*}(t)=-\infty$
stated in Proposition \ref{vexistence} along with the definition of
the buy region. Moreover, since both the two barriers are decreasing
in time, the buy region gets smaller and the sell region gets bigger
as time passes along. This suggests that the investor would be less
likely to buy the stock and more likely to sell the stock when the
maturity date is getting closer. These phenomena, again, are in line
with what prevail in practice.

We end this section by a numerical example. Consider a market with the following
parameters:
\[
(\alpha ,\;r,\;\sigma ,\;\lambda ,\;\mu
,\;T)=(0.15,\;0.05,\;0.2,\;0.02,\;0.02,\;2).
\]

In this case, $T>\frac 1{\alpha -r}\ln \left( \frac{%
1+\lambda }{1-\mu }\right)$. In terms of a penalty method developed
by Dai and Zhong (2009), we numerically solve equation
(\ref{obstacleproblem}) and then construct the two free boundaries
by Proposition \ref{vexistence}.

\begin{center}
\includegraphics[width=8cm]{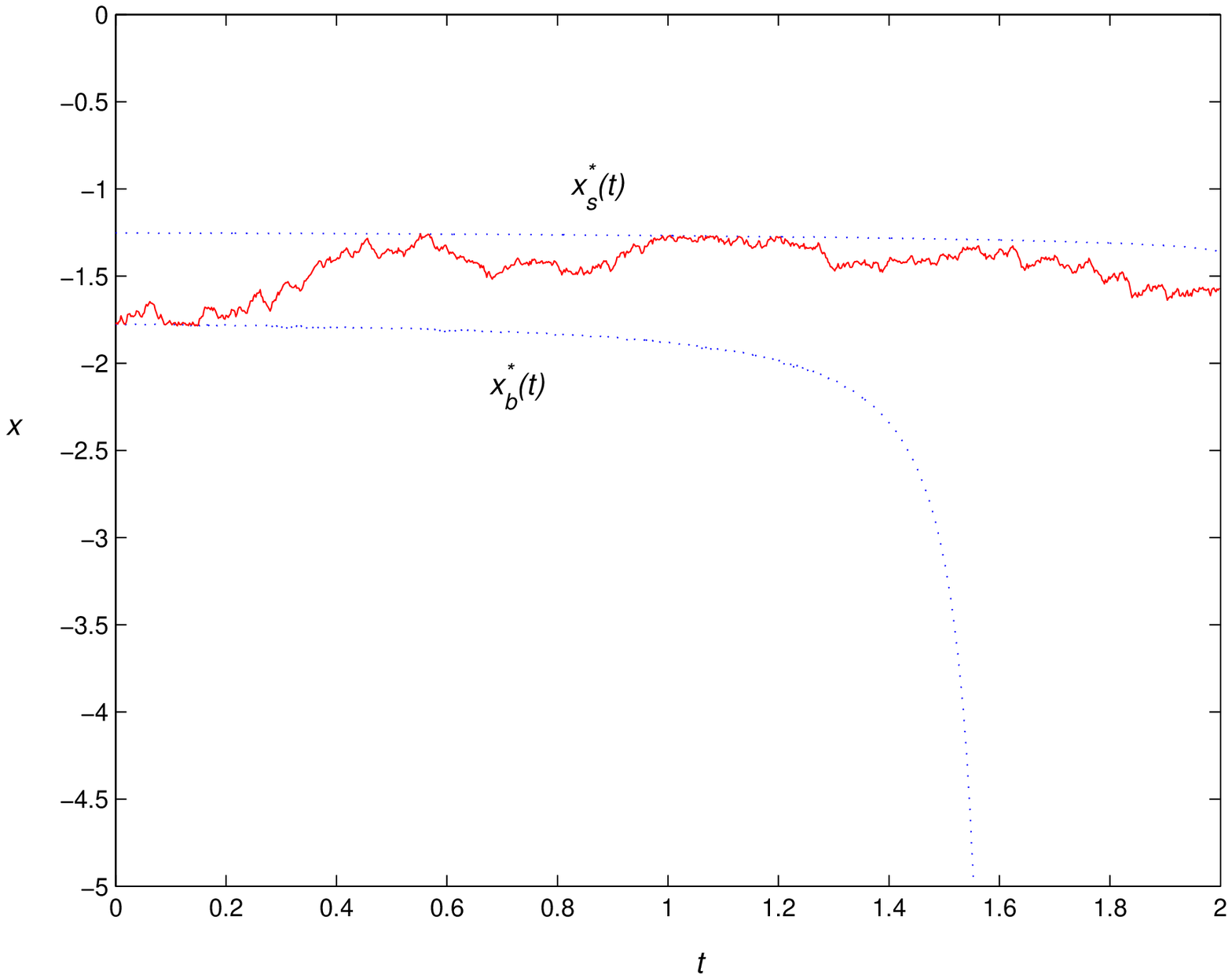}
\end{center}

Consider an investor having the  initial position $(x,\;y)=(-1,\;1)$
with an expected return $z=1.1$ at time $T$.
Based on (\ref{lequation}) we can calculate that $\ell^*=4.5069$; so the
adjusted initial position is $(x-\ell^*
e^{-rT},y)=(-5.078,1)$. The optimal strategy is the following. At time 0, applying
Theorem \ref{replicate}, the investor carries out a transaction so as to move his adjusted position to the boundary
of $\mathcal{NT}$. This is realized by buying 4.3395 units (in terms of the dollar amount) of the
stock, with the new adjusted position to be $(-9.5043,5.3395)$.
After the initial
time, the investor trades only on the boundaries of the $\mathcal{NT}$ region just to keep
his adjusted position within the $\mathcal{NT}$ region. Next consider another investor with an initial position $(x,\;y)=(1,0)$ and expected return $z=1.2$.
In this case  $\ell^*=2.3690$ and $(x-\ell^*
e^{-rT},y)=(-1.1436,0)$. So initially the investor buys 1.5047 worth of the stock moving the adjusted position to (-2.6784,1.5047) which is on the boundary of $\mathcal{NT}$. Afterwards, the trading strategy is simply to keep the adjusted position within $\mathcal{NT}$.

\section{Concluding Remarks}

This paper investigates a continuous-time Markowitz's mean--variance
portfolio selection model with proportional transaction costs. In
the terminology of stochastic control theory, this is a singular
control problem. We use the Lagrangian multiplier  and partial
differential equation to approach the problem.  The problem has been
completely solved in the following sense. First, the feasibility of
the model has been fully characterized by certain relationship among
the parameters. Second, the value function is given via a PDE, which
is analytically proven to be uniquely solvable and numerically
tractable, whereas the Lagrange multiplier is determined by an
algebra equation. Third, the optimal strategy is expressed in terms
of the free boundaries of the PDE. Economically, the results in the
paper have revealed three critical differences arising from the
presence of transaction costs. First,the expected return on the
portfolio may not be achievable if the time to maturity is not long
enough, while without transaction costs, any expected return can be
achieved in an arbitrarily short time. Second, instead of trading
all the time so as to keep a certain adjusted ratio between the
stock and bond to be a constant, there exists time-dependent upper
and lower boundaries so that transaction is only carried out when
the ratio is on the boundaries. Third, there is a critical time
which depends only on the stock excess return and the transaction
fees, such that beyond that time it is optimal not to buy stock at
all. Our results are closer to real investment practice where people
tend not to invest more in risky assets towards the end of the
investment horizon.

%

\appendix

\section{Proofs of Propositions \ref{vexistence} and \ref{vexistence2}}

Proposition \ref{vexistence2} is straightforward once Proposition
\ref{vexistence} is proved. So we prove Proposition \ref{vexistence}
only. Note that equation (\ref{obstacleproblem}) could have a
singularity if $v=0.$ To remove the possible singularity, let us
begin with the stationary counterpart of the problem. As in Theorem 6.1 of Dai and
Yi (2009), we are able to show that the semi-explicit stationary
solution is available through a Riccati equation
\[
\begin{cases} \mathcal{L}v_{\infty}=0, \textrm{ if }
x\in(x^*_{b,\infty},x^*_{s,\infty}),\\
v_{\infty}(x^*_{b,\infty})=x^*_{b,\infty}+1+\lambda,\quad
v'_{\infty}(x^*_{b,\infty})=1,\\
v_{\infty}(x^*_{s,\infty})=x^*_{s,\infty}+1-\mu,\quad
v'_{\infty}(x^*_{s,\infty})=1, \end{cases}
\]
with
\begin{align}
x_{s,\infty }^{*}& \defi-\frac a{a+k^{*}}(1-\mu ),
\label{x_sdefinition} \\
 x_{b,\infty }^{*}& \defi-\frac
a{a+\frac{k^{*}}{k^{*}-1}}(1+\lambda ), \label{x_bdefinition}
\end{align}
where $\mathcal{L}$ is defined in (\ref{calLdefinition}),
\begin{align*}
a\defi-\frac{2(\alpha -r+\sigma ^2)}{\sigma ^2}\in (-\infty ,-2),
\end{align*}
$k^{*}\in (1,2)$ is the solution to
\begin{align*}
F(k)=\frac{1+\lambda }{1-\mu },  
\end{align*}
\[
F(k)\defi\begin{cases} \dfrac{a+\frac{k}{k-1}}{a+k}
\left(\dfrac{(c_1+\frac{k-1}{k}a)(c_2+\frac{1}{k}a)}
{(c_2+\frac{k-1}{k}a)(c_1+\frac{1}{k}a)}\right)^{\frac{1}{2(c_2-c_1)}},
&\textrm{if } \Delta_k<0,\\
\dfrac{a+\frac{k}{k-1}}{a+k}\exp\left(\dfrac{1}{2}
\left(\dfrac{1}{\frac{1}{k}a+\frac{1-a}{4}}-\dfrac{1}{\frac{k-1}{k}a+%
\frac{1-a}{4}}\right)\right), &\textrm{if } \Delta_k=0,\\
\dfrac{a+\frac{k}{k-1}}{a+k}\exp\left(\dfrac{1}{\sqrt{2\Delta_k}}
\left(\arctan\dfrac{k(a-1)-4a}{2k\sqrt{2\Delta_k}}
-\arctan\dfrac{4a-k(3a+1)}{2k\sqrt{2\Delta_k}}\right)\right),
&\textrm{if } \Delta_k>0, \end{cases}  \label{Fdefinition}
\]
$c_1$, $c_2$ are the two roots of $2c^2+(a-1)c+\frac{k-1}{k^2}a^2=0$, and
\begin{align*}
\Delta _k\defi\frac{k-1}{k^2}a^2-\frac 18(a-1)^2.
\end{align*}

Let $v(t,x)$ be a solution to equation (\ref{obstacleproblem}) restricted to the region $[0,T)\times
\left( -\infty ,x_{s,\infty }^{*}\right)$ with a boundary condition $v(t,x^*_{s,\infty})=x^*_{s,\infty}+1-\mu$.
Apparently, $v_\infty$ is a super-solution to equation (\ref{obstacleproblem}) in the region,
i.e., $v(t,x)\leq v_\infty(x)$ for all $x<-(1-\mu)$, $t\in [0,T)$.
It is easy to show that $v_\infty(x)$ is increasing in $x$. We then deduce
\begin{align}
v(t,x)\leq v_\infty(x_{s,\infty }^{*})=x_{s,\infty
}^{*}+1-\mu\defi -C_0<0 \ \ \text{for} \ \ x<x_{s,\infty
}^{*}.\label{eq1}
\end{align}
In what follows, we will confine equation (4.5) to the restricted region $[0,T)\times
\left( -\infty ,x_{s,\infty }^{*}\right)$ in which, due to (\ref{eq1}), the equation has no singularity. It is worthwhile pointing out that $v(t,x)$ can be trivially extended to the original region by letting $v(t,x)=x+1-\mu \ \ \text{for} \ \ x\geq x_{s,\infty }^{*}.$

In terms of a penalized approximation (see, for example, Friedman (1988)),
it is not hard to show that $v(t,x)\in
W_p^{1,2}\left( [0,T)\times (-N,x_{s,\infty }^{*})\right) $ for any $%
-N<x_{s,\infty }^{*},$ $p>1.$ By the maximum principle, (\ref{mono}) and (\ref
{convex}) follow. Then we have
\[
\frac \partial {\partial x}\left[ v-\left( x+1-\mu \right) \right] =\frac
\partial {\partial x}\left[ v-\left( x+1+\lambda \right) \right]
=v_x-1\leqslant 0,
\]
which implies the existence of $x_s(\cdot)$ (or $x_b(\cdot))$ as a
single-value function. The monotonicity of $x_s(\cdot)$ and
$x_b(\cdot)$ is due to
\[
\frac \partial {\partial t}\left[ v-\left( x+1-\mu \right) \right] =\frac
\partial {\partial t}\left[ v-\left( x+1+\lambda \right) \right]
=v_t\leqslant 0.
\]
The proof of ({\ref{sameasdy}) is the same as that of Theorem 4.5 and 4.7 of Dai and Yi (2009).

It remains to show the smoothness of $x_b^{*}(\cdot)$ and $%
x_s^{*}(\cdot). $ To begin with, let us make a transformation and introduce two
lemmas.

Let $z=\log \left( -x\right)$, $u(t,z)=v(t,x)$. Then
\[
\left\{
\begin{array}{l}
\max \left\{ \min \left\{ -u_t-\mathcal{L}_1u,u-\left( -e^z+1-\mu \right)
\right\} ,u-\left( -e^z+1+\lambda \right) \right\} =0, \\
u(T,z)=-e^z+1-\mu ,\hspace{1.0in}\left( t,z\right) \in [0,T)\times \mathscr{Z}, \\
u\left( t,\log \left( -x^{*}_{s,\infty}\right) \right)
=x^{*}_{s,\infty}+1-\mu,
\end{array}
\right.
\]
where $\mathscr{Z}=(\log \left( -x^{*}_{s,\infty}\right) ,+\infty ),$ and
\[
\mathcal{L}_1u=\frac{\sigma ^2}2u_{zz}-\left( \alpha -r+\frac{\sigma ^2}%
2\right) u_z+\left( \alpha -r+\sigma ^2\right) u\ -\sigma ^2\left[ \frac{%
u_z^2}u+2e^z\frac{u_z}u-2e^z\right] .
\]

\begin{lem}
For any $\left( t,z\right) \in [0,T)\times \mathscr{Z}$, we have

(1) $u\leqslant -C_0;$

(2) $u_z=xv_x\geqslant x=-e^z,$ i.e., $u_z+e^z\geqslant 0;$

(3) $u-u_z\geqslant 0.$ Moreover, there is a constant $C_1>0$ such that $%
u-u_z\geqslant C_1.$
\end{lem}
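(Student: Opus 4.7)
Parts (1) and (2) follow from results already in hand. For (1), the transformation $u(t,z)=v(t,-e^z)$ maps $[0,T)\times\mathscr{Z}$ onto $[0,T)\times(-\infty,x^*_{s,\infty})$, on which the bound (\ref{eq1}) gives $v\leqslant -C_0$, hence $u\leqslant -C_0$. For (2), the chain rule yields $u_z=v_x\cdot(-e^z)=xv_x$; combining $v_x\leqslant 1$ from (\ref{convex}) with $x<0$ (which flips the inequality upon multiplication) gives $u_z\geqslant x=-e^z$, i.e.\ $u_z+e^z\geqslant 0$.

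Part (3) is the substantive claim. I would apply a parabolic minimum principle to $\phi(t,x):=v-xv_x=u-u_z$. First I inspect $\phi$ on the parabolic boundary of the no-trade region $\mathcal{NT}=\{(t,x):x^*_b(t)<x<x^*_s(t)\}$ restricted to $x<x^*_{s,\infty}$. At terminal time $v(T,x)=x+1-\mu$ gives $v_x=1$ and $\phi=1-\mu$. On the sell free boundary the smooth-fit condition $v_x=1$ together with $v=x+1-\mu$ yields $\phi=1-\mu$. On the buy free boundary $\phi=1+\lambda$. At the artificial boundary $x=x^*_{s,\infty}$, matching with the stationary solution gives $v_x=1$ and $v=x+1-\mu$, so $\phi=1-\mu$. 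Hence $\phi\geqslant 1-\mu$ on the whole parabolic boundary.

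In the interior of $\mathcal{NT}$, $v$ solves $v_t+\mathcal{L}v=0$, whence $\phi_t=-\mathcal{L}v+x(\mathcal{L}v)_x$. Split $\mathcal{L}v=L_0v+N$ with $L_0v=\tfrac12\sigma^2 x^2 v_{xx}-(\alpha-r)xv_x+(\alpha-r+\sigma^2)v-2\sigma^2 x$ and $N=\sigma^2 x^2 v_x(2-v_x)/v$. A direct calculation shows
\begin{equation*}
x(L_0 v)_x-L_0 v = -\bigl[\tfrac12\sigma^2 x^2\phi_{xx}-(\alpha-r)x\phi_x+(\alpha-r+\sigma^2)\phi\bigr],
\end{equation*}
the $v$-terms cancelling identically. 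Meanwhile $xN_x-N$, after the substitutions $v_x=(v-\phi)/x$ and $v_{xx}=-\phi_x/x$, collapses into
\begin{equation*}
\frac{\sigma^2\phi(v-\phi)(2x-v+\phi)}{v^2}\;-\;\frac{2\sigma^2 x(x-v+\phi)\phi_x}{v},
\end{equation*}
a linear combination of $\phi$ and $\phi_x$ whose coefficients are bounded thanks to $|v|\geqslant C_0>0$ from (1). Thus $\phi$ satisfies a linear parabolic equation
\begin{equation*}
\phi_t+\tfrac12\sigma^2 x^2\phi_{xx}+p(t,x)\phi_x+q(t,x)\phi=0
\end{equation*}
with bounded $p,q$. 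Applying the parabolic minimum principle (after the exponential change $\tilde\phi=\phi\,e^{-K(T-t)}$ to absorb $q$) yields $\phi\geqslant(1-\mu)e^{-KT}=:C_1>0$ throughout $\mathcal{NT}$; in the trade regions $\phi\equiv 1-\mu$ or $1+\lambda$ trivially. This gives both $\phi\geqslant 0$ and $\phi\geqslant C_1$ on all of $[0,T)\times\mathscr{Z}$.

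The main obstacle is the algebraic verification that $\phi$ closes into a \emph{linear} parabolic equation with bounded coefficients. One must track the nonlinear term $N$ through the differentiation and the substitutions for $v_x,v_{xx}$, observing two crucial cancellations: the $v$-dependent terms in $x(L_0v)_x-L_0v$ vanish identically, and the a priori bound $|v|\geqslant C_0>0$ prevents any singularity from the $1/v$-factors in $N$. Without these, the minimum principle would not be directly applicable. A secondary technical point is justifying the boundary value $\phi=1-\mu$ on the artificial boundary $x=x^*_{s,\infty}$ through compatibility with the stationary solution $v_\infty$.
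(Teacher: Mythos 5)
Your proposal is correct and follows essentially the same route as the paper: parts (1)--(2) are read off from the bounds $v\leqslant -C_0$ and $0\leqslant v_x\leqslant 1$, and part (3) is proved by showing that $H=u-u_z=v-xv_x$ satisfies a linear parabolic equation with bounded coefficients in the no-trade region (the nonlinear $1/v$ terms being controlled by $|v|\geqslant C_0$), takes the values $1-\mu$ and $1+\lambda$ on the free boundaries and at $t=T$, and then applying the minimum principle. The only difference is cosmetic: you carry out the cancellation in the original variable $x$, whereas the paper differentiates the equation in $z=\log(-x)$ to get the equation for $w=u_z$ and subtracts.
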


\noindent {\it Proof.} Part (1) and (2) are immediate from
(\ref{eq1}) and (\ref{convex}). Let us prove Part (3). Denote
$w=u_z.$ So,
\begin{eqnarray*}
\frac \partial {\partial z}\left( -u_t-\mathcal{L}_1u\right) =-w_t-\mathcal{L%
}_2w+2e^z\sigma ^2\left( \frac{u_z}u-1\right)
\end{eqnarray*}
where
\[
\mathcal{L}_2w=\frac{\sigma ^2}2w_{zz}-\left( \alpha -r+\frac{\sigma ^2}%
2\right) w_z+\left( \alpha -r+\sigma ^2\right) w-\sigma ^2\left( \frac{%
2\left( u_z+e^z\right) }uw_z-\frac{u_z\left( u_z+2e^z\right) }{u^2}w\right).
\]
We define
\begin{align*}
\text{SR}& =\left\{(t,z) \in [0,T)\times \mathscr{Z}\;\left|\;u=-e^z+1-\mu\right\}\right. ,  \nonumber \\
\text{BR}& =\left\{(t,z) \in [0,T)\times \mathscr{Z}\;\left|\;u=-e^z+1+\lambda\right\} \right. ,\\
\text{NT}& =\left\{(t,z) \in [0,T)\times
\mathscr{Z}\;\left|\;-e^z+1-\mu < u< -e^z+1+\lambda \right\}
.\right. \nonumber
\end{align*}

Notice that we can rewrite
\[
-u_t-\mathcal{L}_1u=-u_t-\mathcal{L}_2u+2e^z\sigma ^2\left( \frac{u_z}%
u-1\right) .
\]
Denote $H=u-u_z.$ Then
\begin{align*}
\ -H_t-\mathcal{L}_2H=0\text{ in NT}   \label{eq3}
\end{align*}
Clearly $H=1-\mu \text{ in SR}\text{ and at }t=T$, and
$H=1+\lambda \text{ in BR}$. Hence, applying the maximum principle
yields $H\geq 0$. Moreover, it is not hard to verify that the
coefficients in $\mathcal{L}_2H$ are bounded. We then infer that
there is a constant $C_1>0,$ such that $H\geqslant C_1.$

\begin{lem}
There is a constant $C_2>0,$ such that $u_t\geqslant -C_2.$
\end{lem}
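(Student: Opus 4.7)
Since $u$ is $t$-independent on both SR and BR (where it coincides with the obstacles $-e^z+1-\mu$ and $-e^z+1+\lambda$), one has $u_t\equiv 0$ on those regions, so it suffices to derive a uniform lower bound on $q:=u_t$ inside the no-trade region NT. The plan is to apply the parabolic maximum principle to $q$, which satisfies a \emph{linear} PDE obtained by differentiating the NT equation $-u_t-\mathcal{L}_1 u=0$ in $t$. Because $u$ a priori enjoys only $W_p^{1,2}$ regularity, I would carry this argument out on the penalized smooth approximation $u^\epsilon$ that was already used to construct $u$, and then transfer the bound to $u$ by passing to the limit $\epsilon\downarrow 0$.

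Differentiating in $t$ produces an equation $-q_t-\widetilde{\mathcal{L}}q=0$ on NT, whose drift involves the quotients $u_z/u$ and $e^z/u$ and whose zeroth-order coefficient involves $(u_z/u)^2$ and $e^z u_z/u^2$. The crucial analytic step is to verify that these rational expressions are \emph{uniformly} bounded on $[0,T)\times\mathscr{Z}$. The preceding lemma supplies $u\le -C_0<0$, so $|u|\ge C_0$ on bounded $z$-sets; meanwhile the universal enclosures $-e^z+1-\mu\le u\le -e^z+1+\lambda$ together with $-e^z\le u_z\le 0$ (the latter from $v_x\in[0,1]$) show that as $z\to+\infty$ both $u$ and $u_z$ are asymptotic to constant multiples of $-e^z$, so the quotients stabilize to finite limits. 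Consequently $\widetilde{\mathcal{L}}$ has uniformly bounded coefficients.

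To determine the boundary data of $q$ on NT, observe that on each lateral free boundary $u$ equals a $t$-independent obstacle; differentiating this identity along the moving boundary and invoking the smooth-pasting relation $u_z=-e^z$ forces $q\equiv 0$ there. At the terminal stratum $t=T$, passing to the limit in the NT equation gives $q(T,z)=-\mathcal{L}_1(-e^z+1-\mu)$, which a direct substitution shows to be a bounded function of $z$ (with the finite limit $-(\alpha-r)(1-\mu)$ as $z\to\infty$). I would then apply the parabolic maximum principle to $\tilde q=e^{\beta(T-t)}q$, with $\beta$ chosen large enough to dominate the positive part of the zeroth-order coefficient of $\widetilde{\mathcal{L}}$, first on a truncated domain $\{z\le Z\}$ and then letting $Z\to\infty$ using that $q$ stabilizes as $z\to+\infty$; this delivers $q\ge -C_2$ throughout NT, and together with $q\equiv 0$ on SR and BR gives the lemma. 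The main obstacle will be precisely the uniform control of the coefficients of $\widetilde{\mathcal{L}}$ across the unbounded $z$-domain: it requires the \emph{simultaneous} use of $u\le -C_0$ (which handles bounded $z$) and the leading-order asymptotics $u\sim -e^z$, $u_z\sim -e^z v_x$ (which handle $z\to\infty$), neither of which alone suffices.
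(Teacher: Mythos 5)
Your proposal is correct and follows essentially the same route as the paper: compute the terminal datum $u_t|_{t=T}=-\mathcal{L}_1(-e^z+1-\mu)\geqslant -(\alpha-r)(1-\mu)$, use $u_t=0$ where $u$ coincides with the ($t$-independent) obstacles, and apply the parabolic minimum principle to the linearized equation $(-\partial_t-\mathcal{L}_2)u_t=0$ in the no-trade region. The paper states this in two lines; your plan merely supplies the supporting details (penalized approximation, uniform boundedness of the coefficients of $\mathcal{L}_2$ via $u\leqslant -C_0$ and the obstacle/gradient bounds, the exponential weight for the zeroth-order term, and truncation in $z$) that the paper leaves implicit.
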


\noindent {\it Proof.} Let $z_s(t)=\log(-x^*_s(t))$ be corresponding the selling boundary. For
$z>z_s(t)$
\begin{eqnarray*}
u_t|_{t=T} &=&-\mathcal{L}_1\left( -e^z+1-\mu \right) =-\left(
\alpha -r\right) \left(
1-\mu \right) -\frac{\left( 1-\mu \right) ^2}{-e^z+1-\mu } \\
&\geqslant &-\left( \alpha -r\right) (1-\mu ).
\end{eqnarray*}
Applying the maximum principle gives the desired result.

\hspace{1.0in}

We are now to prove that both $z_s(\cdot)$ and $z_b(\cdot)$ are
$C^\infty,$ where $z_s(t)=\log(-x^*_s(t))$ and $z_b(t)=\log(-x^*_b(t))$.
Thanks to the bootstrap technique, we only need to show that
they are Lipschitz-continuous. Hence, it suffices to prove the cone property, namely,
for any $\left( t,z_0\right) \in [0,T)\times \mathscr{Z}$, there exists a
constant $C>0$ such that
\begin{eqnarray*}
\left. (T-t)u_t+C\frac \partial {\partial z}\Big( u-\left( -e^z+1-\mu
\right) \Big)\right| _{(t,z_0)} &\geqslant &0, \\
\left. (T-t)u_t+C\frac \partial {\partial z}\Big( u-\left( -e^z+1+\lambda
\right) \Big)\right| _{(t,z_0)} &\geqslant &0,
\end{eqnarray*}
which is equivalent to
\begin{equation}
\left. (T-t)u_t+C\left( u_z+e^z\right) \right| _{(t,z_0)}\geqslant 0.  \label{eq4}
\end{equation}
Now let us prove (\ref{eq4}). We can only focus on the NT region. Note that
\[
\frac \partial {\partial t}\bigg( -u_t-\mathcal{L}_1u\bigg)
=\left( -\frac \partial {\partial t}-\mathcal{L}_2\right) u_t.
\]
It follows
\[
\left( -\frac \partial {\partial t}-\mathcal{L}_2\right) \left[ \left(
T-t\right) u_t\right] =u_t\text{ in NT.}
\]
On the other hand, it is not hard to check
\begin{eqnarray*}
\left( -\frac \partial {\partial t}-\mathcal{L}_2\right) \left(
u_z+e^z\right) &=&\sigma ^2e^z\frac{u-u_z}{u^2}\left( u+u_z+2e^z\right) \\
\ &\geqslant &\sigma ^2e^z\frac{u-u_z}{u^2}\left( u+u_z-2u_z\right) \\
\ &=&\sigma ^2e^z\frac{\left( u-u_z\right) ^2}{u^2}\geqslant \frac{C_1^2\sigma
^2e^z}{u^2},\text{ in NT}
\end{eqnarray*}
where $u-u_z\geqslant C_1$ is used in the last inequality. Thus,
\begin{eqnarray*}
&&\ \ \left( -\frac \partial {\partial t}-\mathcal{L}_2\right) \left[
(T-t)u_t+C\left( u_z+e^z\right) \right] \\
\ &\geqslant &u_t+C\frac{C_1^2\sigma ^2e^z}{u^2}\geqslant -C_2+C\frac{C_1^2\sigma ^2e^z%
}{u^2},\text{ in NT.}
\end{eqnarray*}
Since NT is unbounded, we can follow Soner and Shreve (1991) to
introduce an auxiliary function $\psi (t,z;z_0)=e^{a(T-t)}\left(
z-z_0\right) ^2$ with a constant $a>0.$ We can choose $a$ big
enough so that
\[
\left( -\frac \partial {\partial t}-\mathcal{L}_2\right) \psi \left(
t,z;z_0\right) \geqslant C_3\left( z-z_0\right) ^2-C_4,
\]
where $C_3$ and $C_4$ are positive constants independent of $(t,z)$. It
follows $\ \ $%
\begin{eqnarray*}
&&\left( -\frac \partial {\partial t}-\mathcal{L}_2\right) \left[
(T-t)u_t+C\left( u_z+e^z\right) +\psi \left( t,z;z_0\right) \right] \\
&\geqslant &-C_2+C\frac{C_1^2\sigma ^2e^z}{u^2}+C_3\left( z-z_0\right) ^2-C_4.
\end{eqnarray*}
Then we can choose $r>0$ such that
\[
C_3r^2-C_2-C_4\geqslant 0
\]
and choose $C>0$ big enough such that
\[
C\frac{C_1^2\sigma ^2e^z}{u^2}-C_2-C_4\geqslant 0\text{ for }\left| z-z_0\right|
\leqslant r.
\]
It then follows
\[
\left( -\frac \partial {\partial t}-\mathcal{L}_2\right) \left[
(T-t)u_t+C\left( u_z+e^z\right) +\psi \left( t,z;z_0\right) \right] \geqslant 0,%
\text{ in NT.}
\]
Applying the maximum principle and penalty approximation, we
conclude
\[
(T-t)u_t+C\left( u_z+e^z\right) +\psi \left( t,z;z_0\right) \geqslant 0,\text{ }%
(t,z)\in [0,T)\times \mathscr{Z}.
\]
Letting $z=z_0,$ we get the desired result.

\section{Proof of Proposition \ref{existenceofLagrangeMultiplier}}

\begin{proof}
From
\begin{equation*}
\begin{split}
V(0,x-\ell e^{-rT},y)-(\ell-z)^2
&=\inf\limits_{W\in\mathcal{W}_0^{x-\ell
e^{-rT},y}}\BE[W^2-(\ell-z)^2]=\inf\limits_{W\in\mathcal{W}_0^{x,y}}\BE[(W-\ell)^2-(\ell-z)^2]\\
&=\inf\limits_{W\in\mathcal{W}_0^{x,y}}(\BE[W^2]-2\ell(\BE[W]-z)),
\end{split}
\end{equation*}
it follows that $V(0,x-\ell e^{-rT},y)-(\ell-z)^2$ is concave in
$\ell$. So its maximum attains at point $\ell^*$ which satisfies
\begin{align*}
\frac{\partial}{\partial \ell}\Big( V(0,x-\ell
e^{-rT},y)-(\ell-z)^2\Big)\bigg|_{\ell=\ell^*}=0,
\end{align*}
i.e.,
\begin{align*}
e^{-rT}V_x(0,x-\ell^* e^{-rT},y)+2\ell^*=2z.
\end{align*}
Define
\begin{align*}
f(\ell)\defi e^{-rT}V_x(0,x-\ell e^{-rT},y)+2\ell.
\end{align*}
Then by the convexity of $V(0,x-\ell e^{-rT},y)-(\ell-z)^2$ in
$\ell$, we have that $f$ is increasing. Since $V_x\leqslant 0$, we
have $f(z)\leqslant 2z$. By the monotonicity of $f$, the existence
of $\ell^*$ depends on $\lim\limits_{\ell\to+\infty}f(\ell)$.
\begin{itemize}
\item We first consider the case when $T_0>0$. In this case
$x^*_b(0)\in(-\infty,0)$. If $y\leqslant 0$, then
\begin{align*}
(0,x-\ell e^{-rT} ,y)\in \mathcal{BR},\quad\forall\;\ell\geqslant
z.
\end{align*}
If $y>0$, then
\begin{align*}
(0,x-\ell e^{-rT} ,y)\in \mathcal{BR},\quad\forall\;\ell\geqslant
e^{rT}(x-x^*_b(0)y).
\end{align*}
Therefore,
\begin{equation*}
\begin{split}
\lim_{\ell\to+\infty}f(\ell)
&=\lim_{\ell\to+\infty}\left(e^{-rT}V_x\left(0,x-\ell e^{-rT},y\right)+2\ell\right)\\
&=\lim_{\ell\to+\infty}
\left(2e^{-rT}e^{2\mathfrak{B}(0)}\left(x-\ell e^{-rT}+(1+\lambda)y\right)+2\ell\right)\\
&=\lim_{\ell\to+\infty} 2\left(1-e^{-2rT+2\mathfrak{B}(0)}\right)\ell+2e^{-rT}(x+(1+\lambda)y)\\
&=+\infty,
\end{split}
\end{equation*}
where we have used the fact that $\mathfrak{B}(0)<rT$ when $T_0>0$.
Therefore, for any
\begin{align*}
z\in(e^{rT}x+(1-\mu)e^{rT}y^+-(1+\lambda)e^{rT}y^-,+\infty)
\end{align*}
there exists $\ell^*$ such that
\begin{align}
e^{-rT}V_x(0,x-\ell^* e^{-rT},y)+2\ell^*&=2z,\nonumber\\
V\left(0,x-\ell^* e^{-rT},y\right)-(\ell^*-z)^2&=
\sup\limits_{\ell\in\R}(V(0,x-\ell
e^{-rT},y)-(\ell-z)^2).\label{vellstareq}
\end{align}
Now we prove the uniqueness. For $T_0>0$, we have
\begin{align*}
\mathfrak{A}(0)<rT,\quad
\mathfrak{B}(0)<rT.
\end{align*}
If $(0,x-\ell e^{-rT},y)\in\mathcal{SR}$, then
\begin{align*}
f'(\ell)=-e^{-2rT}V_{xx}(0,x-\ell
e^{-rT},y)+2=-2e^{-2rT+2\mathfrak{A}(0)}+2>0.
\end{align*}
Similarly, if $(0,x-\ell e^{-rT},y)\in\mathcal{BR}$, then
\begin{align*}
f'(\ell)=-e^{-2rT}V_{xx}(0,x-\ell
e^{-rT},y)+2=-2e^{-2rT+2\mathfrak{B}(0)}+2>0.
\end{align*}
By the maximum principle, we have
\begin{align*}
  f'(\ell)>0,\quad \textrm{for }(0,x-\ell e^{-rT},y)\in\mathcal{NT}.
\end{align*}
This implies the uniqueness of $\ell^*$. \item Now, we move to the
case when $T_0=0$. According to Theorem \ref{fesiblility}, we have
\begin{equation*}
\mathcal{D}=
\begin{cases}
(e^{rT}x+(1-\mu)e^{rT}y,e^{rT}x+(1-\mu)e^{\alpha T}y),
&\textrm{ if } y>0,\\
\emptyset, &\textrm{ if }y\leqslant 0.
\end{cases}
\end{equation*}
We only need to consider the case of $y>0$. Note that in this
case,
\begin{align*}
(0,x-\ell e^{-rT},y)\in\mathcal{NT},\quad\forall\;\ell\geqslant
e^{rT}(x-x^*_s(0)y).
\end{align*}
By the homogeneity property, we have
\begin{align*}
  V_x(t,\rho x,\rho y)=\rho V_x(t,x,y),\quad \forall\;(t,x,y,\rho)\in[0,T)\times\R^2\times\R_+.
\end{align*}
So we can make the following transformation in $\mathcal{NT}$,
\begin{align*}
z=-\frac{y}{x}\in \left(0,\tfrac{-1}{x^*_s(0)}\right), \ \ \
\bar{v}(t,z)=-\frac{1}{x}V_x(t,x,y).
\end{align*}
Then
\begin{equation*}
\begin{cases}
\bar{v}_t+\frac{1}{2}\sigma^2z^2\bar{v}_{zz}+(\alpha-r)z\bar{v}_z+2r\bar{v}=0,\ \
(t,z)\in[0,T)\times \left(0,\tfrac{-1}{x^*_s(0)}\right).\\
\bar{v}(T,z)=2(-1+(1-\mu)z).
\end{cases}
\end{equation*}
Therefore
\begin{align*}
\bar{v}(t,0)=-2e^{2r(T-t)}.
\end{align*}
Let $\tilde{v}(t,z)=\bar{v}_z(t,z),$ which satisfies
\begin{equation*}
\begin{cases}
\tilde{v}_t+\frac{1}{2}\sigma^2z^2\tilde{v}_{zz}
+(\alpha-r+\sigma^2)z\tilde{v}_z+(\alpha+r)\tilde{v}=0,\ \
\hspace{1.5ex} (t,z)\in[0,T)\times
\left(0,\tfrac{-1}{x^*_s(0)}\right),\\
\tilde{v}(T,z)=2(1-\mu).
\end{cases}
\end{equation*}
Therefore
\begin{align*}
\tilde{v}(t,0)=2(1-\mu)e^{(\alpha+r)(T-t)}.
\end{align*}
It follows
\begin{equation*}
\begin{split}
V_x(0,x,y)&=-x\bar{v}\left(0,-\tfrac{y}{x}\right)
=-x\left(\bar{v}(0,0)-\tfrac{y}{x}\bar{v}_z(0,0)+O\left(\tfrac{y^2}{x^2}\right)\right)\\
&=2xe^{2rT}+2y(1-\mu)e^{(\alpha+r)T}+O\left(\tfrac{y^2}{|x|}\right).
\end{split}
\end{equation*}
So
\begin{equation*}
\begin{split}
\lim\limits_{\ell\to +\infty}f(\ell)
&=\lim\limits_{\ell\to +\infty}\left(e^{-rT}V_x(0,x-\ell e^{-rT},y)+2\ell\right)\\
&=\lim\limits_{\ell\to +\infty} \Bigg(e^{-rT}\left(2(x-\ell
e^{-rT})e^{2rT}+2y(1-\mu)e^{(\alpha+r)T}\right)
+O\left(\frac{y^2}{\left|x-\ell e^{-rT}\right|}\right)+2\ell\Bigg)\\
&=2(e^{rT}x+(1-\mu)e^{\alpha T}y).
\end{split}
\end{equation*}
The monotonicity of $f$ ensures the existence of $\ell^*$. The
proof for the uniqueness is similar as above.
\end{itemize}
The proof is complete.
\end{proof}

\end{document}